\newtheorem{defn}{Definition}
\newtheorem{prop}{Proposition}
\newtheorem{thm}{Theorem}
\newtheorem{lemma}{Lemma}
\newtheorem{cor}{Corollary}
\newtheorem{rem}{Remark}
\newtheorem{assum}{Assumption}
\DeclareMathOperator{\tr}{tr}
\DeclareMathOperator{\diag}{diag}
\DeclareMathOperator{\bdiag}{blockdiag}
\DeclareMathOperator{\circulant}{circ}
\DeclareMathOperator{\R}{Re}
\DeclareMathOperator{\I}{Im}
\DeclareMathOperator{\spn}{span}
\newcommand{\blue}[1]{{\color{blue}#1}}
\DeclareMathOperator{\spec}{spec}
\newcommand{\defaultColor}[1]{{\color{defaultColor}#1}}
\begin{document}
%
\title{Performance of Single and Double-Integrator Networks over Directed Graphs}
%
%
%

\author{H.~Giray~Oral,\IEEEmembership{}
        Enrique~Mallada\IEEEmembership{}
        and~Dennice~F.~Gayme~\IEEEmembership{}
\thanks{H. G. Oral and D. F. Gayme are with the Dept. of Mechanical Engineering, E. Mallada is with the Dept. of Electrical and Computer Engineering at The Johns Hopkins University, Baltimore, MD, USA, 21218,
        {\texttt{giray@jhu.edu, dennice@jhu.edu, mallada@jhu.edu}}. Partial support by the NSF (ECCS 1230788, CNS 1544771, EPCN 1711188) and ARO W911NF-17-1-0092 is gratefully acknowledged.}
}

%
%

\markboth{}%
{Shell \MakeLowercase{\textit{et al.}}: }
%



\maketitle


\begin{abstract}

This paper 
\defaultColor{provides a framework to evaluate}
the performance of single and double integrator networks over arbitrary directed graphs. Adopting vehicular network terminology, we consider quadratic performance metrics defined by the \mbox{$\mathcal{L}_2$-norm} of position and velocity based response \defaultColor{functions given} 
\defaultColor{impulsive inputs to} 
each vehicle. We 
\defaultColor{exploit the spectral properties of weighted graph Laplacians and output performance matrices to derive}
a novel method of computing the closed-form solutions for this general class of \defaultColor{performance} metrics, 
\defaultColor{which include $\mathcal{H}_2$-norm based quantities as special cases.}
We then explore the effect of the interplay between network properties  (such as edge directionality and connectivity) and the control strategy on the overall network performance.
More precisely, for 
systems whose interconnection is described by graphs with
\defaultColor{normal}
Laplacian $L$, we characterize the role of directionality by comparing \defaultColor{their} performance with \defaultColor{that of} their undirected counterparts, represented by the Hermitian part of $L$. We show that, for single-integrator networks, directed and undirected graphs perform identically. However, for double-integrator networks, graph directionality -expressed by 
\defaultColor{the eigenvalues of $L$}
with nonzero imaginary part- can significantly degrade performance. Interestingly, in many cases, well-designed feedback can also exploit directionality to mitigate degradation or even
improve the performance to 
exceed that
of the undirected case.
\defaultColor{Finally we focus on a system coherence metric
-aggregate deviation from the state average-}
to investigate the relationship between performance and degree of connectivity, leading to somewhat surprising findings. For 
\defaultColor{example,} 
increasing the number of neighbors on a \mbox{$\omega$-nearest} neighbor directed graph does not necessarily improve performance. 
\defaultColor{Similarly,}
we demonstrate equivalence in performance between all-to-one and all-to-all communication graphs.

\end{abstract}

\begin{IEEEkeywords}
$\mathcal{L}_2$, $\mathcal{H}_2$ norm, directed graph, performance.
\end{IEEEkeywords}

%
\IEEEpeerreviewmaketitle

\section{Introduction}
%
%
%
%

\IEEEPARstart{C}{onditions} for reaching consensus -achieving a synchronized steady state- \defaultColor{have been} widely studied for networked dynamical systems, \defaultColor{see e.g.} \cite{RenAtkins2005, YuChen2010, ZhuTian2009}.
A related and equally important question is how the system performs 
in its effort to restore and/or maintain synchrony in the face of disturbances. This synchronization performance can be interpreted as a 
measure of
efficiency or robustness, and has been evaluated, for example, in terms of the lack of coherence or the degree of disorder in first order (single-integrator) \cite{SiamiMotee2016, YoungLeonard2010, BamiehJovanovic2012, DezfulianMotee2018, SarkarDahleh2018, MaElia2015, LinJovanovic2012_2} and second order (double-integrator) \cite{BamiehJovanovic2012, GrunbergGayme2018, teglingbamieh2019, PatesRantzer2017, HaoBarooah2013, LinJovanovic2012, OralMalladaGayme2017 
} 
consensus networks. 
Robustness metrics in power systems (e.g. transient real power losses or phase/frequency incoherency) have been assessed in transmission and inverter-based networks \cite{BamiehGayme2013, TeglingBamiehGayme2015, 
TeglingGayme2015,  m2016cdc, jpm2017cdc, wjzmdd2018tac,  wdj2016, sppmd2016, oralgaymeecc2019}.
%
%
Controllers that have been proposed to improve these types of performance include
dynamic feedback
 \cite{m2016cdc,jpm2017cdc, wjzmdd2018tac, 
teglingbamieh2019
} 
and optimization based approaches \cite{wdj2016, sppmd2016}. 

%
A widely utilized approach to quantify performance in systems subjected to distributed disturbances is to select a system output such that the desired metric is defined through the input-output $\mathcal{H}_2$ norm of the system. Certain $\mathcal{H}_2$ based performance metrics for systems whose underlying graphs are undirected can be obtained in closed form, e.g. \cite{BamiehJovanovic2012, GrunbergGayme2018, TeglingBamiehGayme2015, 
sppmd2016, jpm2017cdc}. 
\defaultColor{Related} 
performance metrics have also been evaluated in terms of the effective resistance of undirected graphs \cite{GrunbergGayme2018, TylooJacquod2019, pattersonYiZhang2019}, which allows for efficient computational approaches \cite{EllensKooij2011}. 
The notion of effective resistance has been extended to directed graphs \cite{YoungLeonard2016, YoungLeonard2016_2},
however 
its application to performance analysis remains an open question.

Much 
of the existing literature on 
\defaultColor{evaluating the performance in systems with}
directed interconnection topologies considers restrictive scenarios on the graph topology 
(e.g. 
spatially invariant 
\cite{
oralgaymeacc2019} 
and 
nearest-neighbor type interactions \cite{LinJovanovic2012}; or systems with normal Laplacian matrices
\cite{YoungLeonard2010, DezfulianMotee2018, OralMalladaGayme2017
})  
with closed-form solutions obtained \defaultColor{only} for specific metrics (full state \cite{SiamiMotee2017}, degrees of disorder \cite{BamiehJovanovic2012}, etc.). 
\defaultColor{Closed-form expressions for more}
general quadratic performance metrics of double-integrator networks over undirected graphs 
formulated in terms of the $\mathcal{L}_2$ norm of the system output 
have \defaultColor{also} been obtained \cite{PaganiniMallada2017, PaganiniMallada2019, ColettaJacquod2019}.
%
\defaultColor{An extension to directed graphs with diagonalizable Laplacian matrices 
was provided for $\mathcal{H}_2$ based 
metrics \cite{jmg2018},
however a 
precise
understanding of the role that the underlying network architecture plays is still lacking.}
%
%
\defaultColor{Although the results described above represent progress into a wide range of special cases,} 
a unified treatment of general performance metrics over \emph{arbitrary} directed graphs has yet to be developed. This paper aims to lay the foundations for such 
a framework
via the following contributions:
%
%

%
\begin{enumerate}[wide, labelwidth=!, labelindent=0pt]
\item We provide a novel unifying approach 
to compute
a general class of quadratic performance metrics for single and double integrator systems defined over directed graphs that have at least one globally reachable node.
\item 
We use 
the closed-form solutions resulting from this 
approach to demonstrate that overall network performance is determined by an interaction between network topological characteristics (e.g. edge directionality and connectivity) and the control strategy. In particular, we show that
\begin{enumerate}
\item The effect of edge directionality on performance can be characterized by the respective spectral structures of 
\defaultColor{Laplacian}
and output \defaultColor{matrices}, 
which needs to be accounted for in judicious feedback design.
\item While performance is sensitive to the degree of connectivity in directed graphs, \defaultColor{the} relationship is not monotonic.
\end{enumerate} 
\end{enumerate}
%

We develop the framework outlined above by formulating
the performance metrics through the $\mathcal{L}_2$ norm of the system response due to distributed impulse disturbances.   Adopting the terminology from vehicular networks, the metrics are 
defined in terms of either the position or the velocity states of agents. Our novel method of computing these metrics in closed-form stems from exploiting the spectral properties of weighted graph Laplacians and output performance matrices. These newly derived closed-form expressions pave the way for our analytical findings. 
%
%
By focusing on 
the subclass of directed graphs emitting diagonalizable Laplacians, we first show that the closed-form solutions for the performance metrics simplify for this family of graphs, allowing for the investigation of the interplay between the network topology and control strategy.



%
Using
systems with normal Laplacian matrices
as an example,
we present a comparative analysis between 
directed graphs and their undirected counterparts represented by the Hermitian part of the graph Laplacian. 
In this setting, we show that directed graphs and their undirected counterparts provide identical performance for single integrator networks. In the case of double-integrator networks, 
we demonstrate that 
the presence of observable Laplacian eigenvalues with nonzero imaginary part (i.e. \defaultColor{the} observability of modes associated with directed paths) can significantly degrade both position and velocity based performance compared to the undirected topology. Nevertheless, this degradation can be eliminated for velocity-based metrics using absolute position feedback. On the other hand, for the case of position-based metrics a proper combination of relative position and velocity feedback can, not only 
\defaultColor{mitigate} 
this degradation, but also 
%
lead to improved performance over
\defaultColor{systems with} the undirected topology. 


We 
\defaultColor{then}
investigate the role of the degree of connectivity in system performance.   
We first focus on the class of systems that we term
$\omega$-nearest neighbor networks,
which have a 
cyclic
and directed communication structure with each agent admitting uniformly weighted uni-directional state measurements 
from $\omega$
consecutive neighbors. 
For the special case of the metric quantifying the aggregate state deviation from the average,
we show that performance does not monotonically improve by increasing 
$\omega$.
We also 
show an equivalence between 
uniformly weighted, directed
\mbox{all-to-one} (imploding star) and all-to-all  networks
for the same performance metric. 
%


The remainder of the paper is organized as follows. In Section \ref{sysModel}, the system models and the performance metrics are introduced. In Section \ref{BlockDiag}, 
we block-diagonalize the closed-loop dynamics and discuss the stability of the input-output system, facilitating the analysis in the following sections.
In Section \ref{ArbitraryL}, the closed-form solution for the general class of output $\mathcal{L}_2$ norm based performance metrics is provided for both single and double-integrator networks over arbitrary directed graphs that have at least one globally reachable node. In Section \ref{diagL}, we show that the performance computation simplifies for the case of the diagonalizable weighted graph Laplacian matrices. In Section \ref{NormalL}, the role of communication directionality is studied through systems with normal graph Laplacians. 
In Section \ref{CentDistFeed}, 
\mbox{all-to-one} and $\omega$-nearest neighbor networks
are analyzed. Section \ref{conclusion} concludes the paper.


\vspace{-4pt}

\section{System Models and Performance Metrics} \label{sysModel}

\subsection{Single and Double-Integrator Networks}

Consider $n$ dynamical systems that communicate over a weighted digraph $\mathcal{G} = \{\mathcal{N}, \mathcal{E}, \mathcal{W}\}$ that have at least one globally reachable node. Here, $\mathcal{N} = \{1,...,n\}$ is the set of nodes and $\mathcal{E} = \{(i,j) \mid i,j \in \mathcal{N}, \, i \neq j\} $ is the set of edges with weights $\mathcal{W} = \{ b_{ij} > 0 \mid (i,j) \in \mathcal{E} \}$. 
In the following $b_{ij} = 0$ if and only if $(i,j) \notin \mathcal{E}$.

We consider two types of nodal dynamics.  Single integrator systems of the form
\begin{equation} \label{firstorder} \nonumber
\dot{x}_i = -\sum_{j=1}^{n}b_{ij}(x_i - x_j) + w_i,
\end{equation}
at each $i \in \mathcal{N}$, where $w_i$ denotes the disturbance to the $i^{th}$ agent. This results in the well-known consensus network
\begin{equation} \label{firstmatrix}
\bold{\dot{x}} = -L \bold{x} + \bold{w},
\end{equation}
where $L$ denotes the weighted graph Laplacian matrix given by $[L]_{ii} = \sum_{j=1}^{n}b_{ij}$, and $[L]_{ij} = -b_{ij}$ if $i \neq j$, $\forall i, j \in \mathcal{N}$.
The second type of system is governed by double-integrator dynamics of the form
\begin{equation} \label{secondorder} \nonumber
\ddot{x}_i + k_d \dot{x}_i + k_p x_i = -u_i + w_i,
\end{equation}
where $u_i =  \gamma_p \sum_{j=1}^{n}b_{ij}(x_i - x_j) + \gamma_d \sum_{j=1}^{n}b_{ij}(\dot{x}_i - \dot{x}_j)$ $\forall i \in \mathcal{N}$. Here, $k_p, k_d, \gamma_p, \gamma_d \geq 0$, and $w_i$ denotes the disturbance to the $i^{th}$ system. Defining $\bold{v} := \bold{\dot{x}}$, 
the
double-integrator
dynamics can be expressed
in matrix form 
as
%
\begin{equation} \label{secondmatrix}
\begin{bmatrix}
\bold{\dot{x}} \\ \bold{\dot{v}}
\end{bmatrix} = 
\begin{bmatrix}
0 & I \\ 
-k_p I - \gamma_p L & -k_d I - \gamma_d L
\end{bmatrix}
\begin{bmatrix}
\bold{x} \\ \bold{v}
\end{bmatrix} + 
\begin{bmatrix}
0 \\ I
\end{bmatrix} \bold{w}.
\end{equation}
%


A necessary condition for \eqref{secondmatrix} to reach consensus without disturbance ($\bold{w} = 0$) is that at least one of $(k_p,\gamma_p)$ and at least one of $(k_d,\gamma_d)$ are non-zero \cite[Lemma 3]{OralMalladaGayme2017}. 
\defaultColor{To ensure that this condition is met, 
we impose 
the following 
assumption
throughout the paper.}
\begin{assum} \label{assum_gain}
System \eqref{secondmatrix} has feedback in both state variables (position and velocity), i.e. at least one of $(k_p,\gamma_p)$ and at least one of $(k_d,\gamma_d)$ are non-zero.
\end{assum}
%

\subsection{Performance Metrics}

Performance metrics 
that are
quadratic in the state 
variables
are widely used in control synthesis problems, especially paired with $\mathcal{H}_2$ or $\mathcal{H}_\infty$ criteria. In this work 
we focus
on the analysis of such metrics through a more general output norm based approach in order to gain 
insight into
how directed communication 
affects
performance.

For $C \in \mathbb{R}^{q \times n}$, the performance output 
\begin{equation} \label{posOutput}
\bold{y} = C \bold {x}
\end{equation}
will be used to quantify the performance 
of the single-integrator network \eqref{firstmatrix} and the double-integrator network \eqref{secondmatrix} for metrics
related to the position state $\bold{x}$. 
For the double-integrator network \eqref{secondmatrix}, 
the performance output
\begin{equation} \label{velOutput}
\bold{y} = C \bold {v},
\end{equation}
which quantifies performance metrics related to the velocity state $\bold{v}$, will also be considered.

We are interested in performance metrics of the form
\begin{equation} \label{perf_x}
P = \lVert \bold{y} \rVert_{\mathcal{L}_2}^2 = \int_{0}^{\infty} \bold{y}(t)^* \bold{y}(t) dt
\end{equation}
%
%
%
for an impulse input 
\begin{equation} \label{distDelta}
\bold{w}(t) = \bold{w}_0 \delta (t)
\end{equation}
with an arbitrary direction vector $\bold{w}_0 \in \mathbb{R}^{n}$. Similar metrics appear in \cite{PaganiniMallada2017} for networks over undirected graphs. 
%
Denoting the impulse response function from $\bold{w}(t)$ to $\bold{y}(t)$ by $T (t)$, the performance output can be written as
%
%

%
\begin{equation} \label{posOut2}
\bold{y} (t) =
\int_{0}^{t} T (t-\tau) \bold{w}(\tau) d \tau.
\end{equation}
%
%
%
Substitution of \eqref{distDelta} and \eqref{posOut2} into \eqref{perf_x} gives
\begin{equation} \label{perf_x_2}
P = \int_{0}^{\infty} \bold{w}_0^* T (t)^* T (t) \bold{w}_0 dt.
\end{equation}
%
%
%
Therefore, \eqref{perf_x_2} is finite if and only if 
$T (t)$ 
is input-output (IO) stable. We will later discuss conditions that guarantee the IO stability of $T (t)$. 

We now show that 
%
%
for a special case of the impulse input \eqref{distDelta}, the performance metric \eqref{perf_x_2} can be computed using the $\mathcal{H}_2$ norm of $T(t)$.
Although this connection is standard in the literature \cite{BamiehGayme2013}, for completeness we provide a short proof below. 
%
%
This relationship will be used in the upcoming sections.
%

\begin{prop}
\label{Prop_H2L2}
Consider a general MIMO system $G(t)$ from $\bold{w}$ to $\bold{y}$ 
. Assume
a random impulse input \eqref{distDelta}
with $E \left[\bold{w}_0 \bold{w}_0^* \right] = I$ and zero initial condition. Then
$
\lVert G \rVert_{\mathcal{H}_2}^2
=
E
\left[  
\lVert \bold{y} \rVert_{\mathcal{L}_2}^2
\right].
$ 
\end{prop}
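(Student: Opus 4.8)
The plan is to reduce both sides of the claimed identity to the same time-domain integral of $\tr(G(t)^* G(t))$, exploiting the sifting property of the impulse together with the linearity of the expectation. First I would specialize the convolution representation \eqref{posOut2} to the random impulse \eqref{distDelta}: since $\bold{w}(t) = \bold{w}_0 \delta(t)$ and the initial condition is zero, the sifting property collapses the convolution to $\bold{y}(t) = G(t)\bold{w}_0$. Substituting this into the output $\mathcal{L}_2$ norm gives $\lVert \bold{y} \rVert_{\mathcal{L}_2}^2 = \int_0^\infty \bold{w}_0^* G(t)^* G(t)\bold{w}_0 \, dt$, which is exactly the scalar quantity already recorded in \eqref{perf_x_2}.

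Next I would take the expectation over the random direction $\bold{w}_0$ and interchange it with the time integral, so that the task reduces to evaluating $E\!\left[\bold{w}_0^* G(t)^* G(t) \bold{w}_0\right]$ pointwise in $t$. The key algebraic step is the standard trace trick: because this quadratic form is a scalar it equals its own trace, and cyclic invariance of the trace yields $\bold{w}_0^* G(t)^* G(t) \bold{w}_0 = \tr\!\big(G(t)^* G(t)\, \bold{w}_0 \bold{w}_0^*\big)$. Taking expectations and inserting the hypothesis $E[\bold{w}_0 \bold{w}_0^*] = I$ then gives $E\!\left[\bold{w}_0^* G(t)^* G(t) \bold{w}_0\right] = \tr(G(t)^* G(t))$.

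Finally I would identify the resulting integral $\int_0^\infty \tr(G(t)^* G(t))\,dt$ with $\lVert G \rVert_{\mathcal{H}_2}^2$ via the time-domain (Parseval) characterization of the $\mathcal{H}_2$ norm, completing the chain of equalities. The one point requiring care -the main obstacle- is justifying the interchange of expectation and integration; this is legitimate precisely when $T(t)$ (here $G(t)$) is IO stable, so that $\int_0^\infty \tr(G(t)^* G(t))\,dt < \infty$ and Fubini's theorem applies with every integral above converging. This is exactly the finiteness condition already noted after \eqref{perf_x_2}, so under the standing IO-stability assumption the argument is rigorous and amounts to bookkeeping once the trace identity is in place.
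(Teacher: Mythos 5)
Your proof is correct and follows essentially the same route as the paper's: substitute the impulse to get $\bold{y}(t)=G(t)\bold{w}_0$, apply the trace trick with $E[\bold{w}_0\bold{w}_0^*]=I$, and identify the resulting integral of $\tr(G(t)^*G(t))$ with the $\mathcal{H}_2$ norm. The only cosmetic difference is that the paper writes $G(t)=Ce^{At}B$ explicitly, and your added remark on Fubini and IO stability is a welcome bit of rigor the paper leaves implicit.
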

%
\begin{proof}
\defaultColor{
Assuming 
zero initial condition,}
the output is given by
$
\bold{y}(t) = C e^{At} B \bold{w}_0.
$
Then
\begin{align*}
E 
\left[  
\lVert \bold{y}(t) \rVert_{\mathcal{L}_2}^2
\right] \
&= \ E
\left[ 
\mathrm{tr} \!
\int_0^{\infty} \! \!
C e^{At} B \bold{w}_0
\bold{w}_0^* B^* e^{A^* t} C^*
dt
\right]
\\
\ &= \ 
\mathrm{tr} \!
\int_0^{\infty} \! \!
C e^{At} B B^* e^{A^* t} C^*
dt 
\ = \ \lVert G \rVert_{\mathcal{H}_2}^2.
\ 
\! \! \! \! \!
\mbox{\qedhere}
\end{align*}
%
\end{proof}
%
%

%
%

%

\section{Block-diagonalization of the Closed-loop Dynamics} \label{BlockDiag}
\begin{figure}[t!]
\centering
\includegraphics[width=2.5in]{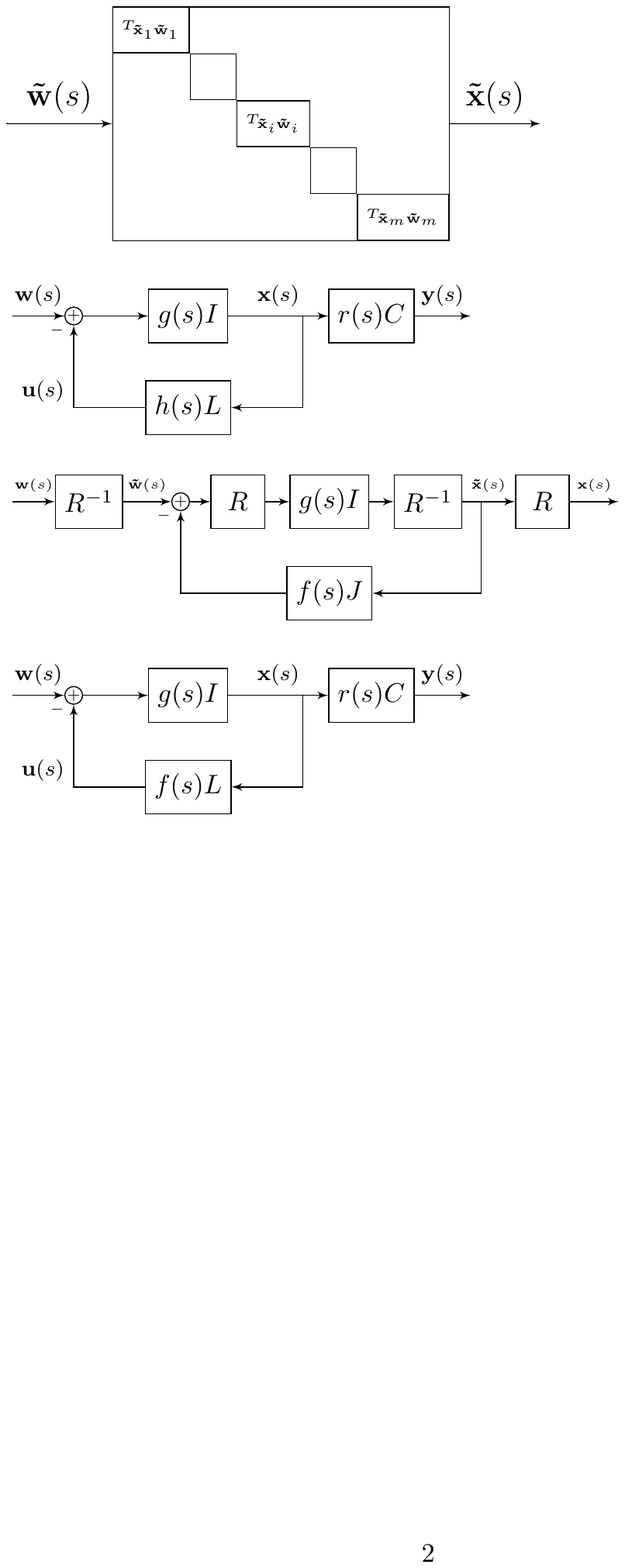}
\vspace{-0.3cm}
\caption{Block diagram of the closed-loop system $T (s)$ from the disturbance input $\bold{w} (s)$ to the performance output $\bold{y} (s)$ and the closed-loop system $H_{\bold{x}\bold{w}}(s)$ from $\bold{w} (s)$ to the position state $\bold{x} (s)$. The performance output $\bold{y} (s)$ is given by \eqref{posOutput} if $r(s) = 1$ and by \eqref{velOutput} if $r(s) = s$. }
\label{block1}
\vspace{-0.5cm}
\end{figure}

In this section, 
we express the dynamics given in \eqref{firstmatrix} and \eqref{secondmatrix} in the frequency domain using an approach based on \cite{PaganiniMallada2017}. The framework, 
denoted
in 
Figure \ref{block1}, describes identical systems $g(s)$ receiving feedback that depends on an arbitrary transfer function $f(s)$ and the weighted graph Laplacian $L$ emitted by the network interconnection. 
Assuming 
that $\bold{x}(0) = \bold{v}(0) = 0$
(we consider perturbations to the equilibrium),
the closed-loop system from the input $\bold{w}$ to the position state $\bold{x}$ is given by

%
%
%
\begin{equation*} \label{laplace1}
\left[ (g(s)^{-1} I + f(s) L \right] \bold{x}(s) = \bold{w}(s),
\end{equation*}
%
%
%
which leads to
\begin{align} \label{laplace2}
\bold{x}(s) = \left[ (I + g(s) f(s) L \right]^{-1} g(s) \bold{w}(s) 
=: H_{\bold{x}\bold{w}}(s) \bold{w}(s), 
\end{align}
where $H_{\bold{x}\bold{w}}(s)$ denotes the transfer function from the input $\bold{w}$ to the position state $\bold{x}$. 

$L$ can be decomposed as $L = R J R^{-1}$, where $R \in \mathbb{C}^{n \times n}$ is invertible and $J \in \mathbb{C}^{n \times n}$ is in Jordan Canonical Form (JCF). This decomposition transforms \eqref{laplace2} into
\begin{equation} \label{laplace3}
\bold{x}(s) = R \left[ (I + g(s) f(s) J \right]^{-1} g(s) R^{-1} \bold{w}(s), \nonumber
\end{equation}
%
as 
shown by the block diagram in Figure \ref{block2}. 
Defining
$\bold{\tilde{x}} := R^{-1} \bold{x}$ and $\bold{\tilde{w}} := R^{-1} \bold{w}$, the transfer function from $\bold{\tilde{w}}$ to $\bold{\tilde{x}}$ is
\begin{equation} \label{T_tilde_pos}
H_{\bold{\tilde{x}} \bold{\tilde{w}}} (s) = \left[ (I + g(s) f(s) J \right]^{-1} g(s),
\end{equation}
where the following relationship holds
\begin{equation} \label{TxwTrans}
H_{\bold{x} \bold{w}} = R H_{\bold{\tilde{x}} \bold{\tilde{w}}} R^{-1}.
\end{equation}
\begin{figure}[t!]
\centering
\includegraphics[width=3.3in]{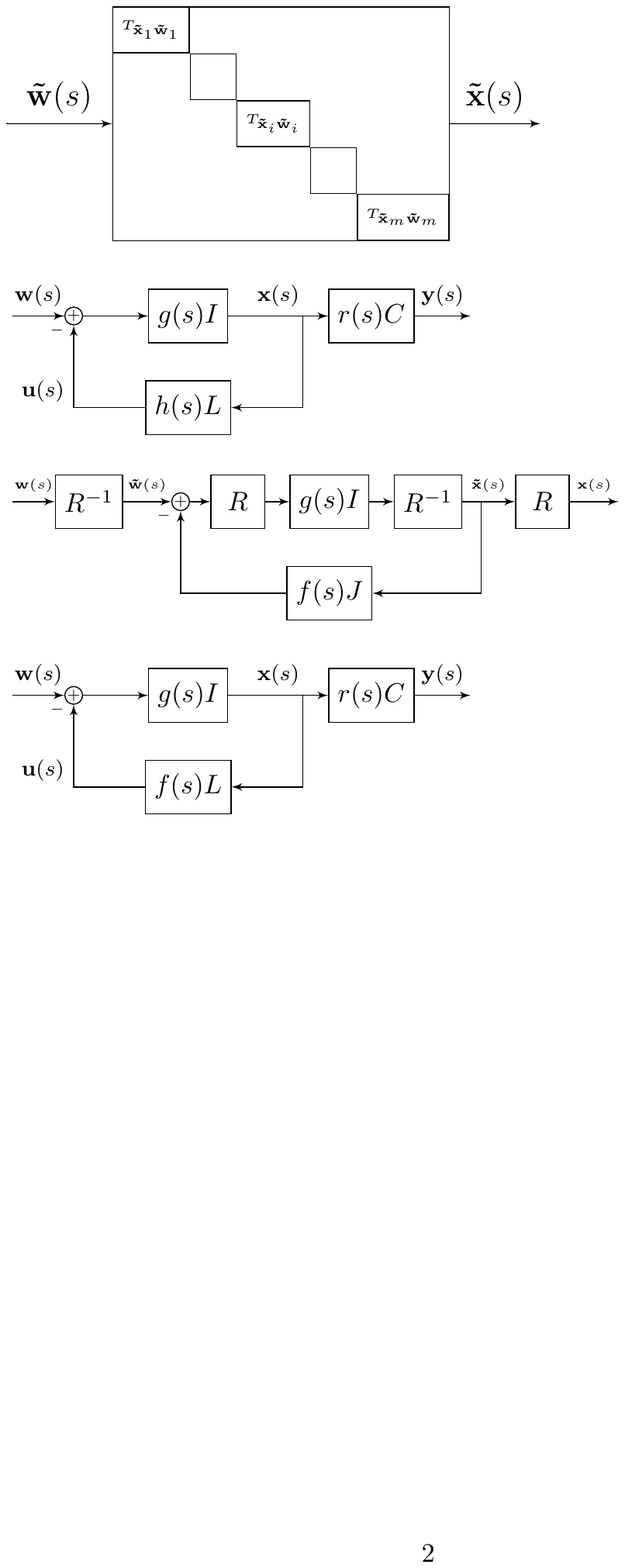}
\vspace{-0.3cm}
\caption{Application of a change of basis given by the Jordan decomposition $L = R J R^{-1}$ to the closed-loop system $H_{\bold{x}\bold{w}}(s)$. The feedback loop gives the closed-loop system $H_{\bold{\tilde{x}}\bold{\tilde{w}}}(s)$.}
\label{block2}
\vspace{-0.5cm}
\end{figure}
$J$ is composed of Jordan blocks $J_k$ associated with the eigenvalues $\lambda_k \in \mathbb{C}$ of $L$ for $k = 1, \dots, m$:
\begin{equation} \label{jordanL}
J = \bdiag{\left( J_k \right)_{1 \leq k \leq m}},
\end{equation}
where $J_k \in \mathbb{C}^{n_k \times n_k}$ and $\sum_{k=1}^{m} n_k = n$. Since $L$ is a Laplacian matrix, $L \bold{1} = \bold{0}$ with $\bold{1}$ denoting the vector of all ones therefore $J_1 = \lambda_1 = 0$. Also 
$\R \left[ \lambda_k \right] > 0$
for $k = 2, \dots, m$ due to the fact that $\mathcal{G}$ has a globally reachable node 
\cite[Theorem 7.4]{bullo2016lectures}.
So 
\eqref{T_tilde_pos} 
can be written as
%
\begin{equation} \label{block_diag}
%
H_{\bold{\tilde{x}} \bold{\tilde{w}}} (s) = \bdiag{\left( H_{\bold{\tilde{x}}_k \bold{\tilde{w}}_k} (s) \right)_{1 \leq k \leq m}}, 
\end{equation}
where 
\begin{equation} \label{T_tilde_i}
H_{\bold{\tilde{x}}_k \bold{\tilde{w}}_k}(s) = \left[ (I + g(s) f(s) J_k \right]^{-1} g(s).
\end{equation}
Here, the vectors $\bold{\tilde{x}}_k = [ \tilde{x}_{d_k + 1}, \dots, \tilde{x}_{d_k+n_k}]^\intercal$ and $\bold{\tilde{w}}_k ~=~ [ \tilde{w}_{d_k +1}, \dots, \tilde{w}_{d_k+n_k} ]^\intercal$ respectively denote the position state and the input 
to
the associated subsystem, with $d_1 = 0$ and $d_k = \sum_{i=1}^{k-1} n_i$ for $k=2,\dots,m$. An equivalent representation of the transfer function in
\eqref{T_tilde_i} is given by the block diagram in Figure \ref{fig3}.
\begin{figure}[b] 
\vspace{-0.5cm}
\centering
\includegraphics[width=1.9in]{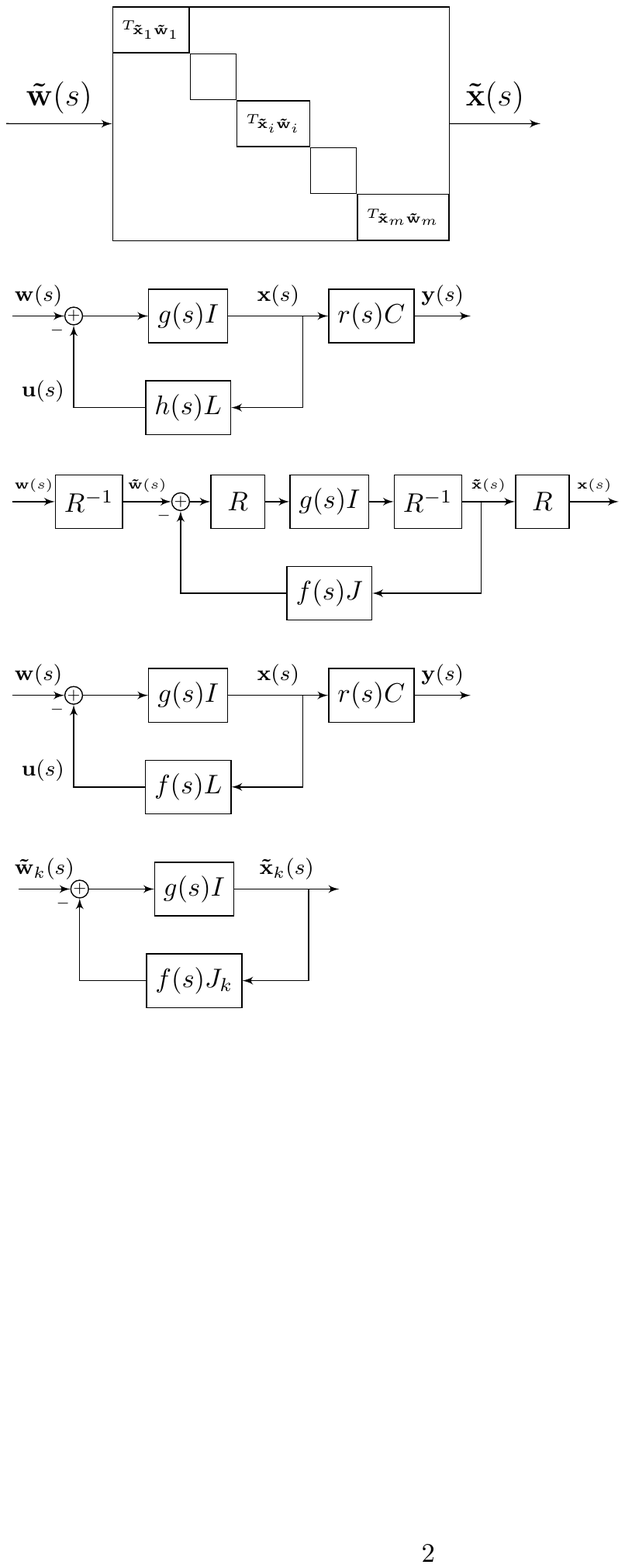}
\vspace{-0.3cm}
\caption{Block diagram of each subsystem $H_{\bold{\tilde{x}}_k \bold{\tilde{w}}_k}$ for $k=1,\dots,m$.}
\label{fig3}
\end{figure}
The following lemma describes the form of 
the
transfer function in \eqref{T_tilde_i}
which will be 
used
to compute the performance metric \eqref{perf_x_2} 
in what follows.
\begin{lemma} \label{lemma1}
$H_{\bold{\tilde{x}}_k \bold{\tilde{w}}_k}(s)$ in \eqref{T_tilde_i} is an upper triangular Toeplitz matrix given by
\begin{equation} \nonumber
H_{\bold{\tilde{x}}_k \bold{\tilde{w}}_k}(s) = \frac{1}{ f(s) }
\begin{bmatrix}
h_k(s) & \dots & (-1)^{n_k - 1} h_k(s)^{n_k} \\
& \ddots & \vdots \\
& & h_k(s)
\end{bmatrix},
\end{equation}
where 
%
$
h_k(s) = \frac{g(s) f(s)}{1 + \lambda_k g(s) f(s) }
$.
%
\end{lemma}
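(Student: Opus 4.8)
The plan is to exploit the additive decomposition of each Jordan block into its scalar and nilpotent parts, and then replace the matrix inversion in \eqref{T_tilde_i} by a \emph{terminating} Neumann series, which is available precisely because the nilpotent part has finite order $n_k$.

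First I would write the $k$-th Jordan block as $J_k = \lambda_k I + N$, where $N \in \mathbb{C}^{n_k \times n_k}$ is the shift matrix carrying ones on the first superdiagonal and zeros elsewhere, so that $N^{n_k} = 0$ and, more generally, $N^{j}$ has ones exactly on the $j$-th superdiagonal. Substituting this into \eqref{T_tilde_i} collapses the argument of the inverse into the two-term form
\begin{equation} \nonumber
I + g(s) f(s) J_k = a(s)\, I + b(s)\, N, \qquad a(s) := 1 + \lambda_k g(s) f(s), \quad b(s) := g(s) f(s).
\end{equation}

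Next, since $N$ is nilpotent of order $n_k$, the inverse admits the finite expansion
\begin{equation} \nonumber
\left( a I + b N \right)^{-1} = \frac{1}{a}\left( I + \frac{b}{a} N \right)^{-1} = \sum_{j=0}^{n_k - 1} \frac{(-b)^j}{a^{\,j+1}}\, N^j,
\end{equation}
understood as an identity over the field of rational functions in $s$ (the triangular matrix $aI+bN$ has determinant $a^{n_k}$, so the inverse exists wherever $a(s) \not\equiv 0$, which is the generic situation given $\R[\lambda_k] \geq 0$). I would then multiply by $g(s)$ and read off the coefficient of $N^j$. The one algebraic identity that does all the work is $g(s)/a(s) = h_k(s)/f(s)$, obtained directly from the definition $h_k(s) = b/a = g(s) f(s)/(1 + \lambda_k g(s) f(s))$; applying it converts the coefficient of $N^j$ into $(-1)^j h_k(s)^{\,j+1}/f(s)$. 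Because $N^j$ populates only the $j$-th superdiagonal with ones, the sum is an upper-triangular Toeplitz matrix whose $j$-th diagonal above the main one equals the constant $(-1)^j h_k(s)^{\,j+1}/f(s)$, which is exactly the claimed form (the main diagonal $j=0$ giving $h_k(s)/f(s)$ and the top-right corner $j = n_k-1$ giving $(-1)^{n_k-1} h_k(s)^{n_k}/f(s)$).

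The only genuinely nontrivial step is the recognition that nilpotency turns the inverse into a finite series; everything afterward is bookkeeping to match signs and exponents. The one point I would treat with care is that all manipulations live in the ring of rational transfer functions, so invertibility and the series expansion are meant in that formal sense rather than pointwise in $s$.
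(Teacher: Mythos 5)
Your proof is correct and follows essentially the same route as the paper: both reduce the claim to inverting the upper bidiagonal matrix $I + g(s)f(s)J_k$ and identifying the resulting upper triangular Toeplitz structure. The only difference is presentational --- the paper factors the matrix into $f(s)$ times a Jordan block with eigenvalue $h_k(s)^{-1}$ and invokes the known formula for a Jordan block's inverse, whereas you derive that same formula from the terminating Neumann series of the nilpotent part, which is just the standard proof of that fact.
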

\begin{proof}
Using \eqref{T_tilde_i} and the definition of $J_k$
%
%
%
\begin{equation} \nonumber
H_{\bold{\tilde{x}}_k \bold{\tilde{w}}_k}(s) =
\begin{bmatrix}
\frac{1+ \lambda_k g(s) f(s) } {g(s)} & f(s) & & \\
& \ddots & \ddots & \\
& &  \ddots & f(s) \\
& & &  \frac{1+ \lambda_k g(s) f(s)}{g(s)}
\end{bmatrix}^{-1},
\end{equation}
where factoring out $g(s) f(s)$ gives
\begin{equation} \label{T_tilde_prf}
H_{\bold{\tilde{x}}_k \bold{\tilde{w}}_k}(s) = \frac{1}{f(s)}
\begin{bmatrix}
h_k(s)^{-1} & 1 & & \\
& \ddots & \ddots & \\
& &  \ddots & 1 \\
& & &  h_k(s)^{-1}
\end{bmatrix}^{-1}.
\end{equation}
%
Using the inverse of 
the
JCF in \eqref{T_tilde_prf}
yields
the 
result.
\end{proof}
%
\begin{rem}
The form of the closed-loop transfer function in Lemma \ref{lemma1} holds for arbitrary open-loop and feedback transfer functions $g(s)$ and $f(s)$, and therefore applies to a general class of networked dynamical systems. 
\end{rem}
We next 
apply Lemma \ref{lemma1} to
the special cases of the single and double-integrator networks.

%

\begin{cor} \label{cor1}
Consider the single-integrator network \eqref{firstmatrix}. Then, $H_{\bold{\tilde{x}}_k \bold{\tilde{w}}_k}(s)$ in \eqref{T_tilde_i} is an upper triangular Toeplitz matrix 
%
\begin{equation} \nonumber
H_{\bold{\tilde{x}}_k \bold{\tilde{w}}_k}(s) =
\begin{bmatrix}
h_k(s) & \dots & (-1)^{n_k - 1} h_k(s)^{n_k} \\
& \ddots & \vdots \\
& & h_k(s)
\end{bmatrix},
\end{equation}
where 
%
$
h_k(s) = \frac{1}{s + \lambda_k}
$.
%
\end{cor}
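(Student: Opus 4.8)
The plan is to recognize the single-integrator network \eqref{firstmatrix} as a special instance of the general feedback architecture underlying Lemma \ref{lemma1}, and then simply read off the open-loop and feedback transfer functions $g(s)$ and $f(s)$ to specialize the result.

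First I would take the Laplace transform of \eqref{firstmatrix} under the zero initial condition $\bold{x}(0) = 0$, obtaining $(sI + L)\bold{x}(s) = \bold{w}(s)$. Comparing this with the general closed-loop relation $[g(s)^{-1} I + f(s) L]\bold{x}(s) = \bold{w}(s)$ that precedes \eqref{laplace2}, I match the coefficients of $I$ and of $L$ to identify $g(s)^{-1} = s$ and $f(s) = 1$, i.e. $g(s) = 1/s$ and $f(s) = 1$. This identification is the only substantive step; everything that follows is a direct substitution into the already-established formula.

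Next I would substitute these choices into the scalar quantity $h_k(s)$ from Lemma \ref{lemma1}. Since $g(s) f(s) = 1/s$, we get $h_k(s) = \frac{g(s)f(s)}{1 + \lambda_k g(s)f(s)} = \frac{1/s}{1 + \lambda_k/s} = \frac{1}{s + \lambda_k}$, which is exactly the claimed transfer function. Moreover, the prefactor $1/f(s)$ appearing in Lemma \ref{lemma1} evaluates to $1$ because $f(s) = 1$, so it disappears, leaving precisely the upper triangular Toeplitz matrix whose entries are the signed powers $(-1)^{j-1} h_k(s)^{j}$ as stated.

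Since the entire argument reduces to substituting $g(s) = 1/s$ and $f(s) = 1$ into the formula of Lemma \ref{lemma1}, there is no real obstacle. The only point requiring minor care is confirming that the general architecture of Figure \ref{block1} genuinely reproduces the plain consensus dynamics \eqref{firstmatrix} for these transfer functions, so that Lemma \ref{lemma1} is applicable at all; this is immediate from the coefficient matching in the second step.
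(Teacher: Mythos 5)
Your proposal is correct and follows exactly the paper's own argument: take the Laplace transform of \eqref{firstmatrix}, identify $g(s) = 1/s$ and $f(s) = 1$, and substitute into Lemma \ref{lemma1}. The added detail on coefficient matching and the vanishing $1/f(s)$ prefactor merely spells out what the paper leaves implicit.
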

\begin{proof}
Taking the Laplace transform of \eqref{firstmatrix}
leads to 
$g(s) = \frac{1}{s}$ and $f(s) = 1$. Evaluating the result of Lemma \ref{lemma1} at these values gives the desired result.
\end{proof}
%

\begin{cor} \label{cor2}
Consider the double-integrator network \eqref{secondmatrix}. Then, $H_{\bold{\tilde{x}}_k \bold{\tilde{w}}_k}(s)$ in \eqref{T_tilde_i} is an upper triangular Toeplitz matrix 
%
\begin{equation} \nonumber
H_{\bold{\tilde{x}}_k \bold{\tilde{w}}_k}(s) = \frac{1}{\gamma_p + s \gamma_d}
\begin{bmatrix}
h_k(s) & \dots & (-1)^{n_k - 1} h_k(s)^{n_k} \\
& \ddots & \vdots \\
& & h_k(s)
\end{bmatrix},
\end{equation}
where 
%
$
h_k(s) = \frac{\gamma_p + s \gamma_d}{s^2 + (k_d + \gamma_d \lambda_k) s + k_p + \gamma_p  \lambda_k}
$.

\end{cor}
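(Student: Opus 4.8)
The plan is to follow the same route as the proof of Corollary~\ref{cor1}: identify the scalar open-loop and feedback transfer functions $g(s)$ and $f(s)$ attached to the double-integrator network in Figure~\ref{block1}, and then specialize Lemma~\ref{lemma1}. First I would collect the interconnection terms in the nodal dynamics, noting that $\sum_{j=1}^{n} b_{ij}(x_i - x_j) = [L\bold{x}]_i$ and likewise for the velocity differences, so that in matrix form the closed loop reads $\ddot{\bold{x}} + k_d \dot{\bold{x}} + k_p \bold{x} = -(\gamma_p L \bold{x} + \gamma_d L \dot{\bold{x}}) + \bold{w}$. Taking the Laplace transform under the zero initial condition assumption of Section~\ref{BlockDiag} then yields
\begin{equation} \nonumber
\big[(s^2 + k_d s + k_p) I + (\gamma_p + \gamma_d s) L\big]\bold{x}(s) = \bold{w}(s).
\end{equation}

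Comparing this with the general closed-loop relation $[g(s)^{-1} I + f(s) L]\bold{x}(s) = \bold{w}(s)$ used in Section~\ref{BlockDiag}, I would read off
\begin{equation} \nonumber
g(s) = \frac{1}{s^2 + k_d s + k_p}, \qquad f(s) = \gamma_p + \gamma_d s.
\end{equation}
Substituting these into Lemma~\ref{lemma1} gives the claimed prefactor $1/f(s) = 1/(\gamma_p + s\gamma_d)$ together with the upper-triangular Toeplitz form, so all that remains is to verify the stated $h_k(s)$.

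For that final step I would start from $h_k(s) = g(s)f(s)/(1 + \lambda_k g(s) f(s))$ in Lemma~\ref{lemma1}, insert $g(s)f(s) = (\gamma_p + \gamma_d s)/(s^2 + k_d s + k_p)$, and clear the common denominator $s^2 + k_d s + k_p$ from both numerator and denominator. This collapses the expression into
\begin{equation} \nonumber
h_k(s) = \frac{\gamma_p + \gamma_d s}{s^2 + (k_d + \gamma_d \lambda_k)s + k_p + \gamma_p \lambda_k},
\end{equation}
which is exactly the expression in the corollary.

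I do not expect a substantive obstacle; the result is a direct specialization of Lemma~\ref{lemma1}. The only place that demands care is the bookkeeping when reading off $g(s)$ and $f(s)$ from the dynamics \eqref{secondmatrix}---in particular keeping the position- and velocity-feedback gains $(\gamma_p,\gamma_d)$ correctly grouped into $f(s) = \gamma_p + \gamma_d s$ rather than into $g(s)$---and the subsequent cancellation that fuses $g(s)f(s)$ and the eigenvalue $\lambda_k$ into a single second-order transfer function.
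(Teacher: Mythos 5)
Your proposal is correct and follows exactly the paper's own argument: take the Laplace transform of \eqref{secondmatrix} to identify $g(s) = \tfrac{1}{s^2 + k_d s + k_p}$ and $f(s) = \gamma_p + s\gamma_d$, then specialize Lemma~\ref{lemma1}. The explicit verification that $h_k(s)$ collapses to the stated second-order transfer function is a small but harmless elaboration of the step the paper leaves implicit.
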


\begin{proof}
Taking the Laplace transform of \eqref{secondmatrix}
leads to 
$g(s) ~=~  \frac{1}{s^2 + k_d s + k_p}$ and $f(s) ~=~ \gamma_p + s \gamma_d$. Evaluating the result of Lemma \ref{lemma1} at these values gives the desired result.
\end{proof}
%

The transfer function
from the input $\bold{w}$ to
the velocity state $\bold{v}$ is given by 
$H_{\bold{v}\bold{w}}(s)
:=  s H_{\bold{x}\bold{w}}(s)
$
since
%
\mbox{
$
\bold{v}(s) = s \bold{x}(s) = s H_{\bold{x}\bold{w}}(s) \bold{w}(s). 
$}
%
Therefore, the closed-loop transfer function 
$T (s)$ from the input $\bold{w}$ to the output $\bold{y}$
can be written as
\begin{equation} \label{H_func}
T (s) = C r(s) H_{\bold{x}\bold{w}}(s),
\end{equation}
using the notation in Figure \ref{block1} 
and specifying $r(s)$ such that
%
\begin{numcases}
{T (s) =} 
C H_{\bold{x}\bold{w}}(s), & $r(s) = 1$ \label{H_cases_1} \\
C H_{\bold{v}\bold{w}}(s), & $r(s) = s$ \label{H_cases_2}.
\end{numcases}
%
%
The cases \eqref{H_cases_1} and \eqref{H_cases_2} correspond to the 
outputs \eqref{posOutput} and \eqref{velOutput}, respectively. 
We next provide necessary and sufficient conditions for  
the input-output stability of \eqref{H_cases_1} and \eqref{H_cases_2},
which ensure the finiteness of  
the performance metric \eqref{perf_x_2}. 

\subsection{Input-Output Stability}

In this subsection we state necessary and sufficient conditions for the input-output stability of \eqref{H_cases_1} and \eqref{H_cases_2}. 
The following assumption will be imposed throughout the paper to 
eliminate the 
unstable consensus mode of the Laplacian from the performance output.

\begin{assum} \label{assum1}
The output matrix $C$ satisfies $C \bold{1}= \bold{0}$.
\end{assum}

First, we apply the change of basis in \eqref{TxwTrans} to the closed-loop system \eqref{H_func}.
Since $L \bold{1} = \bold{0}$, we can apply the partitioning 
\begin{equation} \label{RandQ}
R = \begin{bmatrix} \alpha \bold{1} & \tilde{R} \end{bmatrix} \
\text{and} \
R^{-1} = \begin{bmatrix} \bold{q}_1 & \tilde{Q}^*\end{bmatrix}^*,
\end{equation} 
where $\alpha \in \mathbb{C}$, $\bold{q}_1^* \in \mathbb{C}^{1\times n} $ is the left eigenvector of $\lambda_1=0$, $\tilde{R} \in \mathbb{C}^{n \times n-1}$ and $\tilde{Q} \in \mathbb{C}^{n-1 \times n}$. Substituting \eqref{TxwTrans}, \eqref{block_diag} and \eqref{RandQ} 
into \eqref{H_func} we obtain
\begin{align} \nonumber
T (s) &= C
\left( \alpha r(s) H_{\bold{\tilde{x}}_1 \bold{\tilde{w}}_1} (s) \bold{1} \bold{q}_1^* +  
\tilde{R} \, \tilde{H} (s) \tilde{Q} \right) \\
&= C \tilde{R} \, \tilde{H}(s) \tilde{Q} \label{H_x_hat},
\end{align}
%
where 
%
%
\begin{equation} \label{T_tilde}
\! 
\tilde{H}(s) 
\! = \!
\bdiag{ ( \tilde{H}_k (s) ) } 
\! := \!
r(s)
\bdiag{ \left(
H_{\bold{\tilde{x}}_k \bold{\tilde{w}}_k} (s) \right) }, \! \! \! \! \! \! \! \!
\end{equation} 
for $k = 2, \dots, m$
and we have used Assumption \ref{assum1} and the fact that $H_{\bold{\tilde{x}}_1 \bold{\tilde{w}}_1} (s)$ is a scalar. 
We can partition $\tilde{R}$ in \eqref{RandQ} as 
\begin{equation} \label{Rtilde_part}
\tilde{R} = \begin{bmatrix} \tilde{R}_2 & \dots & \tilde{R}_m \end{bmatrix},
\end{equation}
which is in a form that conforms to \eqref{jordanL}. Then the columns of $\tilde{R}_k ~\in~ \mathbb{C}^{n \times n_k}$ are the right generalized eigenvectors associated with the Jordan block $J_k$ in \eqref{jordanL} for $k = 2, \dots, m$. This partitioning leads to the following useful definition.

\begin{defn} \label{obsv_indices}
The set of observable indices $\mathcal{N}_{obsv}$ is given by
\begin{equation} \label{obsv_ind_eqn}
\mathcal{N}_{obsv} =
\left\{
k \in \{ 2, \dots, m \}
\mid
C \tilde{R}_k \neq 0 
\right\}.
\end{equation}
\end{defn}

We now state the stability conditions. 
We begin with
the system $T$ in 
\eqref{H_cases_1} for the single-integrator network \eqref{firstmatrix}.

\begin{prop} \label{prop4}
Consider the single-integrator network \eqref{firstmatrix}. The system $T$ in \eqref{H_cases_1} is input-output stable if and only if Assumption \ref{assum1} holds 
\cite[Theorem 7.4]{bullo2016lectures}.
\end{prop}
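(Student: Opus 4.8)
The plan is to expose the closed-loop transfer function $T(s)$ \emph{before} Assumption \ref{assum1} is invoked to discard the consensus mode, and then to show that input-output stability is equivalent to the vanishing of that mode's contribution. Starting from the first line of \eqref{H_x_hat} with $r(s) = 1$, and using Corollary \ref{cor1} with $\lambda_1 = 0$ (so the $k=1$ block is the scalar $H_{\bold{\tilde{x}}_1 \bold{\tilde{w}}_1}(s) = 1/s$), the transfer function splits as
\begin{equation*}
T(s) = \frac{\alpha}{s}\, C \bold{1}\, \bold{q}_1^* \;+\; C \tilde{R}\, \tilde{H}(s)\, \tilde{Q},
\end{equation*}
with $\tilde{H}(s) = \bdiag{(\tilde{H}_k(s))}$, $k = 2,\dots,m$, from \eqref{T_tilde}. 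The first summand is the consensus-mode term and the second collects the remaining modes.

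First I would show that the second summand is analytic at $s = 0$ and represents an IO-stable system. By Corollary \ref{cor1}, every entry of each block $\tilde{H}_k(s)$ is a product of powers of $h_k(s) = 1/(s + \lambda_k)$, so the only poles of $C\tilde{R}\tilde{H}(s)\tilde{Q}$ lie at $s = -\lambda_k$ for $k = 2,\dots,m$. Because $\mathcal{G}$ has a globally reachable node, $\R[\lambda_k] > 0$ for these $k$, placing every such pole in the open left half-plane; hence this summand is strictly proper with all poles in $\{\R[s] < 0\}$ and, in particular, has no pole at the origin.

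For the ``if'' direction, Assumption \ref{assum1} gives $C\bold{1} = \bold{0}$, annihilating the consensus-mode term and leaving $T(s) = C\tilde{R}\tilde{H}(s)\tilde{Q}$, already shown to be IO stable. For the ``only if'' direction I would argue by contrapositive: suppose $C\bold{1} \neq \bold{0}$. Since $R$ is invertible, its column $\alpha \bold{1}$ in \eqref{RandQ} is nonzero, so $\alpha \neq 0$; likewise $\bold{q}_1^*$, being a row of the invertible matrix $R^{-1}$, is nonzero. Hence the residue $\alpha\, C\bold{1}\, \bold{q}_1^*$ of the first summand is a nonzero matrix, and because the second summand is analytic at $s = 0$, no pole--zero cancellation can occur there. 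Consequently $T(s)$ retains a simple pole at $s = 0$ on the imaginary axis, so $T$ is not IO stable, which completes the contrapositive.

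The step I expect to require the most care is ruling out cancellation of the origin pole: the argument rests on the second summand being \emph{analytic} at $s = 0$, which follows from the strict separation $\R[\lambda_k] > 0$ guaranteed by the globally reachable node. The nonvanishing of $\alpha$ and $\bold{q}_1$ from the invertibility of $R$ is routine, but it must be stated explicitly to conclude that the residue is genuinely nonzero whenever $C\bold{1} \neq \bold{0}$.
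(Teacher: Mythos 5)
Your proof is correct and takes essentially the route the paper intends: the paper gives no explicit proof of this proposition, relying on the citation to Bullo's Theorem 7.4 for the spectral facts together with the decomposition it has already built in \eqref{H_x_hat} and Corollary \ref{cor1}, and your argument is the natural completion of that — isolating the consensus-mode term $\frac{\alpha}{s}C\mathbf{1}\mathbf{q}_1^*$, using $\R[\lambda_k]>0$ for $k\geq 2$ to place all other poles in the open left half-plane, and observing that the residue at the origin is nonzero exactly when $C\mathbf{1}\neq\mathbf{0}$ (since $\alpha\neq 0$ and $\mathbf{q}_1\neq\mathbf{0}$ by invertibility of $R$). Both directions, including the non-cancellation argument at $s=0$, are handled correctly.
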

%
%

As 
we show next
for 
the double-integrator network \eqref{secondmatrix}, 
stability of the observable modes
is necessary and sufficient for the input-output stability of the system $T$ given by \eqref{H_cases_1} or \eqref{H_cases_2}. For simplicity, we assume $L$ to be diagonalizable; the result can be extended by relaxing this assumption.
 
\begin{prop} \label{prop3}
Consider the double-integrator network \eqref{secondmatrix} and suppose that $L$ is diagonalizable and assumptions \ref{assum_gain} and \ref{assum1} hold.
The system $T$ given by \eqref{H_cases_1} or \eqref{H_cases_2} is input-output stable if and only if 
\begin{equation} \label{char_eq_s}
s^2 + (k_d + \gamma_d \lambda_k) s + k_p + \gamma_p  \lambda_k = 0
\end{equation}
has solutions that satisfy
$\R (s) < 0$ for all
$k \in \mathcal{N}_{obsv}$.
\end{prop}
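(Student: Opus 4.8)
The plan is to work in the Jordan (here diagonal) basis already set up in \eqref{H_x_hat}--\eqref{T_tilde} and to read off the poles of $T$ mode by mode. Since $L$ is diagonalizable, every $n_k = 1$, so $\tilde{H}(s)$ in \eqref{T_tilde} is diagonal with scalar entries $\tilde{H}_k(s) = r(s)\,H_{\tilde x_k \tilde w_k}(s)$, $k = 2,\dots,n$. Specializing Corollary \ref{cor2} to $n_k = 1$, the scalar prefactor $1/(\gamma_p + s\gamma_d)$ cancels the numerator of $h_k$, leaving $H_{\tilde x_k \tilde w_k}(s) = 1/\big(s^2 + (k_d + \gamma_d\lambda_k)s + k_p + \gamma_p\lambda_k\big)$, so each modal transfer function $\tilde{H}_k(s) = r(s)/q_k(s)$ is strictly proper, where $q_k(s)$ denotes the quadratic in \eqref{char_eq_s}. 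Writing the columns of $C\tilde R$ as $C\tilde R_k$ and the rows of $\tilde Q$ as $\tilde q_k^*$, \eqref{H_x_hat} becomes the finite modal sum $T(s) = \sum_{k=2}^n \big(r(s)/q_k(s)\big)(C\tilde R_k)\tilde q_k^*$; by Definition \ref{obsv_indices} the $k$-th term vanishes identically exactly when $k\notin\mathcal N_{obsv}$, so $T(s) = \sum_{k\in\mathcal N_{obsv}}\big(r(s)/q_k(s)\big)(C\tilde R_k)\tilde q_k^*$.

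Sufficiency is then immediate: if every $q_k$ with $k\in\mathcal N_{obsv}$ has its roots in the open left half-plane, each summand is a stable, strictly proper scalar transfer function times a constant matrix, so the finite sum $T$ is stable and strictly proper, hence IO stable (which also makes \eqref{perf_x_2} finite).

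For necessity I would show that whenever some $k_0\in\mathcal N_{obsv}$ has a root $s_*$ of $q_{k_0}$ with $\R(s_*)\ge 0$, this $s_*$ is a genuine closed-right-half-plane pole of $T$, i.e.\ it is not removed by cancellation. Two mechanisms must be excluded. First, a within-mode pole/zero cancellation: the only zero of the numerator $r(s)$ is $s=0$ (velocity case), and $q_k(0) = k_p + \gamma_p\lambda_k$ cannot vanish for $k\ge 2$, since $\R(\lambda_k) > 0$ forces $k_p + \gamma_p\lambda_k = 0$ only if $\gamma_p\neq 0$ and $\lambda_k = -k_p/\gamma_p\le 0$ (a contradiction), or if $\gamma_p = k_p = 0$ (excluded by Assumption \ref{assum_gain}); hence $s=0$ is never a root of $q_k$ and no within-mode cancellation occurs. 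Second, a cross-mode cancellation of the residues at $s_*$: the modes sharing this root contribute residue matrices proportional to the rank-one terms $(C\tilde R_k)\tilde q_k^*$, and since the $\tilde q_k^*$ are distinct rows of the invertible matrix $R^{-1}$ they are linearly independent, so no nontrivial combination of these rank-one matrices is zero; as $C\tilde R_{k_0}\neq 0$, the residue at $s_*$ is nonzero. Therefore $s_*$ is a pole of $T$ in the closed right half-plane and $T$ is not IO stable.

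Equivalently, and perhaps more transparently, I would phrase the whole argument through a state-space realization of \eqref{secondmatrix} with output \eqref{H_cases_1} or \eqref{H_cases_2}: the diagonalizing change of basis $L = R\Lambda R^{-1}$ decouples the dynamics into $n$ planar subsystems, the $k$-th having characteristic polynomial $q_k(s)$, being controllable (its input column $[0\ 1]^\intercal$ gives a full-rank controllability matrix, and $\tilde Q$ has full row rank), and being observable through the output precisely when $C\tilde R_k\neq0$ --- for the velocity output \eqref{H_cases_2} observability additionally requires $k_p+\gamma_p\lambda_k\neq0$, which holds for $k\ge2$ by the same computation above. The poles of $T$ are then exactly the eigenvalues of the controllable-and-observable part, $\bigcup_{k\in\mathcal N_{obsv}}\{\text{roots of }q_k\}$, and IO stability is equivalent to all of these lying in the open left half-plane, which is the condition in \eqref{char_eq_s} restricted to $k\in\mathcal N_{obsv}$. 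The \emph{main obstacle} is precisely the no-cancellation bookkeeping of the third paragraph: ensuring an unstable modal root neither cancels against the numerator $r(s)$ nor against another mode's contribution --- both of which reduce, after a short computation, to the fact that $\R(\lambda_k) > 0$ together with Assumption \ref{assum_gain} keeps $s=0$ off the roots of every $q_k$ with $k\ge2$, and to the linear independence of the left eigenvectors.
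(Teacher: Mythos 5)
Your proof is correct, but it takes a genuinely different route from the paper's. The paper works entirely in state space: it assembles the block-diagonal realization of $T$ from the planar subsystems $\Lambda_k$, argues that this realization is observable and controllable (hence minimal), and concludes that the poles of $T(s)$ are exactly the eigenvalues of the state matrix, i.e.\ the roots of \eqref{char_eq_s} for $k \in \mathcal{N}_{obsv}$. Your primary argument is frequency-domain: you expand $T(s)=\sum_{k\in\mathcal{N}_{obsv}}\bigl(r(s)/q_k(s)\bigr)(C\tilde{R}_k)\tilde{q}_k^*$ and rule out cancellation directly --- within a mode by checking that $q_k(0)=k_p+\gamma_p\lambda_k\neq 0$ for $k\geq 2$ (using $\R[\lambda_k]>0$ and Assumption \ref{assum_gain}), and across modes by observing that the singular Laurent coefficient at a shared root is $\sum_k c_k (C\tilde{R}_k)\tilde{q}_k^*$ with $c_k\neq 0$, which cannot vanish because the rows $\tilde{q}_k^*$ of the invertible $R^{-1}$ are linearly independent. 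The two arguments therefore pivot on different linear-independence facts: the paper's observability step rests on linear independence of the output-side vectors $C\bold{r}_k$, $k\in\mathcal{N}_{obsv}$, which it must derive from properties of $\ker\{C\}$, whereas yours rests on linear independence of the input-side rows $\tilde{q}_k^*$, which holds automatically. The latter is the more robust choice: when $C$ has a single row the vectors $C\bold{r}_k$ are scalars and necessarily dependent once $|\mathcal{N}_{obsv}|>1$, so the paper's realization need not be observable, yet your Laurent-coefficient computation still shows that every closed-right-half-plane root of an observable $q_k$ survives as a pole of $T$. Your state-space rephrasing in the final paragraph is essentially the paper's proof. One minor bookkeeping point: when distinct modes share a root with different multiplicities you should compare Laurent coefficients at the maximal pole order rather than literal residues, but the same linear-independence argument applies verbatim.
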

\begin{proof} 
%
Using the block diagram in Figure \ref{fig3} and the fact that $J_k = \lambda_k$ leads to the following realization for $H_{\bold{\tilde{x}}_k \bold{\tilde{w}}_k}$
\begin{align} 
\begin{bmatrix}
\bold{\dot{\tilde{x}}}_k \\ \bold{\dot{\tilde{v}}}_k
\end{bmatrix} &= 
\underbrace{
\begin{bmatrix}
0 & 1 \\ 
-k_p - \gamma_p \lambda_k & -k_d - \gamma_d \lambda_k
\end{bmatrix}}_{\Lambda_k}
\begin{bmatrix}
\bold{\tilde{x}}_k \\ \bold{\tilde{v}}_k
\end{bmatrix}
+ 
\begin{bmatrix}
0 \\ 1
\end{bmatrix}
 \bold{\tilde{w}}_k \nonumber \\
\bold{\tilde{y}}_k &=
\begin{bmatrix}
1 & 0 
\end{bmatrix} 
\begin{bmatrix}
\bold{\tilde{x}}_k \\ \bold{\tilde{v}}_k
\end{bmatrix} =
 \bold{\tilde{x}}_k.
 \label{Hk_real}
\end{align}
Since $L$ is diagonalizable, the partitioning of $\tilde{R}$ in \eqref{Rtilde_part} becomes $\tilde{R} = \begin{bmatrix}
\bold{r}_2 & \dots & \bold{r}_n
\end{bmatrix}$. Using the block-diagonal form of $\tilde{H}(s)$ in \eqref{T_tilde} and the conformal partitioning $\tilde{Q} ~=~ \begin{bmatrix} \bold{q}_2 & \dots & \bold{q}_n \end{bmatrix}^*$, \eqref{H_x_hat} can be expressed in time-domain as
$$
T(t) \ = \ 
C \sum_{k=2}^{n} \bold{r}_k \tilde{H}_k (t) \bold{q}_k^* \
= C \! \sum_{k \in \mathcal{N}_{obsv}} \! \! \! \! \bold{r}_k \tilde{H}_k (t) \bold{q}_k^*.
$$
For \eqref{H_cases_1}, we can use \eqref{T_tilde} and the realization for $H_{\bold{\tilde{x}}_k \bold{\tilde{w}}_k}$ in \eqref{Hk_real} to re-write the equation above as
\begin{align*}
T(t) \ =&
\sum_{k \in \mathcal{N}_{obsv}} \! \! \! \! \begin{bmatrix}
C \bold{r}_k  & 0
\end{bmatrix}
e^{\Lambda_k t} 
\begin{bmatrix}
0 \\
\bold{q}_k^*
\end{bmatrix},
\end{align*}
which has a realization 
\begin{equation} \label{T_real_pos}
T(t) =
\left(
\begin{array}{c|c}
\left[
\begin{smallmatrix}
\ddots & & \\
& \Lambda_k & \\
& & \ddots
\end{smallmatrix} \right]
& 
\left[ \begin{smallmatrix}
\vdots \\
\left[
\begin{smallmatrix}
0 \\ \bold{q}_k^*
\end{smallmatrix}
\right]
\\ \vdots
\end{smallmatrix} \right]
\\ \hline
\left[
\begin{smallmatrix}
\dots & 
\left[
\begin{smallmatrix}
C \bold{r}_k  & 0
\end{smallmatrix}
\right]
& \dots
\end{smallmatrix} \right]
& 0
\end{array}
\right), 
\quad k \in \mathcal{N}_{obsv}.
\end{equation}
The associated observability matrix is given by
\begin{equation} \label{pos_obsv}
\mathcal{O} = 
\left[
\begin{array}{c}
\left[
\begin{smallmatrix}
\dots & 
\left[
\begin{smallmatrix}
C \bold{r}_k  & 0
\end{smallmatrix}
\right]
& \dots
\end{smallmatrix} \right]
\\
\left[
\begin{smallmatrix}
\dots & 
\left[
\begin{smallmatrix}
C \bold{r}_k  & 0
\end{smallmatrix}
\right] \Lambda_k
& \dots
\end{smallmatrix} \right]
\\
\vdots
\\
\left[
\begin{smallmatrix}
\dots & 
\left[
\begin{smallmatrix}
C \bold{r}_k  & 0
\end{smallmatrix}
\right] \Lambda_k^{2 |\mathcal{N}_{obsv}|-1}
& \dots
\end{smallmatrix} \right]
\end{array} 
\right],
\end{equation}
where $k \in \mathcal{N}_{obsv}$ and $|\mathcal{N}_{obsv}|$ denotes the cardinality of $\mathcal{N}_{obsv}$. Due to the form of $\Lambda_k$ in \eqref{Hk_real}, we can see that \mbox{$
\begin{bmatrix}
C \bold{r}_k  & 0
\end{bmatrix}
\Lambda_k =
\begin{bmatrix}
0 & C \bold{r}_k
\end{bmatrix}$}. Then the first two block-rows of \eqref{pos_obsv} imply that $\mathcal{O}$ is full rank if the vectors $C \bold{r}_k$ are linearly independent for $k \in \mathcal{N}_{obsv}$. For a proof by contradiction, assume that $C \bold{r}_k$ are linearly dependent, i.e. $\sum_{k \in \mathcal{N}_{obsv}}\alpha_k C \bold{r}_k = 0$ where $\alpha_k$ is non-zero for some $k$. This implies that $\sum_{k \in \mathcal{N}_{obsv}}\alpha_k \bold{r}_k \in \ker \{C\}$, which can be expressed as a linear combination of the vectors that span $\ker \{C\}$. Then
$$
\sum_{k \in \mathcal{N}_{obsv}}\alpha_k \bold{r}_k = -
\! \!
\sum_{k \in \{ 1, \dots, n  \} \setminus \mathcal{N}_{obsv}}
\! \!
\alpha_k \bold{r}_k
\Rightarrow
\sum_{k=1}^{n} \alpha_k \bold{r}_k = 0,
$$
which would contradict the fact that $R$ is invertible. Therefore, $\mathcal{O}$ in \eqref{pos_obsv} is full rank, so the realization in \eqref{T_real_pos} is observable. By a similar argument we can prove the controllability, hence the minimality of \eqref{T_real_pos}. Therefore, the poles of $T(s)$ in \eqref{H_cases_1} are given precisely by the eigenvalues of the system matrix in \eqref{T_real_pos}, which are determined by \eqref{char_eq_s}. Then $T(s)$ is input-output stable if and only if its poles are on the open left half-plane.

We now repeat the argument for \eqref{H_cases_2} which is given by 
\begin{align*}
T(t) \ =&
\sum_{k \in \mathcal{N}_{obsv}} \! \! \! \! \begin{bmatrix}
0 & C \bold{r}_k
\end{bmatrix}
e^{\Lambda_k t} 
\begin{bmatrix}
0 \\
\bold{q}_k^*
\end{bmatrix}
\end{align*}
in time-domain with a realization 
\begin{equation} \label{T_real_vel}
T(t) =
\left(
\begin{array}{c|c}
\left[
\begin{smallmatrix}
\ddots & & \\
& \Lambda_k & \\
& & \ddots
\end{smallmatrix} \right]
& 
\left[ \begin{smallmatrix}
\vdots \\
\left[
\begin{smallmatrix}
0 \\ \bold{q}_k^*
\end{smallmatrix}
\right]
\\ \vdots
\end{smallmatrix} \right]
\\ \hline
\left[
\begin{smallmatrix}
\dots & 
\left[
\begin{smallmatrix}
0 & C \bold{r}_k
\end{smallmatrix}
\right]
& \dots
\end{smallmatrix} \right]
& 0
\end{array}
\right), 
\quad k \in \mathcal{N}_{obsv}.
\end{equation}
The associated observability matrix is given by
\begin{equation} \label{vel_obsv}
\mathcal{O} = 
\left[
\begin{array}{c}
\left[
\begin{smallmatrix}
\dots & 
\left[
\begin{smallmatrix}
0 & C \bold{r}_k
\end{smallmatrix}
\right]
& \dots
\end{smallmatrix} \right]
\\
\left[
\begin{smallmatrix}
\dots & 
\left[
\begin{smallmatrix}
0 & C \bold{r}_k
\end{smallmatrix}
\right] \Lambda_k
& \dots
\end{smallmatrix} \right]
\\
\vdots
\\
\left[
\begin{smallmatrix}
\dots & 
\left[
\begin{smallmatrix}
0 & C \bold{r}_k
\end{smallmatrix}
\right] \Lambda_k^{2 |\mathcal{N}_{obsv}|-1}
& \dots
\end{smallmatrix} \right]
\end{array} 
\right],
\end{equation}
where $k \in \mathcal{N}_{obsv}$. Since $$
\begin{bmatrix}
0 & C \bold{r}_k
\end{bmatrix}
\Lambda_k =
C \bold{r}_k
\begin{bmatrix}
-k_p - \gamma_p \lambda_k & -k_d - \gamma_d \lambda_k
\end{bmatrix},$$
and assumption \ref{assum_gain} holds, \eqref{vel_obsv} is full rank and \eqref{T_real_vel} is observable, hence minimal. Therefore, the poles of $T(s)$ in \eqref{H_cases_2} are given precisely by the eigenvalues of the system matrix in \eqref{T_real_vel}, which are determined by \eqref{char_eq_s}. Then $T(s)$ is input-output stable if and only if its poles are on the open left half-plane.
%
%
\end{proof}

%
%

\begin{rem} \label{rem1}
Assumption \ref{assum1} 
can be relaxed for specific values of $k_p$ and $k_d$ for which 
the consensus modes
become Hurwitz. If $k_p \neq 0$ and $k_d \neq 0$, the assumption can be relaxed for both \eqref{H_cases_1} and \eqref{H_cases_2} since 
%
$
H_{\bold{\tilde{x}}_1 \bold{\tilde{w}}_1} (s) = \frac{h_1 (s)}{f(s)} = g(s) = \frac{1}{s^2 + k_d s + k_p}
$
%
and
%
$
H_{\bold{\tilde{v}}_1 \bold{\tilde{w}}_1} (s) = s H_{\bold{\tilde{x}}_1 \bold{\tilde{w}}_1}(s) = \frac{s}{s^2 + k_d s + k_p}
$
%
have stable poles by the Routh-Hurwitz criterion. 

Similarly, one can relax the assumption for \eqref{H_cases_2} but 
\defaultColor{not}
for \eqref{H_cases_1} if $k_p = 0$ and $k_d \neq 0$ since
%
\mbox{
$
H_{\bold{\tilde{x}}_1 \bold{\tilde{w}}_1} (s) = \frac{1}{s^2 + k_d s}
$
}
%
has a pole at $s=0$ but
%
\mbox{
$
H_{\bold{\tilde{v}}_1 \bold{\tilde{w}}_1} (s) = \frac{s}{s^2 + k_d s} = \frac{1}{s + k_d}
$
}
%
has a stable pole.
However for the sake of simplicity, we only consider performance metrics such that Assumption \ref{assum1} is satisfied for both \eqref{H_cases_1} and \eqref{H_cases_2}.
\end{rem}

\defaultColor{
The stability 
condition in
Proposition \ref{prop3} can be restated as follows.}

\begin{prop} \label{prop_stab_rh}
Consider the double-integrator network \eqref{secondmatrix} and suppose that $L$ is diagonalizable and assumptions \ref{assum_gain} and \ref{assum1} hold.
The system $T$ given by \eqref{H_cases_1} or \eqref{H_cases_2} is input-output stable if and only if
\begin{equation} \label{stab_rh}
\alpha_k \phi_k^2 + \beta_k \xi_k \phi_k - \beta_k^2 > 0 \ \text{and} \ \phi_k > 0, \ \ 
k \in \mathcal{N}_{obsv},
\end{equation}
where $\alpha_k = k_p + \gamma_p \R[\lambda_k]$, $\phi_k = k_d + \gamma_d \R[\lambda_k]$, $\beta_k = \gamma_p \I[\lambda_k]$ and $\xi_k = \gamma_d\I[\lambda_k]$.
\end{prop}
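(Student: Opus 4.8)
The plan is to observe that Proposition \ref{prop3} has already reduced input-output stability to the single algebraic requirement that, for every $k \in \mathcal{N}_{obsv}$, both roots of the characteristic equation \eqref{char_eq_s} lie in the open left half-plane. Thus the entire task is to re-express ``the complex quadratic \eqref{char_eq_s} is Hurwitz'' in terms of the real and imaginary parts of its coefficients. First I would write $\lambda_k = \R[\lambda_k] + i\,\I[\lambda_k]$ and recast \eqref{char_eq_s} as $s^2 + (\phi_k + i\xi_k)s + (\alpha_k + i\beta_k) = 0$, with $\alpha_k,\phi_k,\beta_k,\xi_k$ exactly the quantities defined in the statement; this identification is pure bookkeeping.

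The core is therefore a Hurwitz test for a monic quadratic with \emph{complex} coefficients, which I would obtain from an imaginary-axis crossing condition combined with a root-continuity argument. Setting $s = i\omega$ and splitting into real and imaginary parts gives $R(\omega) = -\omega^2 - \xi_k\omega + \alpha_k$ and $I(\omega) = \phi_k\omega + \beta_k$. A purely imaginary root exists iff $R$ and $I$ share a real zero; eliminating $\omega = -\beta_k/\phi_k$ from $I=0$ (valid when $\phi_k \neq 0$) and substituting into $R=0$ yields precisely $\alpha_k\phi_k^2 + \beta_k\xi_k\phi_k - \beta_k^2 = 0$. Hence the surface $g_k := \alpha_k\phi_k^2 + \beta_k\xi_k\phi_k - \beta_k^2 = 0$ is exactly the boundary separating stable from unstable coefficient tuples, and on its complement the number of right-half-plane roots is locally constant.

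Then I would fix the count by connectivity together with a test point. Since the two roots sum to $-(\phi_k + i\xi_k)$, the sum of their real parts equals $-\phi_k$, so $\phi_k > 0$ is necessary. Restricting to $\{\phi_k > 0\}$, the set $\{g_k > 0\}$ is the epigraph $\{\alpha_k > (\beta_k^2 - \xi_k\beta_k\phi_k)/\phi_k^2\}$ of a continuous function over a connected base, hence connected (and likewise $\{g_k < 0\}$); evaluating at the real point $\xi_k = \beta_k = 0$, $\alpha_k = \phi_k = 1$ (both roots stable) versus $\alpha_k = -1$ (one unstable root) pins the right-half-plane count to $0$ on $\{g_k > 0\}$ and to $1$ on $\{g_k < 0\}$. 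Combining these facts, the quadratic is Hurwitz iff $\phi_k > 0$ and $g_k > 0$, which is exactly \eqref{stab_rh}; applying this criterion for each $k \in \mathcal{N}_{obsv}$ reproduces the condition of Proposition \ref{prop3}.

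The delicate point, and the main obstacle, is justifying that the right-half-plane count is \emph{globally} determined by the sign of $g_k$ rather than merely locally constant, i.e. the connectedness of the regions $\{\phi_k > 0,\ g_k \gtrless 0\}$ and the edge cases when $\phi_k \le 0$ or $\beta_k = 0$. An alternative that sidesteps the continuity argument is to form the real quartic $P(s) = p(s)\overline{p}(s) = (s^2 + \phi_k s + \alpha_k)^2 + (\xi_k s + \beta_k)^2$, whose roots are the pairs $\{s_j,\bar{s}_j\}$, so that $p$ is Hurwitz iff $P$ is; applying the classical Routh--Hurwitz conditions to $P$ and simplifying should collapse them to $\phi_k > 0$ and $g_k > 0$. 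I would present the crossing/continuity derivation as the main line and keep the quartic reduction as an independent cross-check.
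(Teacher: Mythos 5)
Your proof is correct, and its top-level structure coincides with the paper's: both reduce the claim, via Proposition \ref{prop3}, to the question of when the complex-coefficient quadratic \eqref{char_eq_s} is Hurwitz. The difference is entirely in how that question is settled. The paper's proof is a one-line citation of \cite[Lemma 4]{LiHuang2010}, which is precisely the Routh--Hurwitz criterion for a monic quadratic with complex coefficients and yields \eqref{stab_rh} directly. You instead derive that criterion from scratch: the necessity of $\phi_k>0$ from the root sum $-(\phi_k+\mathbf{j}\xi_k)$, the identification of $\{g_k=0\}$ (with $g_k=\alpha_k\phi_k^2+\beta_k\xi_k\phi_k-\beta_k^2$) as the locus of imaginary-axis roots by eliminating $\omega=-\beta_k/\phi_k$, and then a connectedness-plus-test-point argument to pin the right-half-plane root count on $\{\phi_k>0,\ g_k>0\}$ and $\{\phi_k>0,\ g_k<0\}$; the algebra checks out, the strict epigraph/hypograph sets are indeed path-connected over the connected base $\{\phi_k>0\}\times\mathbb{R}^2$, and the real test points $s^2+s\pm 1$ correctly calibrate the count. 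What your route buys is self-containedness and transparency about where the quantity $g_k$ comes from (it is exactly the imaginary-axis crossing condition); what the paper's route buys is brevity. Your suggested cross-check via the real quartic $(s^2+\phi_k s+\alpha_k)^2+(\xi_k s+\beta_k)^2$ is also a standard and valid way to recover the same conditions, though you leave that computation unexecuted.
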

\begin{proof}
The result follows from applying \cite[Lemma 4]{LiHuang2010} to Proposition \ref{prop3}.
\end{proof}


\section{Performance over Arbitrary Digraphs} \label{ArbitraryL}

In this section, 
\defaultColor{we use the block-diagonalization procedure outlined in Section \ref{BlockDiag}
to derive closed-form expressions} for the performance of the single and double-integrator networks \eqref{firstmatrix} and \eqref{secondmatrix} over arbitrary directed graphs \defaultColor{that have at least one} globally reachable node. Throughout the discussion we use both time and frequency domain representations, which simplifies the analysis.

First, we simplify \eqref{perf_x_2} using the block-diagonal form of \eqref{block_diag} and show that performance can be quantified as a linear combination of scalar integrals. These integrals can be interpreted as $\mathcal{L}_2$ scalar products of the elements of the closed-loop impulse response function matrix blocks $H_{\bold{\tilde{x}}_k \bold{\tilde{w}}_k} (t)$ and $H_{\bold{\tilde{v}}_k \bold{\tilde{w}}_k} (t)$. 

Combining \eqref{perf_x_2} and \eqref{H_x_hat}, the performance metric in \eqref{perf_x_2} can be written as
\begin{equation} \label{perf_generic}
P = \int_{0}^{\infty} \bold{w}_0^* \tilde{Q}^* \tilde{H} (t)^* \tilde{N} \tilde{H} (t) \tilde{Q} \bold{w}_0 dt,
\end{equation}
where $\tilde{N} = \tilde{R}^* C^* C \tilde{R}$ and $\tilde{H}$ is defined as in \eqref{T_tilde} with
\begin{equation} \label{TF_block}
\tilde{H}_k (s) = 
\begin{bmatrix}
\tilde{h}_{11}^{(k)}(s) & \dots & \tilde{h}_{1,n_k}^{(k)}(s) \\
& \ddots & \vdots \\
& & \tilde{h}_{n_k,n_k}^{(k)}(s)
\end{bmatrix}
\end{equation}
for $k=2,\dots,m$. The upper triangular form of \eqref{TF_block} is given in Lemma \ref{lemma1}. Since 
\begin{equation} \label{matrix_M}
M := C^* C
\end{equation}
is a symmetric matrix, it is 
unitarily 
diagonalizable, i.e. 
$$ \! M = \Theta W \Theta^*, \ \
%
%
%
W = \diag{ (\mu_i)_{1 \leq i \leq n} }
\in \mathbb{R}^{n \times n} \! , \ \ 
\text{and} \ \
\Theta \Theta^* = I,$$ 
therefore $\tilde{N} = \tilde{R}^* \Theta W \Theta^* \tilde{R}$. Using Assumption \ref{assum1} and assuming without loss of generality that $\mu_1 = 0$ is associated with the eigenvector $\boldsymbol{\theta}_1 = \frac{1}{\sqrt{n}}\bold{1}$, we can state $\tilde{N}$ element-wise as
\begin{equation} \label{nu_eta_kappa}
(\tilde{N})_{\eta-1, \kappa-1} = \sum_{l=2}^{n} 
\langle \boldsymbol{\theta}_l, \bold{r}_{\eta}  \rangle 
\langle \bold{r}_{\kappa}, \boldsymbol{\theta}_l  \rangle
\mu_l
=: \nu_{\eta, \kappa}
\end{equation}
for $\eta, \kappa = 2, \dots, n$, where $\langle \boldsymbol{\theta}_l, \bold{r}_{\eta}  \rangle = \bold{r}_{\eta}^*\boldsymbol{\theta}_l$, $\bold{r}_{\kappa}$ and $\boldsymbol{\theta}_l$ denote respectively the columns $\kappa$ and $l$ of $\tilde{R}$ and $\Theta$. 

Using this notation, \eqref{perf_generic} can be written in terms of the scalar products between the elements of $\tilde{H}_k (t)$, which are given by the element-wise inverse Laplace transforms of \eqref{TF_block}.

\begin{lemma} \label{lemma2}
The performance metric $P$ in \eqref{perf_generic} is given by
\begin{equation} \label{perf_generic_sol1}
P = \tr{(\Sigma_Q \Psi )},
\end{equation}
where 
\begin{equation} \label{Sigma}
\Sigma_Q = \tilde{Q} \Sigma_0 \tilde{Q}^*, \, \Sigma_0 = \bold{w}_0 \bold{w}_0^*,
\end{equation}
and the matrix $\Psi$ is partitioned as
%
$
%
%
%
\Psi = \left[ \Psi_{kl} \right]_{2 \leq k,l \leq  m}. 
$
%

Furthermore, the entry $(q, b)$ of the matrix $\Psi_{kl}$ for $k, l ~=~ 2,\dots, m$ is given by
\begin{equation} \label{psi_ik}
\left[ \Psi_{kl} \right]_{q b} = \sum_{p=1}^{q} \sum_{a=1}^{b} \nu_{d_k+p, d_l+a} 
\left \langle \tilde{h}_{a b}^{(l)}(t), \tilde{h}_{p q}^{(k)}(t) 
\right \rangle_{\mathcal{L}_2},
\end{equation}
where
\begin{equation}  \label{L2_scalar}
\left \langle \tilde{h}_{a b}^{(l)}(t), \tilde{h}_{p q}^{(k)}(t) 
\right \rangle_{\mathcal{L}_2} = 
\int_0^{\infty} \overline{ \tilde{h}_{p q}^{(k)} (t) } \tilde{h}_{a b}^{(l)}(t) dt.
\end{equation}
Here the indices $q = 1,\dots,n_k$ and $b = 1,\dots,n_l$ are determined by the Jordan block sizes $n_k$ and $n_l$. Terms of the form in
 \eqref{nu_eta_kappa} 
appear in the summand of \eqref{psi_ik} and their indices take values larger than the sum of the previous Jordan block sizes, namely $d_k = \sum_{i=1}^{k-1} n_i$ and $d_l = \sum_{i=1}^{l-1} n_i$.
\end{lemma}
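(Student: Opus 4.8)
The plan is to recognize that the integrand of \eqref{perf_generic} is a scalar, rewrite it as a trace, pull the time-independent rank-one factor out of the integral, and then read off the entries of the resulting matrix using the block-diagonal structure of $\tilde{H}$ together with the upper-triangular Toeplitz form supplied by Lemma \ref{lemma1}. Note that the reduction to the non-consensus indices $k,l = 2,\dots,m$ is already built into \eqref{perf_generic} through \eqref{H_x_hat} (Assumption \ref{assum1}), so no further handling of the $k=1$ block is needed.

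First I would observe that $\bold{w}_0^* \tilde{Q}^* \tilde{H}(t)^* \tilde{N} \tilde{H}(t) \tilde{Q} \bold{w}_0$ is a scalar and therefore equals its own trace. Applying the cyclic property to collect the factor $\tilde{Q}\bold{w}_0 \bold{w}_0^* \tilde{Q}^* = \Sigma_Q$ (recall \eqref{Sigma}) and then moving the constant $\Sigma_Q$ outside the integral gives
\begin{equation*}
P = \tr\!\left( \Sigma_Q \int_0^\infty \tilde{H}(t)^* \tilde{N} \tilde{H}(t)\, dt \right),
\end{equation*}
so that it suffices to identify $\Psi := \int_0^\infty \tilde{H}(t)^* \tilde{N} \tilde{H}(t)\, dt$ with the matrix in the statement. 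Convergence of the integral and the legality of interchanging it with the finite matrix products are guaranteed by the input-output stability established earlier (Propositions \ref{prop4} and \ref{prop3}).

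Next I would partition $\tilde{N}$ into blocks $\tilde{N}_{kl}$ conforming to the Jordan structure \eqref{jordanL}. Since $\tilde{H} = \bdiag(\tilde{H}_k)$ by \eqref{T_tilde}, the $(k,l)$ block of $\tilde{H}^* \tilde{N} \tilde{H}$ is exactly $\tilde{H}_k(t)^* \tilde{N}_{kl}\tilde{H}_l(t)$, so $\Psi_{kl} = \int_0^\infty \tilde{H}_k^* \tilde{N}_{kl}\tilde{H}_l\, dt$. Expanding the matrix product entrywise, the $(q,b)$ entry equals $\sum_{p,a}\overline{\tilde{h}_{pq}^{(k)}}\,[\tilde{N}_{kl}]_{pa}\,\tilde{h}_{ab}^{(l)}$. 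Two facts finish the computation: by Lemma \ref{lemma1} each $\tilde{H}_k$ is upper triangular, so $\tilde{h}_{pq}^{(k)}=0$ unless $p\le q$ and $\tilde{h}_{ab}^{(l)}=0$ unless $a\le b$, which truncates the sums to $p=1,\dots,q$ and $a=1,\dots,b$; and the block entry $[\tilde{N}_{kl}]_{pa}$ coincides, after accounting for the offsets $d_k,d_l$ and the shift built into \eqref{nu_eta_kappa}, with $\nu_{d_k+p,\,d_l+a}$. Moving the integral inside the finite double sum and invoking the definition \eqref{L2_scalar} of the $\mathcal{L}_2$ inner product then yields \eqref{psi_ik}, completing the proof of \eqref{perf_generic_sol1}.

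The main obstacle is purely the index bookkeeping: correctly matching the local block coordinates $(p,a,q,b)$ to the global mode indices via $d_k=\sum_{i=1}^{k-1} n_i$ and the one-index shift implicit in \eqref{nu_eta_kappa} (which arises because $\tilde{N}$ is indexed only over the non-consensus modes $2,\dots,n$), and tracking which factor is conjugated so that the inner product appears in the stated order $\langle \tilde{h}_{ab}^{(l)}(t), \tilde{h}_{pq}^{(k)}(t)\rangle_{\mathcal{L}_2}$. No analytic difficulty arises beyond the convergence already secured by the stability results.
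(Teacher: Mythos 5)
Your proposal is correct and follows essentially the same route as the paper's proof: take the trace, use its cyclic property to isolate $\Sigma_Q$, partition $\tilde{N}$ conformally with the Jordan blocks so that $\Psi_{kl} = \int_0^\infty \tilde{H}_k(t)^* \tilde{N}_{kl} \tilde{H}_l(t)\,dt$, and expand entrywise. You simply spell out the steps the paper compresses into ``direct multiplication and interchanging integration with summation,'' namely the truncation of the sums via the upper-triangular form from Lemma \ref{lemma1} and the index matching with \eqref{nu_eta_kappa}.
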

%
%
\begin{rem} \label{rem2}
For the special case in which $L$ is diagonalizable each Jordan block is a scalar, i.e. $n_k = 1$, and \eqref{psi_ik} leads to
\begin{equation} \nonumber
\Psi_{kl} = \nu_{kl} 
\left \langle \tilde{h}^{(l)}(t), \tilde{h}^{(k)}(t) 
\right \rangle_{\mathcal{L}_2}.
\end{equation}
%
Here we dropped the subscripts of $\tilde{h}_{p q}^{(k)}$ for simplicity. 
The case with diagonalizable $L$ was studied in \cite{PaganiniMallada2017, PaganiniMallada2019} and Lemma \ref{lemma2} provides a generalization to the case of arbitrary Jordan block size $n_k$ for $k=2, \dots, m$.
\end{rem}
\begin{proof}[Proof of Lemma \ref{lemma2}]
Taking the trace of both sides of \eqref{perf_generic} and using the permutation property of the trace, we have
%
$
P = \tr \left( \tilde{Q} \bold{w}_0 \bold{w}_0^* \tilde{Q}^*\Psi \right),
$
%
where $\Psi (t) = \int_{0}^{\infty} \tilde{H}(t)^* \tilde{N} \tilde{H} (t) dt$. Partitioning $\tilde{N}$ conformally so that its $(k,l)$ block is given by $\tilde{N}_{kl}$, one can write
\begin{equation} \label{psi_kl_full}
\Psi_{kl} = \int_{0}^{\infty}
\tilde{H}_k (t) ^ *
\tilde{N}_{kl}
\tilde{H}_l (t)  dt,
\end{equation}
for $k,l = 2,\dots,m$. Direct multiplication of the matrices in the integral argument and interchanging the order of integration with the summation gives the desired result.
\end{proof}
%
\begin{rem} \label{rem_perf_real}
Since $\tilde{N} = \tilde{N} ^ *$, i.e. $\tilde{N}_{kl} = \tilde{N}_{lk} ^ *$, \eqref{psi_kl_full} leads to $\Psi_{kl} = \Psi_{lk} ^ *$, therefore $\Psi$ is Hermitian. The fact that $\Sigma_{Q}$ in \eqref{Sigma} is also Hermitian leads to 
%
$
\tr{(\Sigma_Q \Psi )} =  \tr{\left[ (\Sigma_Q \Psi)^* \right]}
= \overline{\tr{(\Sigma_Q \Psi )}},
$
which verifies that $P$ in \eqref{perf_generic_sol1} is real as expected.
\end{rem}

As Lemma \ref{lemma2} 
indicates, 
\eqref{perf_generic_sol1} can be expressed in closed-form if the integral in \eqref{L2_scalar} can be evaluated. In what follows, we derive time-domain realizations for the transfer functions $\tilde{h}_{p q}^{(k)}(s)$ in \eqref{TF_block}, which enables the evaluation of this integral. This 
leads to 
\defaultColor{our first major contribution, which we state next for
single and double-integrator networks \eqref{firstmatrix} and \eqref{secondmatrix}.}

\subsection{Performance of Single-Integrator Networks}

We now present the main result of this section for the single-integrator network \eqref{firstmatrix}. The following result provides a closed-form solution for the performance metric $P$ 
in \eqref{perf_x}. 
%
\begin{thm} \label{thm1}
Consider the single-integrator network \eqref{firstmatrix}. The performance metric $P$ in \eqref{perf_x} for the system $T$ given by \eqref{H_cases_1} is
%
%
$
P = \tr{(\Sigma_Q \Psi)}.
$ 
%
The elements of $\Psi$ are defined in \eqref{psi_ik} and the scalar product in \eqref{L2_scalar} is given by
%
%
\begin{equation}  \label{L2_scalar_sol_consensus}
\left \langle \tilde{h}_{a b}^{(l)}(t), \tilde{h}_{p q}^{(k)}(t) 
\right \rangle_{\mathcal{L}_2} = 
\frac{{(-1)}^{b - a + q - p} \Phi }
{ {\left( \overline{\lambda_k} + \lambda_l \right)}^{b - a + q - p + 1} },
\end{equation} 
where 
$\Phi = 
\frac{(b - a + q - p)!}{(b - a)! (q - p)!} $.
\end{thm}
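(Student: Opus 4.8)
The plan is to observe that the structural part of the claim, namely $P=\tr(\Sigma_Q\Psi)$ with $\Psi$ assembled entry-wise from $\mathcal{L}_2$ scalar products as in \eqref{psi_ik}, is already delivered verbatim by Lemma \ref{lemma2}; it specializes to the single-integrator case with no extra work. Hence the entire content of the theorem reduces to evaluating the single scalar product \eqref{L2_scalar} in closed form and showing it equals \eqref{L2_scalar_sol_consensus}. The finiteness of $P$ is guaranteed by the input-output stability of $T$ in \eqref{H_cases_1}, established in Proposition \ref{prop4}.

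First I would invoke Corollary \ref{cor1} to read off the entries of the block $\tilde{H}_k$: for $p\le q$ the $(p,q)$ entry is $\tilde{h}_{pq}^{(k)}(s)=(-1)^{q-p}h_k(s)^{q-p+1}=(-1)^{q-p}(s+\lambda_k)^{-(q-p+1)}$, and analogously $\tilde{h}_{ab}^{(l)}(s)=(-1)^{b-a}(s+\lambda_l)^{-(b-a+1)}$. Next I would pass to the time domain using the standard inverse Laplace transform $\mathcal{L}^{-1}\{(s+\lambda)^{-m}\}(t)=\frac{t^{m-1}}{(m-1)!}e^{-\lambda t}$, which yields $\tilde{h}_{pq}^{(k)}(t)=(-1)^{q-p}\frac{t^{q-p}}{(q-p)!}e^{-\lambda_k t}$ and the analogous expression for $\tilde{h}_{ab}^{(l)}(t)$. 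Since $t$ is real, conjugation acts only on the exponential, so the integrand of \eqref{L2_scalar} collapses to $(-1)^{(q-p)+(b-a)}\frac{t^{(q-p)+(b-a)}}{(q-p)!(b-a)!}e^{-(\overline{\lambda_k}+\lambda_l)t}$.

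The heart of the computation is then a single application of the Gamma integral $\int_0^{\infty}t^{N}e^{-\mu t}\,dt=N!/\mu^{N+1}$ with $N=(q-p)+(b-a)$ and $\mu=\overline{\lambda_k}+\lambda_l$. Substituting and recognizing that $N!/[(q-p)!(b-a)!]=\Phi$ reproduces \eqref{L2_scalar_sol_consensus} directly, after which the claimed formula for $P$ follows from Lemma \ref{lemma2}. The only genuine subtlety I anticipate is justifying convergence of the Gamma integral, i.e. that $\R(\overline{\lambda_k}+\lambda_l)=\R[\lambda_k]+\R[\lambda_l]>0$. This holds for all $k,l\in\{2,\dots,m\}$ because the globally reachable node assumption forces $\R[\lambda_k]>0$ for every nonzero Laplacian eigenvalue, exactly as used before \eqref{block_diag} and as underlies Proposition \ref{prop4}; the consensus mode $k=l=1$ is excluded by Assumption \ref{assum1}. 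Everything else is routine bookkeeping of signs and factorials.
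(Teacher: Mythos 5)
Your proposal is correct and follows essentially the same route as the paper's proof: read off $\tilde{h}_{pq}^{(k)}(s)=(-1)^{q-p}(s+\lambda_k)^{-(q-p+1)}$ from Corollary \ref{cor1}, obtain $\tilde{h}_{pq}^{(k)}(t)=(-1)^{q-p}\frac{t^{q-p}}{(q-p)!}e^{-\lambda_k t}$, and evaluate \eqref{L2_scalar} with the Gamma integral; the only cosmetic difference is that you quote the inverse Laplace transform of $(s+\lambda)^{-m}$ directly, whereas the paper derives it via a Jordan-form state-space realization and its matrix exponential. Your explicit remark on convergence ($\R[\lambda_k]+\R[\lambda_l]>0$ from the globally reachable node assumption) is a point the paper leaves implicit, and is a welcome addition.
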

 \begin{proof}
Using the result of Corollary \ref{cor1} and the notation in \eqref{TF_block} 
 \begin{equation} \nonumber
 \tilde{h}_{p q}^{(k)}(s) = 
 {(-1)}^{q-p}
 \frac{1}
 { {\left( s + \lambda_k \right)}^{q-p+1}}.
 \end{equation}
Here,  
$\frac{1}
 { {\left( s + \lambda_k \right)}^{q-p+1}}$
has the following realization $\left( \mathcal{A}_{k,\delta}, \mathcal{B}_{k,\delta}, \mathcal{C}_{k,\delta} \right)$ in JCF 
\begin{equation} \label{Jordan_realization_3}
\mathcal{A}_{k,\delta} =
\mathcal{J} (-\lambda_k, \delta),
\end{equation}
\begin{equation} \nonumber
\mathcal{B}_{k,\delta} = \bigg[ \underbrace{\begin{matrix} 0 & \dots & 1 \end{matrix}}_{1 \times \delta} \bigg]^\intercal, 
\mathcal{C}_{k,\delta} = \bigg[ \underbrace{\begin{matrix} 1 & \dots & 0 \end{matrix}}_{1 \times \delta} \bigg],
\end{equation}
where $\mathcal{J} (-\lambda_k, \delta)$ denotes the size-$\delta$ Jordan block with the eigenvalue $-\lambda_k$ and $\delta = q - p + 1$. Then, $\tilde{h}_{p q}^{(k)}(t)$ is given by
\begin{equation} \label{h_impResp_sing}
\tilde{h}_{p q}^{(k)}(t) = 
 {(-1)}^{q - p}
C_{k,\delta} e^{A_{k,\delta} t} B_{k,\delta}.
\end{equation}
where 
%
\begin{equation} \label{jordan_exp}
e^{A_{k,\delta} t } = 
e^{\mathcal{J} (-\lambda_k, \delta) t} 
=
e^{ - \lambda_k t}
\begin{bmatrix}
1 & t & \dots & \frac{t^{(\delta - 1)}}{(\delta - 1)!} \\
& \ddots & \ddots & \vdots \\
& & & t \\
& & & 1
\end{bmatrix}.
\end{equation}
Combining \eqref{Jordan_realization_3} and \eqref{h_impResp_sing} leads to
\begin{equation} \nonumber
\tilde{h}_{p q}^{(k)}(t) = 
{(-1)}^{q - p}
e^{ -\lambda_k t}
 \frac{t^{q - p}}{(q - p)!}.
\end{equation}
The proof is completed by evaluating the integral in \eqref{L2_scalar} using the fact that
%
%
$\int_{0}^{\infty} t^n e^{-\lambda t} dt = 
\frac{n!}{\lambda^{n+1}}$ 
for $\lambda \in \mathbb{C}$, $\R [\lambda] > 0$.
\end{proof}

The denominator of the right-hand side of \eqref{L2_scalar_sol_consensus} is given by a power of the sum of the graph Laplacian eigenvalues that are associated with possibly distinct Jordan blocks $k$ and $l$. The power of this term depends on the Jordan block sizes $n_k$ and $n_l$ through the indices $q$ and $b$ and it increases as the Jordan block size increases. This indicates that performance is affected not only by the network size, but also \defaultColor{by} the graph Laplacian spectrum and the size of the individual Jordan blocks.

We next present the 
\defaultColor{analogous result} 
for the double-integrator network \eqref{secondmatrix}.

\subsection{Performance of Double-Integrator Networks} 
\label{PerfArbDouble}

We now provide the closed-form solution for the performance metric $P$ in \eqref{perf_x} for the double-integrator network \eqref{secondmatrix}. A similar approach to the one in Theorem \ref{thm1} 
\defaultColor{is taken}
but the computation of the impulse response functions $\tilde{h}_{p q}^{(k)}(t)$ is more involved. \defaultColor{We compute these 
functions through Lemmas \ref{lemma3} and \ref{lemma4} in the Appendix.
Then by evaluating the integral in \eqref{L2_scalar}, the 
result of this 
subsection 
is stated as follows.}

\begin{thm} \label{thm2}
Consider the double-integrator network \eqref{secondmatrix}. 
Let $\rho_1^{(k)}$ and $\rho_2^{(k)}$ denote the roots of 
\begin{equation} \label{char_eq_s2}
s^2 + (k_d + \gamma_d \lambda_k) s + k_p + \gamma_p  \lambda_k = 0.
\end{equation}
The performance metric $P$ in \eqref{perf_x} for the system $T$ given by \eqref{H_cases_1} or \eqref{H_cases_2} is
%
$
P = \tr{(\Sigma_Q \Psi)}
$,
%
%
%
where $\Psi$ is given element-wise by \eqref{psi_ik} and the scalar product in \eqref{L2_scalar} is as follows:
%
%

\noindent
If $\rho_1^{(k)} \neq \rho_2^{(k)}$ and $\rho_1^{(l)} \neq \rho_2^{(l)}$
%
%
\begin{align}  \label{L2_scalar_sol_1}
&\left \langle \tilde{h}_{a b}^{(l)}(t), \tilde{h}_{p q}^{(k)}(t) 
\right \rangle_{\mathcal{L}_2} = 
\sum_{\zeta=1}^{\sigma} \sum_{r=1}^{\upsilon}
\frac{ \Phi_{\zeta r} (\sigma, \upsilon) \overline{c_{\zeta}^{(k)}} c_{r}^{(l)} }
{ {\left( \overline{\rho_1^{(k)}} + \rho_1^{(l)} \right)}^{\sigma + \upsilon - \zeta - r + 1} } \nonumber \\
&+ 
\frac{ \Phi_{\zeta r} (\sigma, \upsilon) \overline{c_{\zeta}^{(k)}} c_{r+\upsilon}^{(l)} }
{ {\left( \overline{\rho_1^{(k)}} + \rho_2^{(l)} \right)}^{\sigma + \upsilon - \zeta - r + 1} } \nonumber
\ + \ 
\frac{ \Phi_{\zeta r} (\sigma, \upsilon) \overline{c_{\zeta+\sigma}^{(k)}} c_{r}^{(l)} }
{ {\left( \overline{\rho_2^{(k)}} + \rho_1^{(l)} \right)}^{\sigma + \upsilon - \zeta - r + 1} } \nonumber \\
&+ 
\frac{ \Phi_{\zeta r} (\sigma, \upsilon) \overline{c_{\zeta+\sigma}^{(k)}} c_{r+\upsilon}^{(l)} }
{ {\left( \overline{\rho_2^{(k)}} + \rho_2^{(l)} \right)}^{\sigma + \upsilon - \zeta - r + 1} },
\end{align} 
%
If $\rho_1^{(k)} \neq \rho_2^{(k)}$ and $\rho_1^{(l)} = \rho_2^{(l)} = \rho^{(l)}$
%
\begin{align}  \label{L2_scalar_sol_2}
\! \! \! \!
\left \langle \tilde{h}_{a b}^{(l)}(t), \tilde{h}_{p q}^{(k)}(t) 
\right \rangle_{\mathcal{L}_2} &= 
\sum_{\zeta=1}^{\sigma} \sum_{r=1}^{2\upsilon}
\frac{{(-1)}^{\upsilon} \Phi_{\zeta r} (\sigma, 2\upsilon) \overline{c_{\zeta}^{(k)}} c_{r}^{(l)} }
{ {\left( \overline{\rho_1^{(k)}} + \rho^{(l)} \right)}^{\sigma + 2\upsilon - \zeta - r + 1} } \nonumber \\
&+ 
\frac{{(-1)}^{\upsilon} \Phi_{\zeta r} (\sigma, 2\upsilon) \overline{c_{\zeta+\sigma}^{(k)}} c_{r}^{(l)} }
{ {\left( \overline{\rho_2^{(k)}} + \rho^{(l)} \right)}^{\sigma + 2\upsilon - \zeta - r + 1} },
\end{align} 
%
If $\rho_1^{(k)} = \rho_2^{(k)} = \rho^{(k)}$ and $\rho_1^{(l)} = \rho_2^{(l)} = \rho^{(l)}$
%
\begin{equation}  \label{L2_scalar_sol_3}
\! \!
\left \langle \tilde{h}_{a b}^{(l)}(t), \tilde{h}_{p q}^{(k)}(t) 
\right \rangle_{\mathcal{L}_2} \! = \!
\sum_{\zeta=1}^{2\sigma} \sum_{r=1}^{2\upsilon}
\frac{{(-1)}^{\sigma + \upsilon} \Phi_{\zeta r} (2\sigma, 2\upsilon) \overline{c_{\zeta}^{(k)}} c_{r}^{(l)} }
{ {\left( \overline{\rho^{(k)}} + \rho^{(l)} \right)}^{2\sigma + 2\upsilon - \zeta - r + 1} } \! ,
\end{equation} 
%
%
where 
$\sigma = q - p + 1$,
$\upsilon = b - a + 1$
and
$\Phi_{\zeta r} (\sigma, \upsilon) ~=~ 
{(-1)}^{ 1 - \zeta - r}
\frac{(\sigma+\upsilon - \zeta - r)!}{(\sigma - \zeta)! (\upsilon - r)!} $. 

The coefficients $c_{\zeta}^{(k)}$ are given in the Appendix by Lemma~\ref{lemma3} if $\rho_1^{(k)} \neq \rho_2^{(k)}$ and by Lemma~\ref{lemma4} if $\rho_1^{(k)} = \rho_2^{(k)}$. 
%
\end{thm}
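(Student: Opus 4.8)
The plan is to reduce every scalar product \eqref{L2_scalar} to the elementary integral $\int_0^\infty t^N e^{-\alpha t}\,dt = N!/\alpha^{N+1}$, which holds whenever $\R[\alpha]>0$, by first expressing the transfer-function entries $\tilde{h}_{pq}^{(k)}(t)$ explicitly in the time domain. I would begin with Corollary \ref{cor2}: combining its Toeplitz form with the definition \eqref{TF_block} and the factor $r(s)$ in \eqref{T_tilde}, and writing $\sigma := q-p+1$, the $(p,q)$ entry factors as
\[
\tilde{h}_{pq}^{(k)}(s) = (-1)^{q-p}\,\Omega_{k,\sigma}(s),
\]
where $\Omega_{k,\delta}$ is exactly the transfer function analyzed in Lemmas \ref{lemma3} and \ref{lemma4}. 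This identification splits the analysis according to whether the roots $\rho_1^{(k)},\rho_2^{(k)}$ of \eqref{char_eq_s2} are distinct or repeated, matching the two lemmas.

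Next I would move to the time domain using the Jordan realizations $(\mathcal{A}_{k,\delta},\mathcal{B}_{k,\delta},\mathcal{C}_{k,\delta})$ supplied by those lemmas together with the matrix exponential \eqref{jordan_exp}. Writing $\Omega_{k,\sigma}(t)=\mathcal{C}_{k,\sigma}e^{\mathcal{A}_{k,\sigma}t}\mathcal{B}_{k,\sigma}$ and selecting the last column of each Jordan block through $\mathcal{B}_{k,\sigma}$, the impulse response expands as a finite sum of terms $c_{\zeta}^{(k)}\tfrac{t^{\sigma-\zeta}}{(\sigma-\zeta)!}e^{\rho_i^{(k)}t}$ in the distinct-root case (and $t^{2\sigma-\zeta}$ with a single exponent $\rho^{(k)}$ in the repeated case); the analogous expansion for $\tilde{h}_{ab}^{(l)}(t)$ uses $\upsilon := b-a+1$. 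Substituting both expansions into \eqref{L2_scalar}, conjugating the $k$-factor, and interchanging summation with integration reduces the inner product to a double sum over $\zeta$ and $r$ of basic integrals of the form $\int_0^\infty t^{(\sigma-\zeta)+(\upsilon-r)}e^{(\overline{\rho_i^{(k)}}+\rho_j^{(l)})t}\,dt$. This is exactly where Proposition \ref{prop3} is needed: for $k\in\mathcal{N}_{obsv}$ the roots $\rho_i^{(k)}$ lie in the open left half-plane, so $\R[\overline{\rho_i^{(k)}}+\rho_j^{(l)}]<0$ and each integral converges to the closed form above.

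The three displayed identities \eqref{L2_scalar_sol_1}--\eqref{L2_scalar_sol_3} then correspond to the distinct/distinct, distinct/repeated, and repeated/repeated combinations for the blocks $k$ and $l$, which produce respectively four, two, and one family of exponents $\overline{\rho_i^{(k)}}+\rho_j^{(l)}$ in the denominators; the remaining repeated/distinct case follows from the Hermitian symmetry $\Psi_{kl}=\Psi_{lk}^*$ already noted in Remark \ref{rem_perf_real}. I expect the main obstacle to be the sign and factorial bookkeeping rather than any analytic subtlety. Evaluating $\int_0^\infty t^N e^{-\alpha t}\,dt$ contributes a factor $(-1)^{N+1}N!$ with $N+1 = \sigma+\upsilon-\zeta-r+1$ (and its $2\upsilon$, $2\sigma$ analogues), which must be combined with the prefactor $(-1)^{q-p+b-a}=(-1)^{\sigma+\upsilon}$ carried over from the $\Omega$-to-$\tilde{h}$ conversion. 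Checking these parities is precisely what collapses the net sign to $\Phi_{\zeta r}$ in \eqref{L2_scalar_sol_1}, to $(-1)^{\upsilon}\Phi_{\zeta r}(\sigma,2\upsilon)$ in \eqref{L2_scalar_sol_2}, and to $(-1)^{\sigma+\upsilon}\Phi_{\zeta r}(2\sigma,2\upsilon)$ in \eqref{L2_scalar_sol_3}, while the ratio $N!/((\sigma-\zeta)!(\upsilon-r)!)$ assembles the binomial part of $\Phi_{\zeta r}$. Once the indices are aligned, collecting the four/two/one summand families yields the stated closed forms, and substitution into $P=\tr(\Sigma_Q\Psi)$ via \eqref{psi_ik} completes the proof.
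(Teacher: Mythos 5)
Your proposal is correct and follows essentially the same route as the paper's own proof: identify $\tilde{h}_{pq}^{(k)}(s)=(-1)^{\sigma-1}\Omega_{k,\sigma}(s)$ via Corollary \ref{cor2}, expand $\Omega_{k,\sigma}(t)$ in the time domain through the Jordan realizations of Lemmas \ref{lemma3} and \ref{lemma4}, and evaluate the resulting products term by term with $\int_0^\infty t^n e^{\lambda t}\,dt=(-1)^{n+1}n!/\lambda^{n+1}$, with the sign bookkeeping collapsing to $\Phi_{\zeta r}$ exactly as you describe. Your explicit appeal to Proposition \ref{prop3} for convergence and to Hermitian symmetry for the repeated/distinct case are minor elaborations the paper leaves implicit, not a different method.
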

\defaultColor{
\begin{rem}
For 
double-integrator networks, 
the scalar products in 
\eqref{L2_scalar_sol_1} -
\eqref{L2_scalar_sol_3}
depend on both 
the control gains and the eigenvalues of 
$L$,
via 
the roots of \eqref{char_eq_s2} 
and 
the coefficients $c_{\zeta}^{(k)}$.
In contrast, for 
single-integrator networks, 
eigenvalues of $L$ appear explicitly in the analogous expression
in 
\eqref{L2_scalar_sol_consensus}.
\end{rem}
}
\defaultColor{
\begin{proof}[Proof of Theorem \ref{thm2}]
Using the result of Corollary \ref{cor2}, the notation in \eqref{TF_block} and \eqref{phi_TF}, $\tilde{h}_{p q}^{(k)}(t)$ is given by
\begin{equation} \label{h_impResp}
\tilde{h}_{p q}^{(k)}(t) = {(-1)}^{\sigma -1} \Omega_{k, \sigma} (t).
\end{equation}
If $\rho_1^{(k)} \neq \rho_2^{(k)}$, the realization in \eqref{Jordan_realization} can be used to calculate
\begin{equation} \nonumber
\Omega_{k, \sigma} (t) = C_{k,\sigma} e^{A_{k,\sigma} t} B_{k,\sigma}, \\
\end{equation}
where $e^{A_{k,\sigma} t} = 
%
%
\bdiag{ \left(
e^{\mathcal{J} (\rho_i^{(k)}, \, \sigma)t} \right) }_{i=1,2}
$
and 
$ e^{\mathcal{J} (\rho_i^{(k)}, \, \sigma) t} $ can be expanded as in \eqref{jordan_exp}.
%
%
Then, using \eqref{h_impResp} and the definitions of $C_{k,\sigma}$ and $B_{k,\sigma}$ in \eqref{Jordan_realization}
\begin{equation} \nonumber
\tilde{h}_{p q}^{(k)}(t) = 
{(-1)}^{\sigma -1}
\sum_{\zeta=1}^{\sigma} 
\left( c_{\zeta}^{(k)} e^{ \rho_1^{(k)} t} + c_{\zeta + \sigma}^{(k)} e^{ \rho_2^{(k)} t}  \right)
 \frac{t^{\sigma - \zeta}}{(\sigma - \zeta)!}.
\end{equation}
If $\rho_1^{(k)} = \rho_2^{(k)} = \rho^{(k)}$, a similar argument combined with \eqref{Jordan_realization_2} leads to 
\begin{equation} \nonumber
\tilde{h}_{p q}^{(k)}(t) = 
{(-1)}^{\sigma -1}
\sum_{\zeta=1}^{2\sigma} 
c_{\zeta}^{(k)} e^{ \rho^{(k)} t}
 \frac{t^{2\sigma - \zeta}}{(2\sigma - \zeta)!}.
\end{equation}
The proof is completed by evaluating the integral in \eqref{L2_scalar} using the fact that 
$\int_{0}^{\infty} t^n e^{\lambda t} dt = {(-1)}^{n+1}\frac{n!}{\lambda^{n+1}}$ 
for $\lambda \in \mathbb{C}$, $\R [\lambda] ~<~ 0$.
\end{proof}
Theorems \ref{thm1} and \ref{thm2} provide closed-form solutions for the performance metric \eqref{perf_x} which consist of terms that: (a) are geometric, i.e. terms that depend on the input direction, the eigenvalues and the eigenvectors of $M$ in \eqref{matrix_M} and the eigenvectors of $L$ as in \eqref{nu_eta_kappa} and \eqref{Sigma}; and (b) terms that depend on the closed-loop dynamics of the system, as in \eqref{L2_scalar}. Overall, performance is given by a linear combination of the 
entries of the matrix $\Psi$ in
\eqref{psi_ik}, weighted by the 
entries 
of the matrix $\Sigma_Q$ in \eqref{Sigma}. Therefore, in the most general case, it is not straightforward to deduce the individual effect of properties such as network size, graph topology and the 
spectrum
of the output matrix 
for an arbitrary system. 
}

In the next section, we study special cases to provide 
insight on the effect of edge directionality on 
performance.

\section{Digraphs with Diagonalizable Laplacian Matrices} \label{diagL}

\defaultColor{
We now consider 
the class of graphs that emit
diagonalizable Laplacian matrices;
and its subclass of
normal Laplacian matrices.
The closed-form solutions derived here will be used in sections \ref{NormalL} and \ref{CentDistFeed}
to 
\defaultColor{provide further insights on}
the results of the last section.
%
Namely, we will show that
performance is determined by the interplay between
edge directionality
and control strategy (judicious selection of control gains).
%
}

\subsection{Single-Integrator Networks}

The following theorem provides the main result of this section for the single-integrator network \eqref{firstmatrix}.
%
\begin{thm} [\defaultColor{Single-Integrator, Diagonalizable Laplacian}]
\label{thm3}
Consider the single-integrator network \eqref{firstmatrix} and suppose that $L$ is diagonalizable. Then, the metric $P$ in \eqref{perf_x} for the system $T$ given by \eqref{H_cases_1} is
%
%
$
P = \tr{(\Sigma_Q \Psi)}
$,
%
where $\bold{j}^2 = -1$ and
\begin{equation} \label{psi_ik_diagable}
\Psi_{kl} =
\nu_{kl} 
\frac{\R[\lambda_k]+\R[\lambda_l] + \bold{j} (\I[\lambda_k]-\I[\lambda_l])}
{(\R[\lambda_k]+\R[\lambda_l])^2 +(\I[\lambda_k]-\I[\lambda_l])^2}.
%
\end{equation}
%
\end{thm}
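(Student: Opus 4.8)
The plan is to specialize the general single-integrator formula of Theorem~\ref{thm1} to the diagonalizable setting and then rationalize the resulting complex fraction; no new machinery is needed. First I would note that when $L$ is diagonalizable every Jordan block is a scalar, so $n_k = n_l = 1$ for all $k,l$ and the indices appearing in \eqref{psi_ik} and \eqref{L2_scalar_sol_consensus} are forced to the single value $p=q=a=b=1$. As already recorded in Remark~\ref{rem2}, this collapses the block $\Psi_{kl}$ to the scalar $\Psi_{kl} = \nu_{kl}\,\langle \tilde{h}^{(l)}(t),\tilde{h}^{(k)}(t)\rangle_{\mathcal{L}_2}$, and the overall identity $P = \tr(\Sigma_Q\Psi)$ is inherited verbatim from Theorem~\ref{thm1}. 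So the entire content reduces to evaluating one scalar product and simplifying it.

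Next I would evaluate \eqref{L2_scalar_sol_consensus} at $p=q=a=b=1$. Here $b-a+q-p = 0$, so the sign factor $(-1)^{b-a+q-p}$ equals $1$, the combinatorial factor $\Phi = 0!/(0!\,0!) = 1$, and the exponent $b-a+q-p+1 = 1$. Hence the scalar product is simply $\tfrac{1}{\overline{\lambda_k}+\lambda_l}$, giving the intermediate form $\Psi_{kl} = \tfrac{\nu_{kl}}{\overline{\lambda_k}+\lambda_l}$. It then remains to put this over a real denominator: writing $\lambda_k = \R[\lambda_k] + \mathbf{j}\,\I[\lambda_k]$ and $\lambda_l = \R[\lambda_l] + \mathbf{j}\,\I[\lambda_l]$, I have $\overline{\lambda_k}+\lambda_l = (\R[\lambda_k]+\R[\lambda_l]) + \mathbf{j}(\I[\lambda_l]-\I[\lambda_k])$, and multiplying numerator and denominator by the conjugate of this quantity yields exactly \eqref{psi_ik_diagable}.

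There is no genuine obstacle here; the result is essentially a direct corollary of Theorem~\ref{thm1}. The only point demanding care is sign-tracking in the final rationalization: the real part $\R[\lambda_k]+\R[\lambda_l]$ and the squared modulus $(\R[\lambda_k]+\R[\lambda_l])^2 + (\I[\lambda_k]-\I[\lambda_l])^2$ are symmetric in $k$ and $l$, but the imaginary part of the numerator is not, and one must verify that $-\mathbf{j}(\I[\lambda_l]-\I[\lambda_k])$ coming from the conjugate indeed becomes $+\mathbf{j}(\I[\lambda_k]-\I[\lambda_l])$ so that the asymmetry $\Psi_{kl} = \overline{\Psi_{lk}}$ is respected, consistent with the Hermitian structure noted in Remark~\ref{rem_perf_real}. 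Everything else is bookkeeping of the index collapse $n_k=1$.
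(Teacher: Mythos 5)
Your proposal is correct and follows essentially the same route as the paper: specialize Lemma~\ref{lemma2} and Theorem~\ref{thm1} to scalar Jordan blocks to get $\Psi_{kl} = \nu_{kl}/(\overline{\lambda_k}+\lambda_l)$, then rationalize the denominator. The sign-tracking you flag in the final step is exactly right, and the Hermitian-consistency check $\Psi_{kl}=\overline{\Psi_{lk}}$ is a sound (if implicit in the paper) sanity check.
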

\begin{proof}
The fact that $L$ is diagonalizable leads to $m=n$, i.e. all Jordan blocks are scalars. Then using \eqref{psi_ik} from Lemma~\ref{lemma2}, we have
$
\Psi_{kl}  = 
\nu_{kl} 
\left \langle \tilde{h}_{11}^{(l)}(t), \tilde{h}_{11}^{(k)}(t) 
\right \rangle_{\mathcal{L}_2}
$.
Also,
\eqref{L2_scalar_sol_consensus} from Theorem~\ref{thm1} gives
$
\left \langle \tilde{h}_{11}^{(l)}(t), \tilde{h}_{11}^{(k)}(t) 
\right \rangle_{\mathcal{L}_2} = 
\frac{1} {\overline{\lambda_k} + \lambda_l }. 
$
Combining these facts and re-arranging terms yields the result. 
\end{proof}
Note that the diagonal terms $\Psi_{kk}$ are real and the cross-terms $\Psi_{kl}$ for $k \neq l$ are possibly imaginary in \eqref{psi_ik_diagable}. However, $P$ is guaranteed to be real due to Remark \ref{rem_perf_real}.

%
\subsubsection*{Normal Laplacian Matrices}

We next focus on systems over digraphs that emit normal weighted Laplacian matrices. 
First recall Definition \ref{obsv_indices}, which introduced the set of observable indices $\mathcal{N}_{obsv}$ in \eqref{obsv_ind_eqn}. If $L$ is normal therefore diagonalizable, we can re-state this set as
%
\begin{equation*}
\mathcal{N}_{obsv} =
\left\{
k \in \{ 2, \dots, n \}
\mid
C \bold{r}_k \neq 0 
\right\},
\end{equation*}
recalling that 
$\bold{r}_k$ 
denote the right 
eigenvectors of 
$L$ as defined in \eqref{RandQ}.
%
%
We now present two lemmas that will be useful in proving the upcoming results.
%
%
%
\begin{lemma} \label{prop_obsv}
For $k \in \{ 2, \dots, n \}$, the eigenvalue-eigenvector pair $(\mu_k, \boldsymbol{\theta}_k)$ of $M$ in \eqref{matrix_M} satisfies $\mu_k = 0$ if and only if 
$C \boldsymbol{\theta}_k = 0$.
\end{lemma}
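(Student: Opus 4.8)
The plan is to exploit the fact that $M = C^*C$ in \eqref{matrix_M} is a Gram matrix, so that its null space coincides with the null space of $C$; the claim is then just the eigenvector-level restatement of the elementary identity $\ker M = \ker C$ for $M = C^*C$. Since $C \in \mathbb{R}^{q \times n}$, the matrix $M$ is real symmetric and hence unitarily diagonalizable as recorded in the text ($M = \Theta W \Theta^*$ with $\Theta \Theta^* = I$), so each $\boldsymbol{\theta}_k$ may be taken to be a genuine unit-norm eigenvector, in particular $\boldsymbol{\theta}_k \neq 0$. Both implications will reduce to a single quadratic-form identity.

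First I would prove the forward direction $\mu_k = 0 \Rightarrow C\boldsymbol{\theta}_k = 0$. Starting from the eigenvalue relation $M\boldsymbol{\theta}_k = \mu_k \boldsymbol{\theta}_k$ and pairing on the left with $\boldsymbol{\theta}_k^*$, the key identity is
\begin{equation} \nonumber
\lVert C\boldsymbol{\theta}_k \rVert^2 = \boldsymbol{\theta}_k^* C^* C \boldsymbol{\theta}_k = \boldsymbol{\theta}_k^* M \boldsymbol{\theta}_k = \mu_k \lVert \boldsymbol{\theta}_k \rVert^2 = \mu_k .
\end{equation}
Setting $\mu_k = 0$ forces $\lVert C\boldsymbol{\theta}_k \rVert = 0$, hence $C\boldsymbol{\theta}_k = 0$.

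For the reverse direction $C\boldsymbol{\theta}_k = 0 \Rightarrow \mu_k = 0$, I would substitute directly into the definition of $M$, obtaining $M\boldsymbol{\theta}_k = C^*(C\boldsymbol{\theta}_k) = 0$, while the eigenvalue relation gives $M\boldsymbol{\theta}_k = \mu_k \boldsymbol{\theta}_k$. Equating the two and invoking $\boldsymbol{\theta}_k \neq 0$ yields $\mu_k = 0$.

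There is essentially no obstacle here: the statement is the standard equality $\ker M = \ker C$ for $M = C^*C$ transcribed to eigenpairs, and both directions collapse onto the one identity $\boldsymbol{\theta}_k^* M \boldsymbol{\theta}_k = \lVert C\boldsymbol{\theta}_k \rVert^2$. The only point worth stating carefully is that $\boldsymbol{\theta}_k$ is a legitimate nonzero eigenvector, which is guaranteed by the unitary diagonalization of the symmetric matrix $M$ already noted in the text; the proof then holds verbatim for every $k$ in the stated range.
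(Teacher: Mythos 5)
Your proof is correct and follows essentially the same route as the paper's: both directions reduce to the elementary identity $\ker(C^*C)=\ker(C)$, and your reverse implication is verbatim the paper's. The only (cosmetic) difference is in the forward direction, where the paper argues that $C\boldsymbol{\theta}_k$ lies simultaneously in the column space of $C$ and its orthogonal complement, whereas you use the quadratic-form identity $\boldsymbol{\theta}_k^* M \boldsymbol{\theta}_k = \lVert C\boldsymbol{\theta}_k\rVert^2 = \mu_k$ — arguably the cleaner of the two.
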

\begin{proof} 
%
%
Assume for any $k \in \{ 2, \dots, n \}$ that $\mu_k = 0$. Then $0 = M \boldsymbol{\theta}_k = C^T C \boldsymbol{\theta}_k$. This implies that the vector $C \boldsymbol{\theta}_k$ is in the left nullspace of $C$, therefore is orthogonal to the column space of $C$. But $C \boldsymbol{\theta}_k$ also has to be in the column space of $C$ therefore $C \boldsymbol{\theta}_k = 0$. 

Conversely, if $C \boldsymbol{\theta}_k = 0$ for any $k \in \{ 2, \dots, n \}$, then $0 ~=~ M \boldsymbol{\theta}_k = \mu_k \boldsymbol{\theta}_k$ which gives $\mu_k = 0$ since $\boldsymbol{\theta}_k \neq 0$.
%
%
%
\end{proof}
%
%

\begin{lemma} \label{prop_obsv_modes}
Suppose that $L$ is normal. For $k \in \{ 2, \dots, n \}$, $\nu_{kk}$ in \eqref{nu_eta_kappa} satisfies
\begin{enumerate}[wide, labelwidth=!, labelindent=0pt]
\item $\nu_{kk} = 0$ if and only if $k \notin \mathcal{N}_{obsv}$.
\item $\nu_{kk} > 0$ if and only if $k \in \mathcal{N}_{obsv}$.
\end{enumerate}

\end{lemma}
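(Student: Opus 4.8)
The plan is to reduce both equivalences to the single identity $\nu_{kk} = \lVert C\bold{r}_k\rVert^2$, after which the statement follows at once from the restated definition of $\mathcal{N}_{obsv}$ and the nonnegativity of a squared norm. First I would specialize \eqref{nu_eta_kappa} to $\eta = \kappa = k$. Writing out the inner products with the convention $\langle \boldsymbol{\theta}_l, \bold{r}_k\rangle = \bold{r}_k^*\boldsymbol{\theta}_l$ and $\langle \bold{r}_k, \boldsymbol{\theta}_l\rangle = \boldsymbol{\theta}_l^*\bold{r}_k$, the summand becomes $(\bold{r}_k^*\boldsymbol{\theta}_l)(\boldsymbol{\theta}_l^*\bold{r}_k) = \lvert\langle \bold{r}_k, \boldsymbol{\theta}_l\rangle\rvert^2 \geq 0$, so $\nu_{kk} = \sum_{l=2}^{n}\mu_l\,\lvert\langle \bold{r}_k, \boldsymbol{\theta}_l\rangle\rvert^2$. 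The most economical route to the identity, however, is to note that since $L$ is diagonalizable we have $\tilde{R} = \begin{bmatrix}\bold{r}_2 & \cdots & \bold{r}_n\end{bmatrix}$ and $\tilde{N} = \tilde{R}^* M \tilde{R}$ with $M = C^*C$; hence the diagonal entry reads $\nu_{kk} = (\tilde{N})_{k-1,k-1} = \bold{r}_k^* M \bold{r}_k = \bold{r}_k^* C^* C \bold{r}_k = \lVert C\bold{r}_k\rVert^2$. Expanding $M = \Theta W \Theta^*$ and using $\mu_1 = 0$ (so the $l=1$ term drops out) reproduces exactly the sum above, confirming the two computations agree.

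With the identity in hand the rest is immediate. Because $M = C^*C$ is a Gram matrix it is positive semidefinite, so $\mu_l \geq 0$ for all $l$ and therefore $\nu_{kk} = \lVert C\bold{r}_k\rVert^2 \geq 0$ for every $k$. Consequently $\nu_{kk} = 0$ if and only if $C\bold{r}_k = 0$, which by the restated definition of the observable index set is precisely the condition $k \notin \mathcal{N}_{obsv}$; this proves part (1). Since each $\nu_{kk}$ is nonnegative, part (2) is the logical complement: $\nu_{kk} > 0 \iff \nu_{kk} \neq 0 \iff C\bold{r}_k \neq 0 \iff k \in \mathcal{N}_{obsv}$, so it requires no separate argument.

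I do not expect a genuine obstacle here; the claim is essentially a repackaging of $\nu_{kk} = \lVert C\bold{r}_k\rVert^2$. The only points demanding care are bookkeeping and the correct use of semidefiniteness: one must respect the index shift (the $(k-1,k-1)$ entry of $\tilde{N}$ is $\nu_{kk}$, and the sum in \eqref{nu_eta_kappa} begins at $l=2$ because $\mu_1 = 0$ under Assumption \ref{assum1}), and one must invoke that $M$ is positive semidefinite so that no cancellation among terms can force $\nu_{kk}=0$ while $C\bold{r}_k \neq 0$. Note that normality of $L$ is not actually needed for the identity itself—diagonalizability alone licenses the partition $\tilde{R} = \begin{bmatrix}\bold{r}_2 & \cdots & \bold{r}_n\end{bmatrix}$—but it is what makes the eigenvector form of $\mathcal{N}_{obsv}$ available, and it lets one sidestep the more roundabout alternative that would pass through Lemma \ref{prop_obsv}.
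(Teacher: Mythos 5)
Your proof is correct, and it takes a genuinely more direct route than the paper's. You observe that by definition $\tilde{N} = \tilde{R}^* C^* C \tilde{R}$, so the diagonal entry is simply the quadratic form $\nu_{kk} = \bold{r}_k^* C^* C \bold{r}_k = \lVert C\bold{r}_k\rVert^2$, from which both equivalences follow immediately via the restated (eigenvector) form of $\mathcal{N}_{obsv}$ and nonnegativity of a squared norm; your reconciliation with the spectral sum in \eqref{nu_eta_kappa} (the $l=1$ term drops because $\mu_1=0$ under Assumption \ref{assum1}) is also right. The paper instead never writes down this identity: it invokes Lemma \ref{prop_obsv} to characterize the indices $l$ with $\mu_l \neq 0$, uses normality of $L$ to expand $\bold{r}_k = \sum_{i=2}^{n}\chi_i^k \boldsymbol{\theta}_i$ in the orthonormal basis of eigenvectors of $M$, and argues that $\nu_{kk}=0$ forces $\chi_l^k = 0$ for every $l$ with $C\boldsymbol{\theta}_l \neq 0$, hence $\bold{r}_k \in \ker C$. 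Your argument buys two things: it bypasses Lemma \ref{prop_obsv} and the basis expansion entirely, and it makes transparent (as you note) that normality is not essential -- diagonalizability of $L$ already licenses the partition $\tilde{R} = \begin{bmatrix}\bold{r}_2 & \cdots & \bold{r}_n\end{bmatrix}$ and hence the identity, with normality serving only to guarantee that the eigenvector form of $\mathcal{N}_{obsv}$ applies. What the paper's longer route buys is consistency with the machinery it reuses elsewhere (Lemma \ref{prop_obsv} and the $\boldsymbol{\theta}$-basis bookkeeping reappear in the surrounding development), but as a proof of this lemma yours is the cleaner one.
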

\begin{proof}
%
Normality of $L$ means that it is unitarily diagonalizable, therefore $R^{-1} = {R}^*$.
We also recall that $M$ in \eqref{matrix_M} is symmetric, therefore unitarily diagonalizable. Therefore $\bold{r}_1 = \boldsymbol{\theta}_1 = 
\frac{1}{\sqrt{n}} \boldsymbol{1}
$ and
it holds that $\bold{r}_k, \boldsymbol{\theta}_l \in \spn \{ \boldsymbol{1} \} ^ {\bot} \subset \mathbb{C}^n$ for $k, l \in~ \{ 2,\dots,n \}$. 
So, 
we have
$\bold{r}_k = \sum_{i=2}^{n} \chi_{i}^{k} \boldsymbol{\theta}_i$ with constants $\chi_{i}^{k} \in~ \mathbb{C}$ for
$k \in \{ 2,\dots,n \}$.

Given any $k \in \{ 2, \dots, n \}$, it follows from \eqref{nu_eta_kappa} and Lemma~\ref{prop_obsv} that $\nu_{kk} = 0$ if and only if $\left \langle\boldsymbol{\theta}_l, \bold{r}_k \right \rangle = 0$ for all $l \in~ \{ 2, \dots, n \}$ such that $C \boldsymbol{\theta}_l \neq 0$. 
Combining the preceding arguments leads to
$$
\nu_{kk} = 0 \ \Leftrightarrow \
\left( \sum_{i=2}^{n} \overline{\chi_{i}^{k}} \boldsymbol{\theta}_i^* \right)
\boldsymbol{\theta}_l = 0,
\ l \in \{ 2, \dots, n \}, \
C \boldsymbol{\theta}_l \neq 0,
$$
which is equivalent to having $\chi_{l}^{k} = 0$ for such $l$, due to the orthonormality of $\boldsymbol{\theta}_l$. In other words,
$
\nu_{kk} = 0 \ \Leftrightarrow \
\bold{r}_k \! \! ~=~ \! \! \! \!
\sum\limits_{\substack{C \boldsymbol{\theta}_i = 0, \ i \in \{ 2, \dots, n \} }} 
\! \!
\chi_{i}^{k} \boldsymbol{\theta}_i
\ \Leftrightarrow \
C \bold{r}_k = 0,
$
which proves the first result. Since $M$ in \eqref{matrix_M} is postive semi-definite, $\nu_{kk}$ for $k \in \{ 2, \dots, n \}$ is given by a summation in \eqref{nu_eta_kappa} with each summand being non-negative. So, $\nu_{kk} \geq 0$ and the first result implies the second result.
%
%
\end{proof}

\defaultColor{
Theorem \ref{thm3} 
leads to the following result for
normal Laplacian matrices. }

\begin{cor} [\defaultColor{Single-Integrator, Normal Laplacian}]
\label{cor4}
Consider the single-integrator network \eqref{firstmatrix}. Suppose that $L$ is normal and the input $\bold{w}_0$ 
\defaultColor{has unit covariance,
i.e. $E\left[\Sigma_0\right] = I $}. Then, the expectation of the metric $P$ in \eqref{perf_x} for the system $T$ given by \eqref{H_cases_1} is
\begin{equation} \label{perf_single_normal}
E\left[P\right] = 
\lVert T \rVert_{\mathcal{H}_2}^2
=
\sum_{k \in \mathcal{N}_{obsv}} \nu_{kk}
\frac{1}
{2\R[\lambda_k]}.
\end{equation}
\end{cor}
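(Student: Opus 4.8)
The plan is to establish the two equalities in \eqref{perf_single_normal} separately, using Proposition~\ref{Prop_H2L2} for the $\mathcal{H}_2$ interpretation and specializing Theorem~\ref{thm3} for the explicit sum. First I would invoke Proposition~\ref{Prop_H2L2}: since the input has unit covariance $E[\Sigma_0] = E[\mathbf{w}_0\mathbf{w}_0^*] = I$ and the initial condition is zero, that proposition immediately yields $E[P] = E[\lVert\mathbf{y}\rVert_{\mathcal{L}_2}^2] = \lVert T\rVert_{\mathcal{H}_2}^2$, establishing the first equality with no further work.

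For the closed-form sum, I would start from the expression $P = \tr(\Sigma_Q\Psi)$ furnished by Theorem~\ref{thm3}, with $\Sigma_Q = \tilde{Q}\Sigma_0\tilde{Q}^*$. Taking expectations and using linearity of the trace together with $E[\Sigma_0]=I$ gives $E[P] = \tr\big(\tilde{Q}\tilde{Q}^*\,\Psi\big)$. The crucial simplification comes from normality of $L$: normality means $L$ is unitarily diagonalizable, so $R$ may be taken unitary, i.e. $R^{-1}=R^*$. Under the partition \eqref{RandQ} this forces $\tilde{Q}=\tilde{R}^*$ with the columns of $\tilde{R}$ an orthonormal set, whence $\tilde{Q}\tilde{Q}^* = \tilde{R}^*\tilde{R} = I_{n-1}$. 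Therefore $E[P] = \tr(\Psi) = \sum_{k=2}^{n}\Psi_{kk}$.

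It then remains to evaluate the diagonal entries and prune the sum. Setting $k=l$ in \eqref{psi_ik_diagable}, the imaginary contributions $\I[\lambda_k]-\I[\lambda_k]$ cancel, leaving $\Psi_{kk} = \nu_{kk}\,\frac{2\R[\lambda_k]}{(2\R[\lambda_k])^2} = \nu_{kk}/(2\R[\lambda_k])$. Finally, Lemma~\ref{prop_obsv_modes} states that $\nu_{kk}=0$ exactly when $k\notin\mathcal{N}_{obsv}$ and $\nu_{kk}>0$ otherwise, so the unobservable indices drop out and the sum collapses to $\sum_{k\in\mathcal{N}_{obsv}}\nu_{kk}/(2\R[\lambda_k])$, which is the claimed expression.

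The step I expect to carry the most content is the normality reduction $\tilde{Q}\tilde{Q}^*=I_{n-1}$: it is what turns the generally entangled bilinear form $\tr(\Sigma_Q\Psi)$ into a plain trace of $\Psi$, and hence into a diagonal sum. Everything else is either a direct appeal to an earlier result (Proposition~\ref{Prop_H2L2}, Lemma~\ref{prop_obsv_modes}) or an elementary algebraic cancellation in \eqref{psi_ik_diagable}.
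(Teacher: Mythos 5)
Your proposal is correct and follows essentially the same route as the paper: the paper's proof likewise uses orthonormality of the eigenvectors of the normal $L$ to get $E[\Sigma_Q]=I$ and hence $E[P]=\sum_{k=2}^{n}\Psi_{kk}$ from Theorem~\ref{thm3}, then applies \eqref{psi_ik_diagable}, Proposition~\ref{Prop_H2L2}, and Lemma~\ref{prop_obsv_modes} exactly as you do. Your write-up simply spells out the intermediate step $\tilde{Q}\tilde{Q}^*=\tilde{R}^*\tilde{R}=I_{n-1}$ in more detail than the paper does.
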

\begin{proof}
Orthonormality of $\bold{r}_j$ 
for $j = 1, \dots, n$ 
yields $E\left[\Sigma_Q\right] ~=~ I$ and leads to $E\left[P\right] = \sum_{k=2}^{n} \Psi_{kk}$ due to Theorem \ref{thm3}. 
We note that this simplifies to \eqref{perf_single_normal}
by using \eqref{psi_ik_diagable}, Proposition \ref{Prop_H2L2} and Lemma \ref{prop_obsv_modes}. 
\end{proof}
Although Corollary~\ref{cor4} is a special case of Theorem~\ref{thm3}, and consequently Theorem~\ref{thm1}, it generalizes 
\defaultColor{\cite[Proposition 1]{YoungLeonard2010}}
to
performance metrics with arbitrary output matrices. 

\subsection{Double-Integrator Networks}

Next we present the main result of this section for the double-integrator network \eqref{secondmatrix}.

%
\begin{thm} [\defaultColor{Double-Integrator, Diagonalizable Laplacian}]
\label{thm4}
Consider the double-integrator network \eqref{secondmatrix}. Suppose that $L$ is diagonalizable. The performance metric $P$ in \eqref{perf_x} is
%
%
$
P = \tr{(\Sigma_Q \Psi)}
$,
%
%
%
%
where
\begin{equation} \label{second_pos_closedform}
\Psi_{kk} =
\nu_{kk} 
\frac{\phi_k}
{2 ( \alpha_k \phi_k^2 + \beta_k \xi_k \phi_k - \beta_k^2 ) }
\end{equation}
for the position-based output, i.e. system $T$ given by \eqref{H_cases_1} and
\begin{equation} \label{second_vel_closedform}
\Psi_{kk} = 
\nu_{kk} 
\frac{ \xi_k \beta_k + \phi_k \alpha_k }
{2 ( \alpha_k \phi_k^2 + \beta_k \xi_k \phi_k - \beta_k^2 ) }
\end{equation}
for the velocity-based output, i.e. system $T$ given by \eqref{H_cases_2}; where $\alpha_k = k_p + \gamma_p \R[\lambda_k]$, $\phi_k = k_d + \gamma_d \R[\lambda_k]$, $\beta_k ~=~ \gamma_p \I[\lambda_k]$ and $\xi_k = \gamma_d\I[\lambda_k]$.
\end{thm}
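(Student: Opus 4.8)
The plan is to specialize the general closed-form solution of Theorem~\ref{thm2} to the diagonalizable case exactly as was done for the single-integrator network in Theorem~\ref{thm3}. Since $L$ is diagonalizable, every Jordan block is a scalar, so $m=n$, $n_k=1$ for all $k$, and by Remark~\ref{rem2} we have $\Psi_{kl}=\nu_{kl}\langle \tilde{h}^{(l)}(t),\tilde{h}^{(k)}(t)\rangle_{\mathcal{L}_2}$. For the \emph{diagonal} terms $\Psi_{kk}$ we only need the self scalar product $\langle \tilde{h}^{(k)}(t),\tilde{h}^{(k)}(t)\rangle_{\mathcal{L}_2}$, i.e.\ take $k=l$, $p=q=a=b=1$ so that $\sigma=\upsilon=1$ in the formulas of Theorem~\ref{thm2}. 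This collapses the double sums in \eqref{L2_scalar_sol_1}--\eqref{L2_scalar_sol_3} to a single term ($\zeta=r=1$, with $\Phi_{11}(1,1)=1$), so the main work is purely algebraic rather than involving any new integration.

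First I would treat the generic case $\rho_1^{(k)}\neq\rho_2^{(k)}$ using \eqref{L2_scalar_sol_1} with $\sigma=\upsilon=1$, which gives a sum of four terms whose denominators are $\overline{\rho_i^{(k)}}+\rho_j^{(k)}$ for $i,j\in\{1,2\}$, weighted by the products $\overline{c_\zeta^{(k)}}\,c_r^{(k)}$ of the partial-fraction coefficients supplied by Lemma~\ref{lemma3}. For $\delta=1$ those coefficients reduce to the simple residues $c_1^{(k)}=\Gamma(\rho_1^{(k)})/(\rho_1^{(k)}-\rho_2^{(k)})$ and $c_2^{(k)}=\Gamma(\rho_2^{(k)})/(\rho_2^{(k)}-\rho_1^{(k)})$, with $\Gamma(s)=r(s)$ (so $r(s)=1$ for the position output \eqref{H_cases_1} and $r(s)=s$ for the velocity output \eqref{H_cases_2}). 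The roots satisfy the Vieta relations from \eqref{char_eq_s2}, namely $\rho_1^{(k)}+\rho_2^{(k)}=-(k_d+\gamma_d\lambda_k)$ and $\rho_1^{(k)}\rho_2^{(k)}=k_p+\gamma_p\lambda_k$; substituting these and clearing the four fractions over a common denominator should, after collecting real and imaginary parts of $\lambda_k$, produce exactly \eqref{second_pos_closedform} for $r(s)=1$ and \eqref{second_vel_closedform} for $r(s)=s$. The denominator $\alpha_k\phi_k^2+\beta_k\xi_k\phi_k-\beta_k^2$ is the same quantity appearing in the Routh--Hurwitz stability condition \eqref{stab_rh} of Proposition~\ref{prop_stab_rh}, so I expect it to emerge naturally from the product of the four factors $\overline{\rho_i^{(k)}}+\rho_j^{(k)}$.

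The main obstacle will be the bookkeeping of this algebraic reduction: one must carefully express the four complex denominators and the residues in terms of $\alpha_k,\phi_k,\beta_k,\xi_k$, verify that the imaginary parts cancel (consistent with Remark~\ref{rem_perf_real}), and confirm the factor of $\tfrac12$ and the precise numerator $\xi_k\beta_k+\phi_k\alpha_k$ in the velocity case. To keep this tractable I would write $\overline{\rho_i^{(k)}}+\rho_j^{(k)}$ as $-2\R[\rho^{(k)}]$-type combinations and use that the product of the two conjugate-root factors equals $|\rho_1^{(k)}|^2,\,|\rho_2^{(k)}|^2$ expressions determined by Vieta. I would then handle the repeated-root case $\rho_1^{(k)}=\rho_2^{(k)}=\rho^{(k)}$ separately via \eqref{L2_scalar_sol_3} with $\sigma=\upsilon=1$; since this corresponds to $\I[\lambda_k]=0$ together with a double root (so $\beta_k=\xi_k=0$ in many instances), I expect it to reduce to the same closed forms by continuity, providing a consistency check rather than a genuinely new computation. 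Finally, as in Theorem~\ref{thm3}, only the diagonal $\Psi_{kk}$ are needed because the statement reports these entries; the off-diagonal cross terms are subsumed into $P=\tr(\Sigma_Q\Psi)$ and need not be simplified here.
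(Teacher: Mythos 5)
Your plan is correct in substance but follows a genuinely different computational route from the paper. You specialize the residue-based formulas \eqref{L2_scalar_sol_1}--\eqref{L2_scalar_sol_3} of Theorem~\ref{thm2} to $\sigma=\upsilon=1$, plug in the $\delta=1$ partial-fraction coefficients from Lemma~\ref{lemma3}, and reduce the resulting four-term sum via Vieta's relations. The paper instead bypasses Theorem~\ref{thm2}'s summation formulas entirely: it writes $\tilde{h}_{11}^{(k)}(s)$ as a second-order transfer function with complex coefficients, realizes it in controllable canonical form $\left( \mathcal{A}_{k}, \mathcal{B}_{k}, \mathcal{C}_{k} \right)$, and computes $\left \langle \tilde{h}_{11}^{(k)}, \tilde{h}_{11}^{(k)} \right \rangle_{\mathcal{L}_2} = \mathcal{B}_{k}^\intercal \mathcal{X}_k \mathcal{B}_{k}$ by solving the $2\times 2$ Lyapunov equation $\mathcal{A}_{k}^*\mathcal{X}_k + \mathcal{X}_k \mathcal{A}_{k} = -\mathcal{C}_k^*\mathcal{C}_k$. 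The Gramian route handles distinct and repeated roots uniformly (no case split, no continuity argument needed) and produces the Routh--Hurwitz denominator $\alpha_k \phi_k^2 + \beta_k \xi_k \phi_k - \beta_k^2$ with far less algebra; your route buys internal consistency with Theorem~\ref{thm2} but at the cost of the heavier bookkeeping you anticipate. Two small corrections: first, $\Phi_{11}(1,1) = {(-1)}^{1-1-1}\,0!/(0!\,0!) = -1$, not $+1$; this sign is essential since the denominators $\overline{\rho_i^{(k)}}+\rho_j^{(k)}$ have negative real part under stability, so the minus sign is what makes the scalar product positive. Second, a repeated root of \eqref{char_eq_s2} does not force $\I[\lambda_k]=0$ (the discriminant can vanish for complex $\lambda_k$), so you should rely on the density/continuity argument you sketch, or on a direct evaluation of \eqref{L2_scalar_sol_3}, rather than on the claim that $\beta_k=\xi_k=0$ in that case. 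With these fixes the plan goes through.
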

%
\begin{rem}
Here, the cross-terms $\Psi_{kl}$ for $k \neq l$ are not given explicitly for brevity. 
%
A Gramian computation as in \cite{PaganiniMallada2017, PaganiniMallada2019} 
would give
$\Psi_{kl}$ in closed-form for $k \neq l$, which is not tractable due to the number of terms involved. To gain some insight from the computation, we focus on the diagonal terms which are the only required ones when $\Sigma_Q$ in \eqref{Sigma} is diagonal.
\end{rem}
\begin{proof}[Proof of Theorem \ref{thm4}]
The fact that $L$ is diagonalizable leads to $m=n$, i.e. all Jordan blocks are scalars. Then, using \eqref{psi_ik} from Lemma~\ref{lemma2} we have
%
$
\Psi_{kl}  = 
\nu_{kl} 
\left \langle \tilde{h}_{11}^{(l)}(t), \tilde{h}_{11}^{(k)}(t) 
\right \rangle_{\mathcal{L}_2}.
$
%
First consider the position-based performance metric, i.e. the system $T$ given by \eqref{H_cases_1}. Due to \eqref{h_impResp} from the proof of Theorem~\ref{thm2} and \eqref{phi_TF},
%
$
\tilde{h}_{11}^{(k)}(s) = 
\frac{1}
{s^2 + (k_d + \gamma_d \lambda_k) s + k_p + \gamma_p  \lambda_k}
$,
%
which has the realization $\left( \mathcal{A}_{k}, \mathcal{B}_{k}, \mathcal{C}_{k} \right)$ in controllable canonical form given by
%
$\mathcal{A}_{k} =
\left[
\begin{smallmatrix}
0 & 1 \\
-k_p - \gamma_p  \lambda_k & -k_d - \gamma_d  \lambda_k
\end{smallmatrix}
\right]$,
$\mathcal{B}_{k} = \begin{bmatrix} 0 & 1 \end{bmatrix}^\intercal$ and 
$\mathcal{C}_{k} = \begin{bmatrix} 1 & 0 \end{bmatrix}$.
%
If $k=l$, performing a standard computation, 
$\left \langle \tilde{h}_{11}^{(k)}(t), \tilde{h}_{11}^{(k)}(t) \right \rangle_{\mathcal{L}_2} = 
\mathcal{B}_{k}^\intercal \mathcal{X}_k \mathcal{B}_{k}$, where $\mathcal{X}_k$ satisfies the Lyapunov equation
%
$
\mathcal{A}_{k}^*\mathcal{X}_k + \mathcal{X}_k \mathcal{A}_{k} = -\mathcal{C}_k^*\mathcal{C}_k
$.
Then we get
\begin{equation} \nonumber
\left \langle \tilde{h}_{11}^{(k)}(t), \tilde{h}_{11}^{(k)}(t) \right \rangle_{\mathcal{L}_2} = 
\frac{\phi_k}
{2 ( \alpha_k \phi_k^2 + \beta_k \xi_k \phi_k - \beta_k^2 ) }.
\end{equation}
Considering the velocity-based performance metric, i.e. the system $T$ given by \eqref{H_cases_2} and using \eqref{h_impResp} and \eqref{phi_TF} we have 
$
\tilde{h}_{11}^{(k)}(s) = 
\frac{s}
{s^2 + (k_d + \gamma_d \lambda_k) s + k_p + \gamma_p  \lambda_k}
$,
so that $\mathcal{A}_{k}$ and $\mathcal{B}_{k}$ are the same but $\mathcal{C}_{k} = \begin{bmatrix} 0 & 1 \end{bmatrix}$. If $k=l$, solving the Lyapunov equation leads to
$$
\left \langle \tilde{h}_{11}^{(k)}(t), \tilde{h}_{11}^{(k)}(t) \right \rangle_{\mathcal{L}_2} = 
\frac{\xi_k \beta_k + \phi_k \alpha_k}
{2 ( \alpha_k \phi_k^2 + \beta_k \xi_k \phi_k - \beta_k^2 )}.
\quad
\mbox{\qedhere}
\! \! \! \! \!
\! \! \! \! \!
\! \!
$$
\end{proof}
%
If we further assume 
real eigenvalues,
we obtain a result similar to the one in \cite{PaganiniMallada2017, PaganiniMallada2019} for diagonalizable 
Laplacians.
%
\begin{cor} [\defaultColor{Double-Integrator, Diagonalizable Laplacian with Real Eigenvalues}]
\label{cor3}
Consider the double-integrator network \eqref{secondmatrix}. Suppose that $L$ is diagonalizable and has real eigenvalues. Then
\begin{equation} \label{cor3_psi_pos}
\Psi_{kl} =
\nu_{kl} 
\frac{2k_d + \gamma_d (\lambda_k + \lambda_l)}
{\Psi_{kl}^{denom}},
\end{equation}
for the position-based output, i.e. system $T$ given by \eqref{H_cases_1} and
\begin{equation} \label{cor3_psi_vel}
\Psi_{kl} = 
\nu_{kl} 
\frac
{
(k_p+\gamma_p \lambda_l)(k_d+\gamma_d \lambda_k) +
(k_p+\gamma_p \lambda_k)(k_d+\gamma_d \lambda_l)
}
{\Psi_{kl}^{denom}}
\end{equation}
for the velocity-based output, i.e. system $T$ given by \eqref{H_cases_2}, where
\footnotesize
\begin{align*}
\Psi_{kl}^{denom} \, \, &= \, \,
(k_d+\gamma_d \lambda_k)(k_d+\gamma_d \lambda_l)
(2k_p + \gamma_p (\lambda_k + \lambda_l)) \, \, + \\
\gamma_p^2(\lambda_k - \lambda_l)^2 + 
& (k_p+\gamma_p \lambda_k)(k_d+\gamma_d \lambda_l)^2 +
(k_p+\gamma_p \lambda_l)(k_d+\gamma_d \lambda_k)^2.
\end{align*}
\normalsize
\end{cor}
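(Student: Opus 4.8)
The plan is to specialize the general machinery of Theorem~\ref{thm4} (equivalently Lemma~\ref{lemma2}) to real spectra and to compute the off-diagonal blocks $\Psi_{kl}$, which Theorem~\ref{thm4} left implicit. Since $L$ is diagonalizable, every Jordan block is a scalar ($n_k=1$), so by Remark~\ref{rem2} the problem reduces to $\Psi_{kl} = \nu_{kl}\langle \tilde{h}_{11}^{(l)}(t),\tilde{h}_{11}^{(k)}(t)\rangle_{\mathcal{L}_2}$, and the only thing left to evaluate is the scalar $\mathcal{L}_2$ inner product in \eqref{L2_scalar} for $k\neq l$, for both the position output ($r(s)=1$) and velocity output ($r(s)=s$). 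The essential simplification provided by the hypothesis of real eigenvalues is that each $\lambda_k\in\mathbb{R}$, so the scalar transfer functions and hence the controllable-canonical realizations $(\mathcal{A}_k,\mathcal{B}_k,\mathcal{C}_k)$ written down in the proof of Theorem~\ref{thm4} have \emph{real} entries; the conjugations in \eqref{L2_scalar} then disappear and the cross terms stay rational in the real data $k_p,k_d,\gamma_p,\gamma_d,\lambda_k,\lambda_l$.

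First I would reuse those realizations: for the position output $\mathcal{C}_k=[1\ 0]$ and for the velocity output $\mathcal{C}_k=[0\ 1]$, while $\mathcal{A}_k$ and $\mathcal{B}_k=[0\ 1]^\intercal$ are as in the proof of Theorem~\ref{thm4}. Writing $\tilde{h}_{11}^{(k)}(t)=\mathcal{C}_k e^{\mathcal{A}_k t}\mathcal{B}_k$ and using that the integrand is scalar, the cross inner product becomes $\langle\tilde{h}_{11}^{(l)},\tilde{h}_{11}^{(k)}\rangle_{\mathcal{L}_2}=\mathcal{B}_k^\intercal \mathcal{X}_{kl}\mathcal{B}_l$, where $\mathcal{X}_{kl}=\int_0^{\infty} e^{\mathcal{A}_k^\intercal t}\mathcal{C}_k^\intercal\mathcal{C}_l\, e^{\mathcal{A}_l t}\,dt$ is the unique solution of the Sylvester equation $\mathcal{A}_k^\intercal \mathcal{X}_{kl}+\mathcal{X}_{kl}\mathcal{A}_l = -\mathcal{C}_k^\intercal \mathcal{C}_l$. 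This is exactly the $k\neq l$ generalization of the Lyapunov equation already solved for the diagonal terms in Theorem~\ref{thm4}; existence and uniqueness of $\mathcal{X}_{kl}$ (equivalently convergence of the integral) follow because the contributing modes are Hurwitz by Proposition~\ref{prop_stab_rh}, so $\mathcal{A}_k$ and $-\mathcal{A}_l$ share no common eigenvalue, while any unobservable mode enters with $\nu_{kl}=0$ and may be discarded. Because $\mathcal{B}_k=\mathcal{B}_l=[0\ 1]^\intercal$, the quantity needed is simply the $(2,2)$ entry $[\mathcal{X}_{kl}]_{22}$.

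The remaining work, and the main obstacle, is the explicit algebra: solving the $2\times 2$ Sylvester equation (four scalar unknowns) and extracting $[\mathcal{X}_{kl}]_{22}$ for each output choice, then simplifying into the stated form. I would organize this by vectorizing the equation as $(I\otimes \mathcal{A}_k^\intercal + \mathcal{A}_l^\intercal\otimes I)\,\mathrm{vec}(\mathcal{X}_{kl}) = -\mathrm{vec}(\mathcal{C}_k^\intercal\mathcal{C}_l)$; by Cramer's rule every entry of $\mathcal{X}_{kl}$ carries the common denominator $\det(I\otimes \mathcal{A}_k^\intercal + \mathcal{A}_l^\intercal\otimes I)=\prod_{i,j}(\rho_i^{(k)}+\rho_j^{(l)})$, and expanding this product through the symmetric functions $\rho_1^{(k)}+\rho_2^{(k)}=-(k_d+\gamma_d\lambda_k)$ and $\rho_1^{(k)}\rho_2^{(k)}=k_p+\gamma_p\lambda_k$ (both real) yields precisely $\Psi_{kl}^{denom}$. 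The numerators $2k_d+\gamma_d(\lambda_k+\lambda_l)$ and $(k_p+\gamma_p\lambda_l)(k_d+\gamma_d\lambda_k)+(k_p+\gamma_p\lambda_k)(k_d+\gamma_d\lambda_l)$ then arise from the $\mathcal{C}_k^\intercal\mathcal{C}_l$ right-hand side in the position and velocity cases respectively. I would guard the computation with two checks: the formulas must be symmetric under $k\leftrightarrow l$ (consistent with $\Psi$ being Hermitian and real by Remark~\ref{rem_perf_real}), and at $k=l$ they must collapse to the diagonal expressions \eqref{second_pos_closedform} and \eqref{second_vel_closedform} with $\beta_k=\xi_k=0$, in which case the denominator reduces to $4(k_p+\gamma_p\lambda_k)(k_d+\gamma_d\lambda_k)^2$, recovering Theorem~\ref{thm4}.
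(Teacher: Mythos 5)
Your proposal is correct and follows essentially the same route as the paper: reduce to the scalar products $\Psi_{kl}=\nu_{kl}\langle\tilde{h}_{11}^{(l)},\tilde{h}_{11}^{(k)}\rangle_{\mathcal{L}_2}$ via the scalar-Jordan-block argument of Theorem~\ref{thm4}, then evaluate the cross term as $\mathcal{B}_k^\intercal\mathcal{X}_{kl}\mathcal{B}_l$ with $\mathcal{X}_{kl}$ solving the Sylvester equation $\mathcal{A}_k^*\mathcal{X}_{kl}+\mathcal{X}_{kl}\mathcal{A}_l=-\mathcal{C}_k^*\mathcal{C}_l$, which is exactly the paper's proof. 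Your additional detail on carrying out the solve (vectorization, the determinant $\prod_{i,j}(\rho_i^{(k)}+\rho_j^{(l)})$ expanded via symmetric functions, and the $k=l$ consistency check) correctly fills in the algebra the paper leaves implicit.
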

\begin{proof}
By the argument used in the proof of Theorem \ref{thm4} we have
$
\Psi_{kl}  = 
\nu_{kl} 
\left \langle \tilde{h}_{11}^{(l)}(t), \tilde{h}_{11}^{(k)}(t) 
\right \rangle_{\mathcal{L}_2}
$
and
$\left \langle \tilde{h}_{11}^{(l)}(t), \tilde{h}_{11}^{(k)}(t) \right \rangle_{\mathcal{L}_2} = 
\mathcal{B}_{k}^\intercal \mathcal{X}_{kl} \mathcal{B}_{l}$, 
where $\mathcal{X}_{kl}$ satisfies the Sylvester equation 
$
\mathcal{A}_{k}^*\mathcal{X}_{kl} + \mathcal{X}_{kl} \mathcal{A}_{l} = -\mathcal{C}_k^*\mathcal{C}_l
$
\cite{PaganiniMallada2017, PaganiniMallada2019}.
Considering \eqref{H_cases_1} and \eqref{H_cases_2} individually and solving for $\mathcal{X}_{kl}$ in each case leads to respectively \eqref{cor3_psi_pos} and \eqref{cor3_psi_vel}.
\end{proof}

The real and imaginary parts of the Laplacian eigenvalues, and the control gains appear explicitly in the solutions for the performance metrics in Theorem \ref{thm4} and Corollary \ref{cor3}. 
However, these solutions are still given by a weighted linear combination of $\Psi_{kl}$, therefore 
analyzing the dependence 
of the performance metrics on network topological characteristics as well as control strategy will require further simplifying assumptions. 
%

\subsubsection*{\defaultColor{Normal Laplacian Matrices}}

\defaultColor{As in the case of single-integrator network \eqref{firstmatrix}, this class of graphs provides an insightful example for the upcoming sections.}

%

\begin{cor} [\defaultColor{Double-Integrator, Normal Laplacian}]
\label{cor5}
Consider the double-integrator network \eqref{secondmatrix}. Suppose that $L$ is normal and the input $\bold{w}_0$ 
\defaultColor{has unit covariance,
i.e. $E[\Sigma_0] = I $}. Then, the expectation of the performance metric $P$ in \eqref{perf_x} is
\begin{equation} \label{perf_x_sol_normal}
E\left[P\right] =
\lVert T \rVert_{\mathcal{H}_2}^2
= \! \!
\sum_{k \in \mathcal{N}_{obsv}}
\!
 \nu_{kk} 
\frac{\phi_k}
{2 ( \alpha_k \phi_k^2 + \beta_k \xi_k \phi_k - \beta_k^2 ) },
\!
\end{equation}
for the position-based output, i.e.
system $T$ given by \eqref{H_cases_1} and
\begin{equation} \label{perf_v_sol_normal}
E\left[P\right] = 
\lVert T \rVert_{\mathcal{H}_2}^2
= \! \!
\sum_{k \in \mathcal{N}_{obsv}} 
\!
\nu_{kk} 
\frac{ \xi_k \beta_k + \phi_k \alpha_k }
{2 ( \alpha_k \phi_k^2 + \beta_k \xi_k \phi_k - \beta_k^2 ) },
\!
\end{equation}
for the velocity-based output, i.e.
system $T$ given by \eqref{H_cases_2};
where $\alpha_k = k_p + \gamma_p \R[\lambda_k]$, $\phi_k = k_d + \gamma_d \R[\lambda_k]$, $\beta_k ~=~ \gamma_p \I[\lambda_k]$ and $\xi_k = \gamma_d\I[\lambda_k]$.
\end{cor}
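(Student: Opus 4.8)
The plan is to derive Corollary~\ref{cor5} as a direct specialization of Theorem~\ref{thm4}, exploiting the two structural consequences of normality: that $L$ is unitarily diagonalizable (so $R^{-1}=R^*$) and that this forces $\Sigma_Q$ to become the identity in expectation. First I would invoke Theorem~\ref{thm4}, which already gives $P=\tr(\Sigma_Q\Psi)$ with the diagonal entries $\Psi_{kk}$ in \eqref{second_pos_closedform} and \eqref{second_vel_closedform} valid for any diagonalizable $L$; normality is a subclass, so these formulas apply verbatim. The task therefore reduces to showing that taking the expectation collapses the trace $\tr(\Sigma_Q\Psi)$ onto the diagonal sum $\sum_k \Psi_{kk}$, and that this sum ranges only over $\mathcal{N}_{obsv}$.

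The central computation mirrors the proof of Corollary~\ref{cor4}. Since $L$ is normal, its right eigenvectors $\{\bold{r}_j\}$ are orthonormal, so $\tilde{Q}=\tilde{R}^*$ and $\tilde{R}\tilde{R}^*$ restricted to the relevant subspace is the identity. Under the unit-covariance assumption $E[\Sigma_0]=E[\bold{w}_0\bold{w}_0^*]=I$, the definition $\Sigma_Q=\tilde{Q}\Sigma_0\tilde{Q}^*$ in \eqref{Sigma} gives $E[\Sigma_Q]=\tilde{Q}\,I\,\tilde{Q}^*=\tilde{R}^*\tilde{R}=I$ by orthonormality of the columns $\bold{r}_2,\dots,\bold{r}_n$. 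Consequently $E[P]=\tr(E[\Sigma_Q]\,\Psi)=\tr(\Psi)=\sum_{k=2}^{n}\Psi_{kk}$, so the off-diagonal cross-terms drop out entirely and only the explicit diagonal expressions from Theorem~\ref{thm4} survive. Substituting \eqref{second_pos_closedform} for the position output \eqref{H_cases_1} and \eqref{second_vel_closedform} for the velocity output \eqref{H_cases_2} yields the summands in \eqref{perf_x_sol_normal} and \eqref{perf_v_sol_normal}.

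Two remaining points complete the argument. To restrict the sum from $\{2,\dots,n\}$ to $\mathcal{N}_{obsv}$, I would apply Lemma~\ref{prop_obsv_modes}: for normal $L$, $\nu_{kk}=0$ precisely when $k\notin\mathcal{N}_{obsv}$, so every term with $k\notin\mathcal{N}_{obsv}$ vanishes through its prefactor $\nu_{kk}$, leaving exactly the sum over $\mathcal{N}_{obsv}$. Finally, the identification with the $\mathcal{H}_2$ norm, $E[P]=\lVert T\rVert_{\mathcal{H}_2}^2$, follows immediately from Proposition~\ref{Prop_H2L2}, since the unit-covariance impulse input is exactly the hypothesis $E[\bold{w}_0\bold{w}_0^*]=I$ of that proposition. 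I do not anticipate a serious obstacle here, as the result is essentially a bookkeeping specialization; the only point requiring care is verifying that normality (not merely diagonalizability) is what guarantees both $E[\Sigma_Q]=I$ via orthonormality and the clean characterization of vanishing $\nu_{kk}$ through Lemma~\ref{prop_obsv_modes}, so that the diagonal-only collapse is legitimate and the observability restriction is exact.
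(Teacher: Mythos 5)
Your proposal is correct and follows essentially the same route as the paper: the paper's proof of Corollary~\ref{cor5} simply cites \eqref{second_pos_closedform}, \eqref{second_vel_closedform} and "the same argument used in the proof of Corollary~\ref{cor4}," which is exactly your chain of orthonormality of the $\bold{r}_j$ giving $E[\Sigma_Q]=I$, hence $E[P]=\sum_k\Psi_{kk}$, followed by Lemma~\ref{prop_obsv_modes} to restrict to $\mathcal{N}_{obsv}$ and Proposition~\ref{Prop_H2L2} for the $\mathcal{H}_2$ identification. No gaps.
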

\begin{proof}
By \eqref{second_pos_closedform} and \eqref{second_vel_closedform} and the same argument used in the proof of Corollary \ref{cor4}, we reach the desired result.
\end{proof}
Note that per 
Lemma \ref{prop_obsv_modes}
all $\nu_{kk}$ in \eqref{perf_x_sol_normal} and \eqref{perf_v_sol_normal} are positive. 
In addition, stability guarantees that the \defaultColor{numerators and the denominators} in \eqref{perf_x_sol_normal} and \eqref{perf_v_sol_normal} are positive due to Proposition~\ref{prop_stab_rh}. Therefore the performance metrics are guaranteed to be positive quantities as expected. \defaultColor{This result generalizes the result given in \cite[Corollary 2]{OralMalladaGayme2017} 
to position and velocity based performance metrics with arbitrary output matrices.}
%

%


\defaultColor{In the next section, we study 
the effect of communication directionality on performance through the example of 
normal 
Laplacian matrices.}


\section{\defaultColor{The Role of Communication  Directionality}}
\label{NormalL}

In this section, we focus on a special class of graphs that emit normal weighted Laplacian matrices \defaultColor{and use the respective results from Section \ref{diagL} to investigate the effect of directed feedback}. This class of graphs can for example arise in spatially invariant systems 
\cite{BamiehJovanovic2012, teglingbamieh2019}.
Given any normal weighted  
Laplacian matrix $L$, we 
\defaultColor{extract its Hermitian part as}
\begin{equation} \label{herm_part_L}
L^{\prime} := \frac{L + L^*}{2}.
\end{equation}
%
%
\defaultColor{
Since $L$ is weight-balanced \cite[Lemma 4]{YoungLeonard2010}, \eqref{herm_part_L} gives the Laplacian matrix of an undirected graph $\mathcal{G}^{\prime} = \{\mathcal{N}, \mathcal{E}^{\prime}, \mathcal{W}^{\prime}\}$, where 
$\mathcal{E}^{\prime} = \mathcal{E} \cup \{(j,i) \mid (i,j) \in \mathcal{E} \} $ and
$\mathcal{W}^{\prime} = \{\frac{b_{ij} + b_{ji}}{2} \mid b_{ij} \in \mathcal{W} \}$. Put another way, $\mathcal{G}^{\prime}$ is the \textit{undirected counterpart} of $\mathcal{G}$ resulting from creating reverse edges in $\mathcal{G}$ and re-defining edge weights such that both graphs have the same nodal out-degree.}
%
%

\defaultColor{Normality of $L$ and
\eqref{herm_part_L} imply that 
the spectrum of $L^\prime$,}
\begin{equation} \label{L_prime_spec}
\spec(L^\prime) = \{ \R[\lambda_i] | \lambda_i \in \spec(L), i=1,\dots, n \}.
\end{equation}
\defaultColor{
In addition, since $L$ is normal, it has eigenvalues with non-zero imaginary parts if and only if its graph $\mathcal{G}$ is directed.
For disturbance inputs that are uniform and uncorrelated across the network, we observe that
both the position and velocity based performance metrics \eqref{perf_x_sol_normal} and \eqref{perf_v_sol_normal} depend on both the real and imaginary parts of the Laplacian eigenvalues.
Therefore,
comparison of 
directed graphs $\mathcal{G}$ and their undirected counterparts $\mathcal{G}^{\prime}$ 
can reveal 
the interplay between
the imaginary parts, i.e. 
edge directionality
and control strategy (judicious selection of control gains) that determines overall performance.}
%

\subsection{Position based Performance}

\subsubsection{Single-Integrator Networks}

The following \defaultColor{theorem} provides a comparison of 
the single-integrator systems with respective Laplacians $L$ and $L^{\prime}$ \defaultColor{in terms of} the performance metric given in \eqref{perf_single_normal}.
%
\begin{thm} [\defaultColor{Equal Performance with Directed Networks and 
Undirected Counterparts}] \label{cor_single_normal}
Consider the single-integrator network \eqref{firstmatrix} and the performance metric $P$ in \eqref{perf_x}. Let $T$ and $T^\prime$ be two systems given by \eqref{H_cases_1} with weighted Laplacian matrices $L$ and $L^\prime$. Suppose $L$ is normal and $L^{\prime}$ is given by \eqref{herm_part_L}.
Then $\lVert T \rVert_{\mathcal{H}_2}^2 ~=~ \lVert T^\prime \rVert_{\mathcal{H}_2}^2$.
\end{thm}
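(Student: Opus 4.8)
The plan is to apply the closed-form expression of Corollary~\ref{cor4} for $\lVert T \rVert_{\mathcal{H}_2}^2$ to both $T$ and $T^\prime$ and then show that the two resulting sums agree term by term. The structural fact I would establish first is that normality of $L$ forces $L$ and its Hermitian part $L^\prime$ to share the \emph{same} orthonormal eigenbasis. Concretely, normality yields a unitary diagonalization $L = R \Lambda R^*$ with $R^{-1} = R^*$, so that $L^\prime = \tfrac{1}{2}(L + L^*) = R\,\R[\Lambda]\,R^*$. Hence the columns $\bold{r}_k$ of $R$ are eigenvectors of $L^\prime$ as well, now with eigenvalues $\R[\lambda_k]$, which recovers the spectral identity \eqref{L_prime_spec}. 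In particular $L^\prime$ is real symmetric, therefore normal, so Corollary~\ref{cor4} is applicable to $T^\prime$ with this same eigenbasis.

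Next I would verify that the three ingredients of formula \eqref{perf_single_normal} are preserved in passing from $L$ to $L^\prime$. First, the observable index set $\mathcal{N}_{obsv} = \{k : C\bold{r}_k \neq 0\}$ is determined solely by the shared eigenvectors $\bold{r}_k$, hence is identical for the two systems. Second, each coefficient $\nu_{kk}$ in \eqref{nu_eta_kappa} equals $\sum_{l=2}^{n} |\langle \boldsymbol{\theta}_l, \bold{r}_k \rangle|^2 \mu_l$, which depends only on the $\bold{r}_k$ and on the spectral data $(\mu_l, \boldsymbol{\theta}_l)$ of $M = C^*C$; since neither $C$ nor the $\bold{r}_k$ change, $\nu_{kk}$ is unchanged. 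Third, the real part $\R[\lambda_k]$ in the $k$-th denominator for $T$ is exactly the $k$-th eigenvalue of $L^\prime$, and therefore also supplies the corresponding denominator for $T^\prime$ (which is already real, so no further real part is taken).

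Combining these observations, the summand $\nu_{kk}/(2\R[\lambda_k])$ is identical for both systems at every index $k \in \mathcal{N}_{obsv}$, which immediately gives $\lVert T \rVert_{\mathcal{H}_2}^2 = \lVert T^\prime \rVert_{\mathcal{H}_2}^2$. I expect the only genuine subtlety to be the first step: one must argue that normality forces $L$ and $L^\prime$ to be \emph{simultaneously} unitarily diagonalizable with a common set of eigenvectors, and not merely that they have related spectra; the identity $L^\prime = R\,\R[\Lambda]\,R^*$ is precisely what makes this work. Once the shared eigenbasis is secured, the remainder is bookkeeping, since every quantity in \eqref{perf_single_normal} other than the real eigenvalue parts is a function of the eigenvectors and of $M$ alone.
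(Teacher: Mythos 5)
Your proposal is correct and follows essentially the same route as the paper, which simply invokes the closed form \eqref{perf_single_normal} together with the spectral identity \eqref{L_prime_spec}. You additionally make explicit the shared unitary eigenbasis $L^\prime = R\,\R[\Lambda]\,R^*$ and the resulting invariance of $\mathcal{N}_{obsv}$ and $\nu_{kk}$, which the paper's one-line proof leaves implicit but which is indeed the substance of the argument.
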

\begin{proof}
The result follows from \eqref{perf_single_normal} and \eqref{L_prime_spec}.
\end{proof}

As Theorem \ref{cor_single_normal} indicates, directed and \defaultColor{associated undirected} single-integrator systems 
perform identically for any output matrix $C$ satisfying Assumption \ref{assum1}. This implies 
that the same level of performance can be achieved either using 
\defaultColor{directed}
paths
in the commmunication graph or using the corresponding 
\defaultColor{undirected}
graph per \eqref{herm_part_L}. 
\defaultColor{The directed system 
might be preferable in certain cases due to reduced communication requirements (e.g. uni-directional vs.
bi-directional paths).}

%
Theorem \ref{cor_single_normal} also provides a generalization of previous results obtained for this class of directed and \defaultColor{undirected} single-integrator systems. 
For example, performance of directed systems can be bounded by functions of the spectrums of output performance matrices and associated \defaultColor{undirected} system Laplacians (see e.g. \cite[Theorem 5]{SiamiMotee2017}). Here, we provide exact solutions in Corollary \ref{cor4} by additionally accounting for the eigenvectors of these matrices, which lead to the equivalence between directed and \defaultColor{associated undirected} systems as shown by Theorem \ref{cor_single_normal}.

\subsubsection{Double-Integrator Networks}

We now provide a comparison of the double-integrator systems with respective Laplacians $L$ and $L^{\prime}$ for the performance metric given in \eqref{perf_x_sol_normal}.
%
\begin{rem} \label{rem4}
The performance metric in \eqref{perf_x_sol_normal} 
simplifies to an expression that does not explicitly depend on 
$\I[\lambda_k]$ if
$\beta_k \xi_k \phi_k - \beta_k^2 = 0$ for $k \in \mathcal{N}_{obsv}$. This holds if $\I[\lambda_k] = 0$ for $k \in \mathcal{N}_{obsv}$ or $L$ is symmetric or $\gamma_p = 0$.
If $\beta_k \xi_k \phi_k - \beta_k^2 = 0$ for $k \in \mathcal{N}_{obsv}$, \eqref{perf_x_sol_normal} reduces to
\begin{equation} \label{P_normal_x_noImag}
\lVert T \rVert_{\mathcal{H}_2}^2
\! = \! \! \! \!
\sum_{k \in \mathcal{N}_{obsv}} 
\! \!
\nu_{kk} \frac{1}{2(k_p + \gamma_p\R[\lambda_k]) (k_d + \gamma_d\R[\lambda_k])},
\! \!
\end{equation}
\defaultColor{when the}
stability condition \eqref{stab_rh} from Proposition \ref{prop_stab_rh} holds.
\end{rem}
Depending on the values of $k_p, k_d, \gamma_p$ and $\gamma_d$ in \eqref{P_normal_x_noImag}, the denominator in \eqref{perf_x_sol_normal} can be quadratic in $\R[\lambda_k]$, which could indicate a smaller $\mathcal{H}_2$ norm for sufficiently large $\R[\lambda_k]$, hence better performance compared to the performance of the first order system given by \eqref{perf_single_normal}.

The following Lemma shows the effect of the imaginary parts of the weighted Laplacian eigenvalues on the position based performance \eqref{perf_x_sol_normal} of the double-integrator network \eqref{secondmatrix}.
\begin{lemma}[Characterization of Position based Performance via the Observable Eigenvalues] \label{thm_imag}
Consider the double-integrator network \eqref{secondmatrix} and the performance metric $P$ in \eqref{perf_x}. Let $T$ and $T^\prime$ be two systems given by \eqref{H_cases_1} with weighted Laplacian matrices $L$ and $L^\prime$. Suppose $L$ is normal and $L^{\prime}$ is given by \eqref{herm_part_L}. Then the following hold:
\begin{enumerate}[wide, labelwidth=!, labelindent=0pt]

\item 
$\lVert T \rVert_{\mathcal{H}_2}^2 = \lVert T^\prime \rVert_{\mathcal{H}_2}^2$
if $\I[\lambda_k] = 0 \, \, \forall k \in \mathcal{N}_{obsv}$.
\label{thm_Im_1}

\item \label{thm_Im_2}
$\lVert T \rVert_{\mathcal{H}_2}^2 \leq \lVert T^\prime \rVert_{\mathcal{H}_2}^2$
if 
\begin{equation} \label{cor3_cond1}
\!
\gamma_d (k_d + \gamma_d \R[\lambda_k]) - \gamma_p \geq 0, 
\ \
\forall k \in \mathcal{N}_{obsv}.
\end{equation}
%
Furthermore, $\lVert T \rVert_{\mathcal{H}_2}^2 < \lVert T^\prime \rVert_{\mathcal{H}_2}^2$ if in addition at least one of the inequalities in \eqref{cor3_cond1} strictly holds for some $k \in \mathcal{N}_{obsv}$ such that $\I[\lambda_k] \neq 0$
and relative position feedback is present, i.e. $\gamma_p \neq 0$.

Similarly, 
$\lVert T \rVert_{\mathcal{H}_2}^2 \geq \lVert T^\prime \rVert_{\mathcal{H}_2}^2$
if 
\begin{equation} \label{cor3_cond2}
\!
\gamma_d (k_d + \gamma_d \R[\lambda_k]) - \gamma_p \leq 0, 
\ \ 
\forall k \in \mathcal{N}_{obsv}.
\end{equation}
%
Furthermore $\lVert T \rVert_{\mathcal{H}_2}^2 ~>~ \lVert T^\prime \rVert_{\mathcal{H}_2}^2$ if in addition at least one of the inequalities in \eqref{cor3_cond2} strictly holds for some $k \in \mathcal{N}_{obsv}$ such that $\I[\lambda_k] \neq 0$
and relative position feedback is present, i.e. $\gamma_p \neq 0$.


%
%
\end{enumerate}
\end{lemma}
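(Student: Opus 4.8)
The plan is to start from the closed-form expression \eqref{perf_x_sol_normal} in Corollary \ref{cor5} and to compare the summands of $\lVert T \rVert_{\mathcal{H}_2}^2$ and $\lVert T^\prime \rVert_{\mathcal{H}_2}^2$ index by index. The first thing I would establish is that the two systems share the same geometric data. Since $L$ is normal it is unitarily diagonalizable, $L = R \Lambda R^*$ with $R$ unitary, so by \eqref{herm_part_L} we have $L^\prime = R\,\R[\Lambda]\,R^*$; that is, $L$ and $L^\prime$ have identical (orthonormal) eigenvectors and the spectrum \eqref{L_prime_spec}. Consequently the eigenvectors $\bold{r}_k$, the quantities $\nu_{kk}$ in \eqref{nu_eta_kappa}, and the observable set $\mathcal{N}_{obsv}$ coincide for both systems, as do $\alpha_k = k_p + \gamma_p\R[\lambda_k]$ and $\phi_k = k_d + \gamma_d\R[\lambda_k]$, while for $T^\prime$ we simply have $\beta_k = \xi_k = 0$. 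Thus $\lVert T^\prime \rVert_{\mathcal{H}_2}^2 = \sum_{k\in\mathcal{N}_{obsv}} \nu_{kk}/(2\alpha_k\phi_k)$.

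Next I would form the per-mode difference. Writing $D_k = \alpha_k\phi_k^2 + \beta_k\xi_k\phi_k - \beta_k^2$ and $D_k^\prime = \alpha_k\phi_k^2$ for the two denominators, a direct subtraction gives $D_k^\prime - D_k = \beta_k^2 - \beta_k\xi_k\phi_k$. Substituting $\beta_k = \gamma_p\I[\lambda_k]$ and $\xi_k = \gamma_d\I[\lambda_k]$ factors this as $D_k^\prime - D_k = -\gamma_p\,\I[\lambda_k]^2\,\bigl(\gamma_d(k_d+\gamma_d\R[\lambda_k]) - \gamma_p\bigr)$, in which the final parenthesis is exactly the quantity appearing in the hypotheses \eqref{cor3_cond1} and \eqref{cor3_cond2}. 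Hence the $k$-th summand of $\lVert T\rVert_{\mathcal{H}_2}^2 - \lVert T^\prime\rVert_{\mathcal{H}_2}^2$ equals $\nu_{kk}\,\tfrac{\phi_k}{2}\,(D_k^\prime - D_k)/(D_k D_k^\prime)$.

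The sign of each summand is then read off from the non-negative factors: $\nu_{kk} > 0$ by Lemma \ref{prop_obsv_modes}, $\phi_k > 0$ and $D_k, D_k^\prime > 0$ by input-output stability of $T$ and $T^\prime$ via Proposition \ref{prop_stab_rh}, and $\gamma_p\,\I[\lambda_k]^2 \geq 0$. Therefore the sign of the summand is opposite to that of $\gamma_d(k_d+\gamma_d\R[\lambda_k]) - \gamma_p$: condition \eqref{cor3_cond1} forces every summand to be $\leq 0$, giving $\lVert T\rVert_{\mathcal{H}_2}^2 \leq \lVert T^\prime\rVert_{\mathcal{H}_2}^2$, and \eqref{cor3_cond2} forces every summand to be $\geq 0$, giving the reverse. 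Part~\ref{thm_Im_1} is immediate, since $\I[\lambda_k]=0$ for all $k\in\mathcal{N}_{obsv}$ annihilates every summand. For the strict statements, if additionally $\gamma_p\neq 0$, some $k\in\mathcal{N}_{obsv}$ has $\I[\lambda_k]\neq 0$, and the corresponding inequality in \eqref{cor3_cond1} or \eqref{cor3_cond2} is strict, then that single summand is strictly signed while the remaining ones keep the same (weak) sign, which yields the strict inequality.

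I expect the only delicate points to be bookkeeping rather than conceptual: verifying that $L$ and $L^\prime$ genuinely share eigenvectors and the observable set $\mathcal{N}_{obsv}$, so that the comparison is truly term-by-term with matching $\nu_{kk}$, $\alpha_k$ and $\phi_k$; and confirming the positivity of both denominators $D_k$ and $D_k^\prime$, which requires that both $T$ and $T^\prime$ be input-output stable so that \eqref{perf_x_sol_normal} applies and each denominator is positive by \eqref{stab_rh}. Once these are in place, the factorization of $D_k^\prime - D_k$ does all the work.
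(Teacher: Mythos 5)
Your proposal is correct and follows essentially the same route as the paper's proof: a term-by-term comparison of the summands of \eqref{perf_x_sol_normal}, using the factorization $\beta_k\xi_k\phi_k - \beta_k^2 = \gamma_p\,\I[\lambda_k]^2\bigl(\gamma_d(k_d+\gamma_d\R[\lambda_k])-\gamma_p\bigr)$ together with $\nu_{kk}>0$ (Lemma \ref{prop_obsv_modes}) and the positivity of the denominators from Proposition \ref{prop_stab_rh}. The only difference is presentational — you form the per-mode difference explicitly and spell out that $L$ and $L^\prime$ share eigenvectors and hence $\nu_{kk}$ and $\mathcal{N}_{obsv}$, points the paper leaves implicit via Remark \ref{rem4} and \eqref{L_prime_spec}.
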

\begin{proof}
Invoking Remark \ref{rem4} and using \eqref{L_prime_spec}, both $\lVert T \rVert_{\mathcal{H}_2}^2$ and $\lVert T^\prime \rVert_{\mathcal{H}_2}^2$ are given by \eqref{P_normal_x_noImag} which leads to Item \textit{\ref{thm_Im_1})}. 
Condition \eqref{cor3_cond1} implies that $\beta_k \xi_k \phi_k - \beta_k^2 \geq 0$ for $k \in \mathcal{N}_{obsv}$ therefore
\begin{equation} \label{cor3_ineq}
\frac{\phi_k}{2 ( \alpha_k \phi_k^2 + \beta_k \xi_k \phi_k - \beta_k^2 ) } \leq \frac{1}{2\alpha_k \phi_k }, \ \ 
k \in \mathcal{N}_{obsv}.
\!
\end{equation}
Since $\nu_{kk} > 0$ for $k \in \mathcal{N}_{obsv}$ due to Lemma \ref{prop_obsv_modes}, multiplication of both sides of \eqref{cor3_ineq} by $\nu_{kk}$ and summation of the inequalities gives $\lVert T \rVert_{\mathcal{H}_2}^2 \leq \lVert T^\prime \rVert_{\mathcal{H}_2}^2$. 
If in addition to \eqref{cor3_cond1} at least one of these inequalities strictly holds for some $k \in \mathcal{N}_{obsv}$ such that $\I[\lambda_k] \neq 0$ and $\gamma_p \neq 0$, then
$\lVert T \rVert_{\mathcal{H}_2}^2 < \lVert T^\prime \rVert_{\mathcal{H}_2}^2$.
The reverse inequalities follow from \eqref{cor3_cond2} using a similar argument.
\end{proof}

Note that the results in Lemma \ref{thm_imag} hold for any output matrix $C$ satisfying Assumption \ref{assum1}. It is necessary that at least one observable eigenvalue does not lie on the real line for the performance of the directed and \defaultColor{undirected} systems 
to differ, and the gains need to be tuned based on \defaultColor{these} eigenvalues to improve performance. We next use this result to characterize the position-based performance of directed and \defaultColor{undirected} double-integrator systems 
in terms of relative feedback.

\begin{thm}[Characterization of Position based Performance via Relative Feedback]
\label{thm_gp}
Consider the double-integrator network \eqref{secondmatrix} and the performance metric $P$ in \eqref{perf_x}. Let $T$ and $T^\prime$ be two systems given by \eqref{H_cases_1} with weighted Laplacian matrices $L$ and $L^\prime$. Suppose that $L$ is normal
and $L^{\prime}$ is given by \eqref{herm_part_L}.
Then the following hold:
\begin{enumerate}[wide, labelwidth=!, labelindent=0pt]

\item If relative position feedback is absent, i.e. $\gamma_p = 0$, then $\lVert T \rVert_{\mathcal{H}_2}^2 = \lVert T^\prime \rVert_{\mathcal{H}_2}^2$.
\label{thm_gp_1}

\item If relative position feedback is present and relative velocity feedback is absent, i.e. $\gamma_p \neq 0$ and $\gamma_d = 0$, and
$\I[\lambda_k] \neq 0$ for some $k \in \mathcal{N}_{obsv}$,
then 
$\lVert T \rVert_{\mathcal{H}_2}^2 ~>~ \lVert T^\prime \rVert_{\mathcal{H}_2}^2$.
\label{thm_gp_2}

\item 
%
If both relative position and velocity feedback are present, i.e. $\gamma_p \neq 0$ and $\gamma_d \neq 0$, and
$\I[\lambda_k] \neq 0$ for some $k \in \mathcal{N}_{obsv}$,
then there exists 
$\underline{\gamma}_p$ and $\overline{\gamma}_p$ that satisfy
$$
\min\limits_{\substack{k \in \mathcal{N}_{obsv}, \\ \I[\lambda_k] \neq 0}}
\! \!
\R[\lambda_k]
\, \,
\leq
\, \,
\frac{\underline{\gamma}_p}{\gamma_d^2} - \frac{k_d}{\gamma_d}
\, \,
\leq
\, \,
\frac{\overline{\gamma}_p}{\gamma_d^2} - \frac{k_d}{\gamma_d}
\, \,
\leq
\, \,
\max\limits_{\substack{k \in \mathcal{N}_{obsv}, \\ \I[\lambda_k] \neq 0}}
\! \!
\R[\lambda_k],
$$
such that
$\lVert T \rVert_{\mathcal{H}_2}^2 < \lVert T^\prime \rVert_{\mathcal{H}_2}^2$ if 
$\gamma_p < \underline{\gamma}_p$
and
$\lVert T \rVert_{\mathcal{H}_2}^2 > \lVert T^\prime \rVert_{\mathcal{H}_2}^2$ if 
$\gamma_p > \overline{\gamma}_p$.
\label{thm_gp_3}

%
%
\begin{proof}
Invoking Remark \ref{rem4} and using \eqref{L_prime_spec} leads to Item \textit{\ref{thm_gp_1})}. Item \textit{\ref{thm_gp_2})} follows from Lemma \ref{thm_imag} by setting $\gamma_p \neq 0$ and $\gamma_d = 0$ in \eqref{cor3_cond2}. 
To prove Item \textit{\ref{thm_gp_3})} we observe from Lemma \ref{thm_imag} that
\begin{gather*}
\gamma_p >
\max\limits_{\substack{k \in \mathcal{N}_{obsv}, \\ \I[\lambda_k] \neq 0}}
\gamma_d (k_d + \gamma_d \R[\lambda_k])
=: \gamma_{u}
\Rightarrow
\lVert T \rVert_{\mathcal{H}_2}^2 > \lVert T^\prime \rVert_{\mathcal{H}_2}^2,
\\
\gamma_p <
\min\limits_{\substack{k \in \mathcal{N}_{obsv}, \\ \I[\lambda_k] \neq 0}}
\gamma_d (k_d + \gamma_d \R[\lambda_k])
=: \gamma_{l}
\Rightarrow
\lVert T \rVert_{\mathcal{H}_2}^2 < \lVert T^\prime \rVert_{\mathcal{H}_2}^2.
\end{gather*}
So 
$
\lVert T \rVert_{\mathcal{H}_2}^2 = \lVert T^\prime \rVert_{\mathcal{H}_2}^2
$ 
if
$\gamma_p = \underline{\gamma}_p$
and 
$
\lVert T \rVert_{\mathcal{H}_2}^2 < \lVert T^\prime \rVert_{\mathcal{H}_2}^2
$
if
$\gamma_p < \underline{\gamma}_p$
for some 
$\underline{\gamma}_p \in [\gamma_l, \gamma_u]$,
since 
$
\lVert T \rVert_{\mathcal{H}_2}^2$
and $\lVert T^\prime \rVert_{\mathcal{H}_2}^2
$ are
continuous in $\gamma_p$.
Similarly,
$
\lVert T \rVert_{\mathcal{H}_2}^2 = \lVert T^\prime \rVert_{\mathcal{H}_2}^2
$ 
if
$\gamma_p = \overline{\gamma}_p$
and 
$
\lVert T \rVert_{\mathcal{H}_2}^2 > \lVert T^\prime \rVert_{\mathcal{H}_2}^2
$
if
$\gamma_p > \overline{\gamma}_p$
for some 
$\overline{\gamma}_p \in [\gamma_l, \gamma_u]$.
Finally we note that 
$\underline{\gamma}_p \leq \overline{\gamma}_p$, because otherwise 
$\gamma_p = \underline{\gamma}_p 
> \overline{\gamma}_p
$
would imply that
$
\lVert T \rVert_{\mathcal{H}_2}^2 = \lVert T^\prime \rVert_{\mathcal{H}_2}^2
$
and
$
\lVert T \rVert_{\mathcal{H}_2}^2 > \lVert T^\prime \rVert_{\mathcal{H}_2}^2
$
must simultaneously hold, which is a contradiction.
\end{proof}

\end{enumerate}
\end{thm}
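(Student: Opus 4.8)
The plan is to dispatch the three items in increasing order of difficulty, relying throughout on Remark~\ref{rem4} and the comparison inequalities of Lemma~\ref{thm_imag}. Items~\textit{\ref{thm_gp_1})} and~\textit{\ref{thm_gp_2})} are degenerate specializations of Lemma~\ref{thm_imag}, whereas Item~\textit{\ref{thm_gp_3})} is the substantive claim and rests on a continuity (intermediate-value) argument in the scalar parameter $\gamma_p$.

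For Item~\textit{\ref{thm_gp_1})}, setting $\gamma_p=0$ makes $\beta_k=\gamma_p\I[\lambda_k]=0$, so the correction term $\beta_k\xi_k\phi_k-\beta_k^2$ vanishes identically; Remark~\ref{rem4} then collapses the metric to \eqref{P_normal_x_noImag}, which involves the Laplacian spectrum only through $\R[\lambda_k]$. Since \eqref{L_prime_spec} tells us $L^\prime$ carries exactly these real parts, this gives $\lVert T\rVert_{\mathcal{H}_2}^2=\lVert T^\prime\rVert_{\mathcal{H}_2}^2$. For Item~\textit{\ref{thm_gp_2})}, I would substitute $\gamma_d=0$ into the reverse hypothesis \eqref{cor3_cond2}: its left-hand side reduces to $-\gamma_p$, which is strictly negative because all gains are nonnegative and $\gamma_p\neq0$. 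Hence \eqref{cor3_cond2} holds with strict inequality at every $k$, and since some observable eigenvalue has $\I[\lambda_k]\neq0$ with $\gamma_p\neq0$, the strict clause of Lemma~\ref{thm_imag} yields $\lVert T\rVert_{\mathcal{H}_2}^2>\lVert T^\prime\rVert_{\mathcal{H}_2}^2$.

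The heart of the proof is Item~\textit{\ref{thm_gp_3})}. Here I would introduce the quantities
\[
\gamma_l:=\min_{\substack{k\in\mathcal{N}_{obsv},\\ \I[\lambda_k]\neq0}}\gamma_d(k_d+\gamma_d\R[\lambda_k]),\qquad
\gamma_u:=\max_{\substack{k\in\mathcal{N}_{obsv},\\ \I[\lambda_k]\neq0}}\gamma_d(k_d+\gamma_d\R[\lambda_k]).
\]
The guiding observation is that the difference $\lVert T\rVert_{\mathcal{H}_2}^2-\lVert T^\prime\rVert_{\mathcal{H}_2}^2$ only receives contributions from indices with $\I[\lambda_k]\neq0$, since for real eigenvalues the summands coincide by \eqref{L_prime_spec} and the shared eigenvectors of $L$ and $L^\prime$. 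Consequently, whenever $\gamma_p>\gamma_u$ the hypothesis \eqref{cor3_cond2} holds strictly on all such indices and Lemma~\ref{thm_imag} gives $\lVert T\rVert_{\mathcal{H}_2}^2>\lVert T^\prime\rVert_{\mathcal{H}_2}^2$; symmetrically, $\gamma_p<\gamma_l$ forces \eqref{cor3_cond1} strictly and $\lVert T\rVert_{\mathcal{H}_2}^2<\lVert T^\prime\rVert_{\mathcal{H}_2}^2$. Because both norms are rational---hence continuous---functions of $\gamma_p$ on the stable region where, by Proposition~\ref{prop_stab_rh}, the denominators in \eqref{perf_x_sol_normal} stay positive, the sign of the difference must change inside $[\gamma_l,\gamma_u]$. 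I would then take $\underline{\gamma}_p$ to be a crossing point below which the difference is negative and $\overline{\gamma}_p$ a crossing point above which it is positive, both lying in $[\gamma_l,\gamma_u]$.

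Two loose ends remain, and I expect the continuity bookkeeping to be the main obstacle rather than the surrounding algebra. First, I must verify $\underline{\gamma}_p\leq\overline{\gamma}_p$; I would argue by contradiction, since any $\gamma_p$ with $\overline{\gamma}_p<\gamma_p<\underline{\gamma}_p$ would force the difference of norms to be simultaneously positive and negative. Second, the displayed chain of $\R[\lambda_k]$ bounds follows by dividing $\gamma_l\leq\underline{\gamma}_p\leq\overline{\gamma}_p\leq\gamma_u$ through by $\gamma_d^2>0$ and subtracting $k_d/\gamma_d$, which maps the endpoints $\gamma_l,\gamma_u$ precisely to the minimum and maximum of $\R[\lambda_k]$ over the indices $k\in\mathcal{N}_{obsv}$ with $\I[\lambda_k]\neq0$. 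The delicate point I would want to pin down is that the set of stabilizing $\gamma_p$ is genuinely an interval, so that the intermediate-value step applies without the difference leaping across a region of instability.
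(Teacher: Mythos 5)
Your proposal follows essentially the same route as the paper's own proof: Item \textit{1)} via Remark \ref{rem4} and \eqref{L_prime_spec}, Item \textit{2)} by specializing \eqref{cor3_cond2} with $\gamma_d=0$, and Item \textit{3)} by defining the same thresholds $\gamma_l,\gamma_u$, invoking Lemma \ref{thm_imag} on either side, and closing with continuity in $\gamma_p$ plus the contradiction argument for $\underline{\gamma}_p\leq\overline{\gamma}_p$. The stability caveat you flag at the end is real but is equally left implicit in the paper, which simply asserts continuity of the norms in $\gamma_p$ under the standing assumption that the metrics are finite.
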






%
Directed communication degrades performance for 
metrics that capture some of the modes resulting from the directed 
\defaultColor{paths}
(i.e. $\I[\lambda_k] \neq 0$ for some $k \in \mathcal{N}_{obsv}$) if relative position feedback is used 
without
relative velocity feedback. For such metrics, this issue can be addressed in several ways depending on the available feedback.
For example, omitting relative position feedback (which requires absolute position feedback due to Assumption~\ref{assum_gain}) can mitigate this degradation.
In this case, the directionality of relative velocity feedback does not affect performance 
since 
directed and \defaultColor{undirected} systems 
perform identically.

It is when both types of relative feedback are used that tuning their respective gains properly 
can, not only mitigate the performance degradation, but also lead to the directed system outperforming its \defaultColor{undirected} counterpart. Therefore, it is critical to have relative velocity feedback in addition to relative position feedback.
Namely, 
the directed system  
performs better than 
its \defaultColor{undirected} counterpart
for sufficiently small 
relative position gain
(the converse is true for sufficiently large 
relative position gain). This sufficient magnitude is determined by the velocity gains as well as the magnitude of the real parts of the observable eigenvalues that have non-zero imaginary parts. As a consequence, a judicious control strategy depends on the topological characteristics of the network.

%

%

\subsection{Velocity based Performance}

\defaultColor{This subsection}
provides a comparison of the double integrator systems with respective Laplacians $L$ and $L^{\prime}$ \defaultColor{in terms of} the performance metric given in \eqref{perf_v_sol_normal}.
%
\begin{rem} \label{rem5}
The performance metric in \eqref{perf_v_sol_normal} 
simplifies to an expression that does not explicitly depend on 
$\I[\lambda_k]$ if
$\beta_k = 0$~
for 
$k \in \mathcal{N}_{obsv}$.
This holds if $\I[\lambda_k] = 0$ for $k \in \mathcal{N}_{obsv}$ or $L$ is symmetric or $\gamma_p = 0$.
If 
$\beta_k = 0$
for $k \in \mathcal{N}_{obsv}$, \eqref{perf_v_sol_normal}~ reduces to
\begin{equation} \label{P_normal_v_noImag}
\lVert T \rVert_{\mathcal{H}_2}^2
=
\sum_{k \in \mathcal{N}_{obsv}} \nu_{kk} \frac{1}{2 (k_d + \gamma_d \R[\lambda_k])},
\end{equation}
\defaultColor{when the}
stability condition \eqref{stab_rh} from Proposition \ref{prop_stab_rh} holds.
\end{rem}

In contrast to the position based performance metric in \eqref{P_normal_x_noImag}, the velocity based performance in \eqref{P_normal_v_noImag} depends only on absolute or relative velocity feedback and 
its
denominator 
is affine in $\R[\lambda_k]$. So, absolute or relative position feedback does not affect velocity based performance if $\mathcal{G}$ is undirected.

The following theorem demonstrates that if the velocity based performance of the system given by $\eqref{H_cases_2}$ is considered and its directed graph emits a normal weighted Laplacian, its $\mathcal{H}_2$ norm is lower bounded by the $\mathcal{H}_2$ norm of the corresponding 
\defaultColor{undirected}
system 
whose interconnection is defined by \eqref{herm_part_L}.
This \defaultColor{result} highlights the inability of standard feedback schemes to mitigate velocity-based performance degradation.

\begin{thm} [Characterization of Velocity based Performance]
\label{thm_vel}
Consider the double-integrator network \eqref{secondmatrix} and the performance metric $P$ in \eqref{perf_x}. Let $T$ and $T^\prime$ be two systems given by \eqref{H_cases_2} with weighted Laplacian matrices $L$ and $L^\prime$. Suppose that $L$ is normal and $L^{\prime}$ is given by \eqref{herm_part_L}.
Then the following hold:
\begin{enumerate}[wide, labelwidth=!, labelindent=0pt]

\item
\label{thm_vel_1}
$\lVert T \rVert_{\mathcal{H}_2}^2 
\geq 
\lVert T^\prime \rVert_{\mathcal{H}_2}^2$.

\item
\label{thm_vel_3}
$\lVert T \rVert_{\mathcal{H}_2}^2 > \lVert T^\prime \rVert_{\mathcal{H}_2}^2$ if and only if $\I[\lambda_k] \neq 0$ for some 
$k \in \mathcal{N}_{obsv}$
and relative position feedback is present, i.e. $\gamma_p ~\neq~ 0$.

\item
\label{thm_vel_2}
$\lVert T \rVert_{\mathcal{H}_2}^2 \!= \lVert T^\prime \rVert_{\mathcal{H}_2}^2$
if and only if $\I[\lambda_k] = 0 \, \, \forall k \in \mathcal{N}_{obsv}$ or relative position feedback is absent, i.e. $\gamma_p = 0$.
%
%
\end{enumerate}
\end{thm}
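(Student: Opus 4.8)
The plan is to evaluate both $\lVert T \rVert_{\mathcal{H}_2}^2$ and $\lVert T^\prime \rVert_{\mathcal{H}_2}^2$ through Corollary~\ref{cor5} and to compare them term-by-term over the index set $\mathcal{N}_{obsv}$. First I would observe that, since $L$ is normal, both $L$ and its Hermitian part $L^\prime$ in \eqref{herm_part_L} are unitarily diagonalizable with the \emph{same} orthonormal eigenvectors $\bold{r}_k$: indeed $L^\prime \bold{r}_k = \tfrac{1}{2}(\lambda_k + \overline{\lambda_k}) \bold{r}_k = \R[\lambda_k] \bold{r}_k$. Consequently the observable index set $\mathcal{N}_{obsv}$ and the geometric coefficients $\nu_{kk}$ in \eqref{nu_eta_kappa} coincide for the two systems. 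By \eqref{L_prime_spec} the eigenvalues of $L^\prime$ are exactly $\R[\lambda_k]$, so $L^\prime$ has real spectrum and Remark~\ref{rem5} applies to $T^\prime$, yielding $\lVert T^\prime \rVert_{\mathcal{H}_2}^2 = \sum_{k \in \mathcal{N}_{obsv}} \nu_{kk}/(2\phi_k)$ with $\phi_k = k_d + \gamma_d\R[\lambda_k]$.

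The crux is a single algebraic identity relating the $k$-th summand of \eqref{perf_v_sol_normal} for $T$ to $1/(2\phi_k)$ for $T^\prime$. Writing $D_k := \alpha_k \phi_k^2 + \beta_k \xi_k \phi_k - \beta_k^2$ for the common denominator, a direct substitution of the definitions of $\alpha_k,\phi_k,\beta_k,\xi_k$ gives the cancellation
\[
\phi_k(\xi_k \beta_k + \phi_k \alpha_k) - D_k = \beta_k^2,
\]
so that $\phi_k(\xi_k\beta_k + \phi_k\alpha_k) = D_k + \beta_k^2$. Stability (Proposition~\ref{prop_stab_rh}) guarantees $\phi_k > 0$ and $D_k > 0$ for every $k \in \mathcal{N}_{obsv}$, hence
\[
\frac{\xi_k \beta_k + \phi_k \alpha_k}{2 D_k} = \frac{1}{2\phi_k} + \frac{\beta_k^2}{2\phi_k D_k} \geq \frac{1}{2\phi_k},
\]
with equality if and only if $\beta_k = 0$. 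Since $\nu_{kk} > 0$ for $k \in \mathcal{N}_{obsv}$ by Lemma~\ref{prop_obsv_modes}, multiplying by $\nu_{kk}$ and summing over $\mathcal{N}_{obsv}$ establishes $\lVert T \rVert_{\mathcal{H}_2}^2 \geq \lVert T^\prime \rVert_{\mathcal{H}_2}^2$, which is Item~\ref{thm_vel_1}.

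For the equality characterization I would simply track when the term-wise inequality is strict. The excess $\beta_k^2/(2\phi_k D_k)$ is strictly positive precisely when $\beta_k = \gamma_p \I[\lambda_k] \neq 0$, i.e.\ when $\gamma_p \neq 0$ and $\I[\lambda_k] \neq 0$. Thus the aggregate inequality is strict if and only if $\gamma_p \neq 0$ and $\I[\lambda_k] \neq 0$ for some $k \in \mathcal{N}_{obsv}$, which is exactly Item~\ref{thm_vel_3}; and equality holds if and only if this fails, namely $\gamma_p = 0$ or $\I[\lambda_k] = 0$ for all $k \in \mathcal{N}_{obsv}$, giving Item~\ref{thm_vel_2}. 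The only non-routine step is verifying the identity $\phi_k(\xi_k\beta_k + \phi_k\alpha_k) - D_k = \beta_k^2$, but this is a short cancellation once the definitions are substituted; everything else reduces to the positivity facts ($D_k>0$, $\phi_k>0$, $\nu_{kk}>0$) already established in the paper.
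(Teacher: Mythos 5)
Your proposal is correct and follows essentially the same route as the paper: a term-by-term comparison of \eqref{perf_v_sol_normal} against \eqref{P_normal_v_noImag} over $\mathcal{N}_{obsv}$, using $\phi_k>0$ and $D_k>0$ from Proposition~\ref{prop_stab_rh} and $\nu_{kk}>0$ from Lemma~\ref{prop_obsv_modes}, with strictness tracked through $\beta_k^2$. Your explicit identity $\phi_k(\xi_k\beta_k+\phi_k\alpha_k)-D_k=\beta_k^2$ is just a cleaner rewriting of the paper's inequality \eqref{cor12_ineq}, and your remark that $L$ and $L^\prime$ share eigenvectors (hence the same $\mathcal{N}_{obsv}$ and $\nu_{kk}$) makes explicit a step the paper leaves implicit.
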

\begin{proof}
Since $-\beta_k^2 = - \gamma_p^2 \I[\lambda_k]^2 \leq 0$, it holds that
\begin{equation} \label{cor12_ineq}
\alpha_k \phi_k^2 + \beta_k \xi_k \phi_k - \beta_k^2 \leq \alpha_k \phi_k^2 + \beta_k \xi_k \phi_k, \ \ 
k \in \mathcal{N}_{obsv}.
\end{equation}
Stability condition \eqref{stab_rh} from Proposition \ref{prop_stab_rh} states that 
\begin{equation} \label{stab_vel_perf_thm}
\alpha_k \phi_k^2 + \beta_k \xi_k \phi_k - \beta_k^2 > 0 \ \text{and} \ \phi_k > 0, \ \ 
k \in \mathcal{N}_{obsv}.
\end{equation}
Therefore, \eqref{cor12_ineq} can be re-arranged as 
\begin{equation} \label{vel_perf_ineq}
\frac{ \xi_k \beta_k + \phi_k \alpha_k }
{ \alpha_k \phi_k^2 + \beta_k \xi_k \phi_k - \beta_k^2  }
\geq \frac{1}{\phi_k},
\ \
k \in \mathcal{N}_{obsv}.
\end{equation}
Since $\nu_{kk} > 0$ for $k \in \mathcal{N}_{obsv}$ as shown in Lemma \ref{prop_obsv_modes}, 
\begin{equation} \label{cor12_ineq_2}
\nu_{kk}
\frac{ \xi_k \beta_k + \phi_k \alpha_k }
{2( \alpha_k \phi_k^2 + \beta_k \xi_k \phi_k - \beta_k^2 ) }
\geq 
\nu_{kk}
\frac{1}{2 \phi_k},
\ \
k \in \mathcal{N}_{obsv}.
\end{equation}
Summation of the inequalities given in \eqref{cor12_ineq_2} and using \eqref{perf_v_sol_normal} and \eqref{P_normal_v_noImag} leads to \defaultColor{Item \textit{\ref{thm_vel_1})}}. 

To prove the necessity part of \defaultColor{Item \textit{\ref{thm_vel_3})}}, we observe that 
$-\beta_k^2 = - \gamma_p^2 \I[\lambda_k]^2 < 0$ for some 
$k \in \mathcal{N}_{obsv}$ therefore
\eqref{cor12_ineq} strictly holds for such $k$. Then by a similar argument to the one used above, \eqref{cor12_ineq_2} strictly holds for such $k$ as well, which leads to 
$\lVert T \rVert_{\mathcal{H}_2}^2 > \lVert T^\prime \rVert_{\mathcal{H}_2}^2$. \defaultColor{To prove 
sufficiency} 
suppose that
$\lVert T \rVert_{\mathcal{H}_2}^2 > \lVert T^\prime \rVert_{\mathcal{H}_2}^2$. Using \eqref{perf_v_sol_normal} and \eqref{P_normal_v_noImag}, this implies that \eqref{cor12_ineq_2} strictly holds for some $k \in \mathcal{N}_{obsv}$ (otherwise 
$\lVert T \rVert_{\mathcal{H}_2}^2 = \lVert T^\prime \rVert_{\mathcal{H}_2}^2$). Since $\nu_{kk} > 0$ for $k \in \mathcal{N}_{obsv}$, 
\eqref{vel_perf_ineq} strictly holds for some $k \in \mathcal{N}_{obsv}$ as well. Using \eqref{stab_vel_perf_thm} and re-arranging terms
leads to $\beta_k^2 = \gamma_p^2 \I[\lambda_k]^2 > 0$ for some $k \in \mathcal{N}_{obsv}$ implying that $\I[\lambda_k] \neq 0$ for some $k \in \mathcal{N}_{obsv}$ and $\gamma_p \neq 0$.
%
%
Finally we note that \defaultColor{items \textit{\ref{thm_vel_1})} and \textit{\ref{thm_vel_3})} imply Item \textit{\ref{thm_vel_2})}.}
\end{proof}

Unlike position based performance, there does not exist a choice of control gains for the directed system that can result in better velocity based performance compared to 
its \defaultColor{undirected} counterpart
for any output matrix $C$ satisfying Assumption \ref{assum1}. 
Furthermore, when relative position feedback is used,
the directed system 
performs strictly worse
compared to 
its \defaultColor{undirected} counterpart
for 
metrics capturing the effect of 
the directed interconnection.
They perform identically without relative position feedback or if 
metrics do not capture the edge directionality.

When the overall system performance is considered in terms of both position and velocity based metrics, a trade-off emerges.
For systems with observable directed paths, it is possible to have equal performance to that of their
\defaultColor{undirected}
counterparts in the case of both position and velocity based metrics by omitting relative position feedback. But this is true only if absolute position feedback is used, as 
it is required for stability (Assumption \ref{assum_gain}).
Therefore, unless absolute position measurements are available, the directed system requires well-tuned gains to prevent degradation of the position-based performance (or to possibly improve it) while it will always have worse velocity-based performance compared to the 
\defaultColor{undirected}
system. For directed systems with absolute position feedback, improving position-based performance comes at the expense of the velocity-based performance.
\begin{rem}
For the particular metric defined as the variance of the full-state, the $\mathcal{H}_2$ norm of a linear system can be upper bounded by the $\mathcal{H}_2$ norm of  a system whose dynamics emit the Hermitian part of the original state matrix \cite[Theorem~2]{SiamiMotee2017}. In the case of double-integrator networks, this comparison does not 
\defaultColor{explicitly}
account for \defaultColor{the Laplacian eigenvalues, i.e.} communication directionality. In contrast,
we have studied communication directionality for general quadratic metrics by comparing directed graphs and their \defaultColor{undirected} counterparts represented by the Hermitian part of the Laplacian \eqref{herm_part_L}. 
Our results characterize performance as an aggregate outcome of judicious control strategy and network topology.
%
\end{rem}

%
%
%

%

%

%

%
%
\begin{figure*}[t!]
\centering
\subfloat
{
       \includegraphics[width=0.28\linewidth]{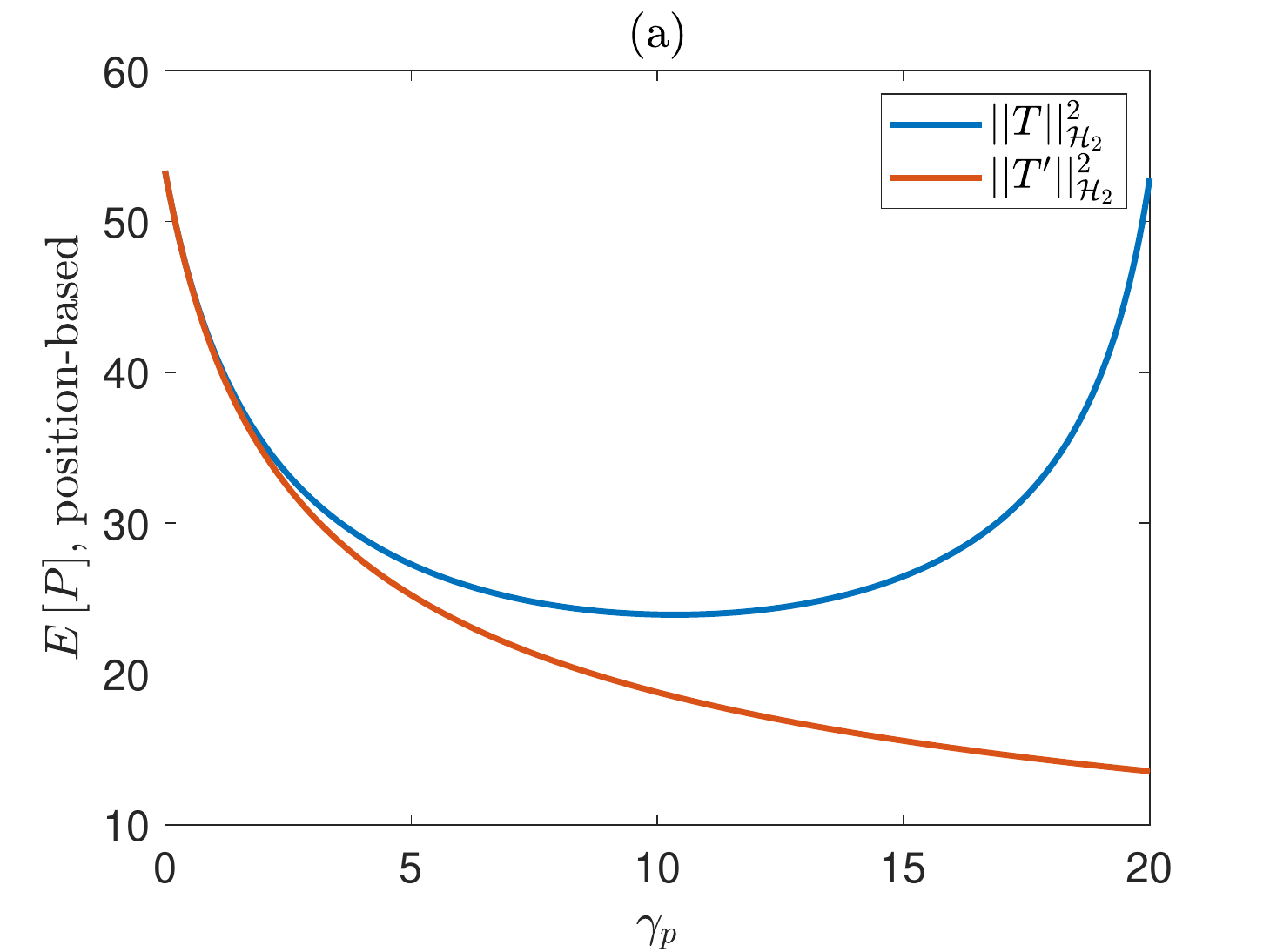}
       }
\subfloat
{
       \includegraphics[width=0.28\linewidth]{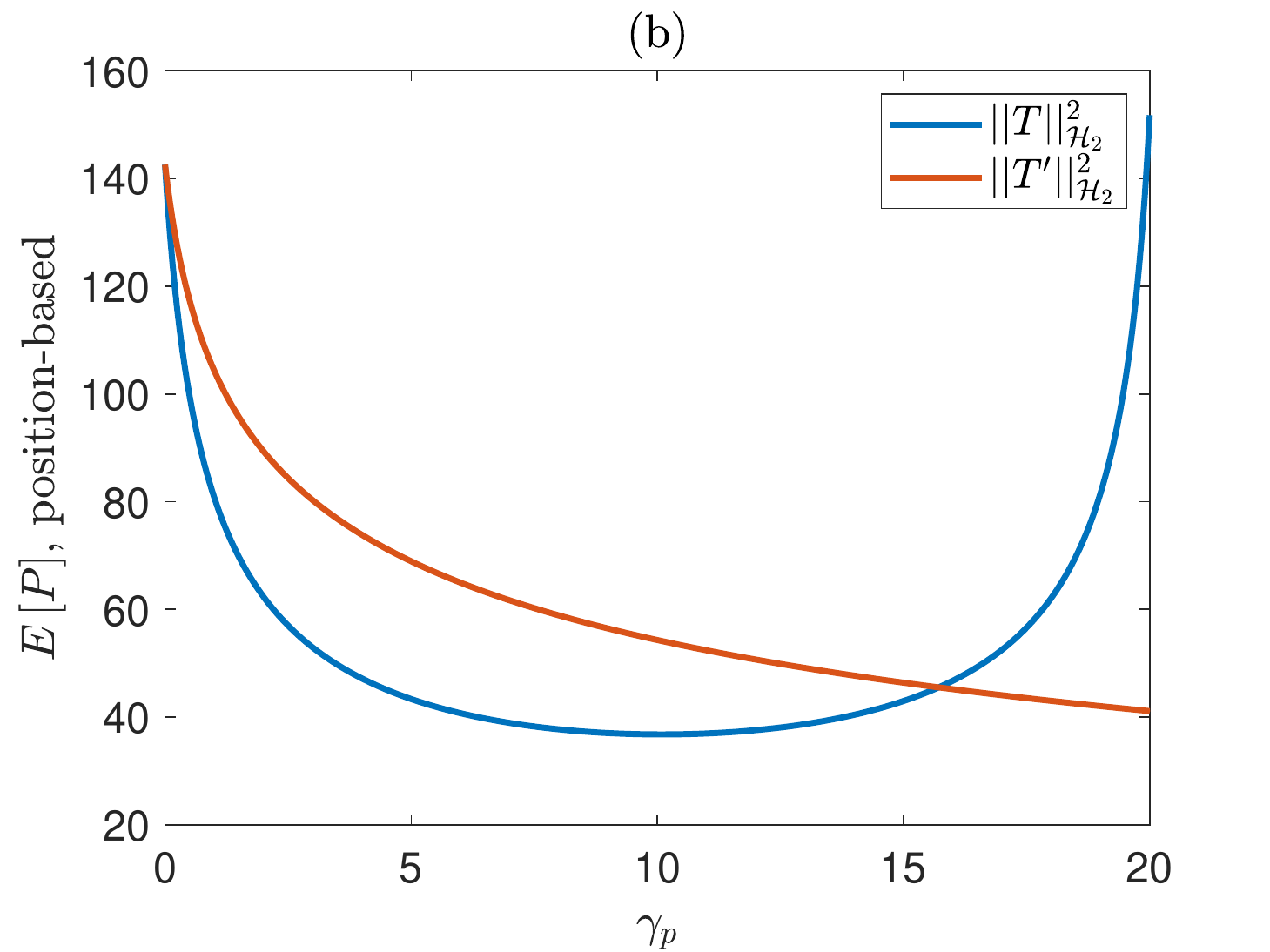}
       }
\subfloat
{
       \includegraphics[width=0.28\linewidth]{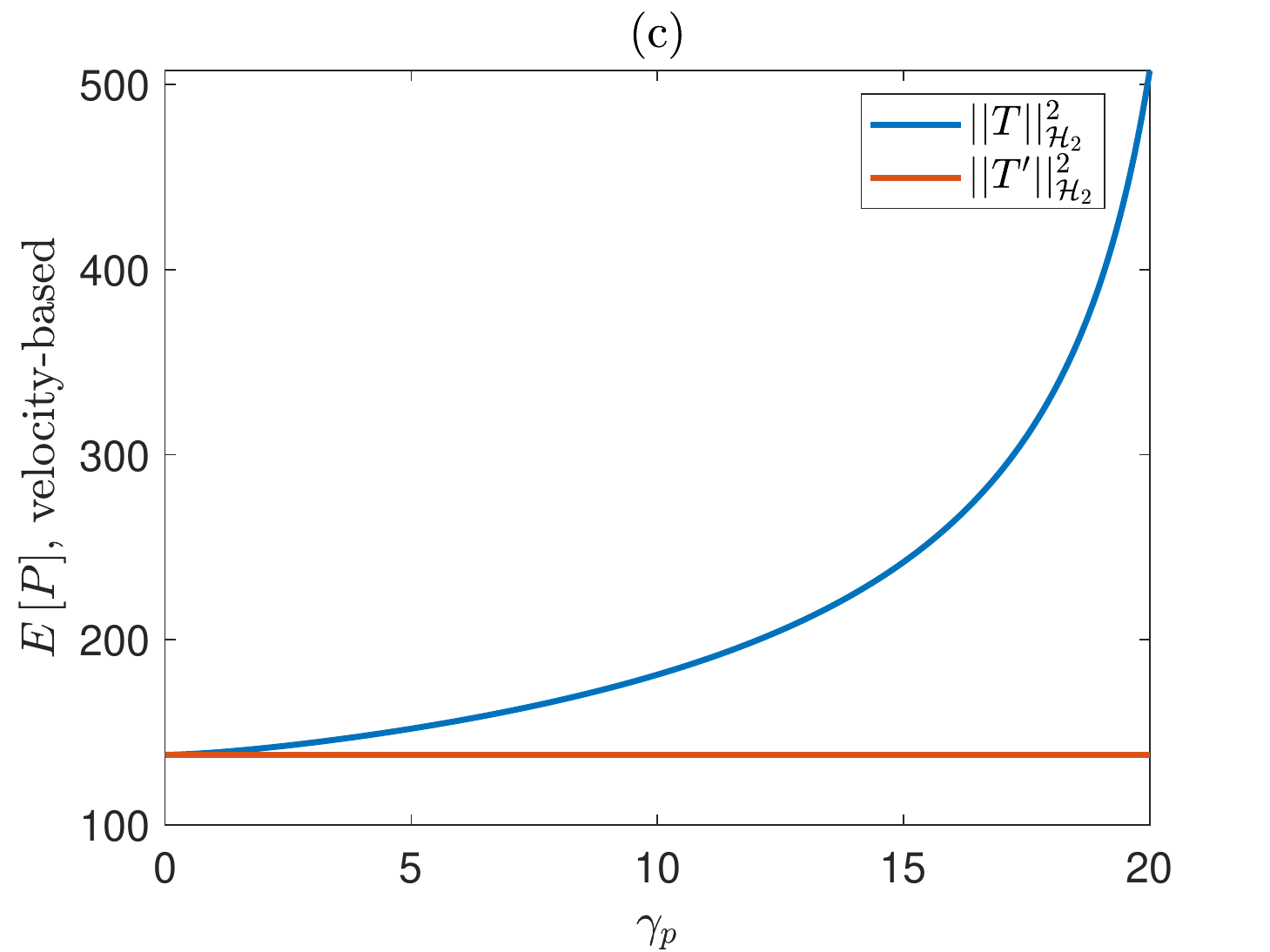}
       }
       
\vspace{-6pt}       

    \caption{ 
    The expectation of the position-based performance of the double-integrator system \eqref{secondmatrix} given by $\eqref{H_cases_1}$, for $E\left[ \Sigma_0 \right] = I$ and the gains
    \mbox{\textbf{(a)}
    $k_p = 3, k_d = 5, \gamma_d = 0$,}
    \textbf{(b)}
    $k_p = 1, k_d = 2, \gamma_d = 6.5$.
    \textbf{(c)}
    The expectation of the velocity-based performance of the double-integrator system \eqref{secondmatrix} given by $\eqref{H_cases_2}$, for $E\left[ \Sigma_0 \right] = I$ and the gains
    $k_p = 1, k_d = 2, \gamma_d = 7$. 
    }
    \label{fig_section6}
\vspace{-10pt}
\end{figure*}
%
%

\subsection{Example: Position and Velocity based Performance with Uni-directional vs. Bi-directional Feedback}


We now consider a cyclic digraph in which each node has uniform out-degree $d$ and the uniformly weighted edges that start at each node reach $\omega$ succeeding nodes. This results in `look-ahead' type state measurements through $\omega$ communication hops. The respective weighted Laplacian is given by
\begin{equation} \label{cycle_L}
%
%
%
\! \,
L^{cyc}(d,\omega) \! = \! 
d \! \times \! \circulant{ \left( 
\begin{bmatrix}
1 & -\frac{1}{\omega} & \dots & -\frac{1}{\omega} & 0 & \dots & 0 
\end{bmatrix}
\right) } \!, \! \! \! \! \!
\end{equation}
where $d \in \mathbb{R}^{+}$, $\omega \in \mathbb{Z}^{+}$, $\omega \leq n-1$ 
and $\circulant(\cdot)$ denotes the circulant matrix generated by permuting the row vector in the argument. The Jordan decomposition of $L = L^{cyc}$ gives \cite{HornJohnson}
\begin{equation} \label{jordan_cycle}
J_k = \lambda_k = 
d \left( 1 - \frac{1}{\omega} \sum_{i=1}^{\omega}
e^{-\bold{j} \frac{2 \pi}{n} i (k-1)} \right),
\end{equation}
for $k = 1, \dots, n$. Choosing $\alpha = \frac{1}{\sqrt{n}}$ in \eqref{RandQ}, the columns of $\tilde{R}$ are given by
\begin{align}
\bold{r}_{l} &= \frac{1}{\sqrt{n}}
\begin{bmatrix} 
1 & e^{\bold{j} \frac{2 \pi}{n} (l-1)} & \dots & e^{\bold{j} \frac{2 \pi}{n} (l-1) (n-1)}
\end{bmatrix} ^*, \label{r_cycle} 
\end{align}
for $l=2,\dots,n$. For the special case of uni-directional feedback, we set $d=1$ and $\omega =1$ in \eqref{cycle_L} therefore
$$
L = L^{cyc}(1,1) \quad
\text{and} \quad
L^\prime = 
\frac
{L^{cyc}(1,1) + L^{cyc}(1,1)^*}
{2},
$$  
%
where we have used \eqref{herm_part_L} to also define the corresponding
bi-directional feedback. 
%
%
We consider the respective systems $T$ and $T^\prime$ with an 
arbitrary
output matrix $C \in \mathbb{R}^{n \times n}$ 
that satisfies Assumption \ref{assum1},
for $n=50$.

For the double-integrator network \eqref{secondmatrix} given by $\eqref{H_cases_1}$ (position based performance),
Figure \ref{fig_section6}a shows 
that, as suggested by
\defaultColor{Item \textit{\ref{thm_gp_2})}} of Theorem \ref{thm_gp}, using relative position feedback without relative velocity feedback ($\gamma_p \neq 0$ and $\gamma_d = 0$) 
leads to worse performance with directed interconnection. 
It is when both relative position and velocity measurements are used ($\gamma_p \neq 0$ and $\gamma_d \neq 0$) that the directed cycles can be utilized for better performance by tuning the gains. Per \defaultColor{Item \textit{\ref{thm_gp_3})}} of Theorem \ref{thm_gp}, 
sufficiently small $\gamma_p$ (i.e. sufficiently large velocity gains $k_d$ and $\gamma_d$) improves the performance of the directed interconnection relative to its 
\defaultColor{undirected}
counterpart;
but the performance degrades
%
for sufficiently large $\gamma_p$, as shown in Figure \ref{fig_section6}b. Directed cycles require less communication thus can be preferable, provided the gains are carefully selected.
%

For the double-integrator network \eqref{secondmatrix} given by $\eqref{H_cases_2}$ (velocity based performance),
Figure \ref{fig_section6}c shows that relative position feedback degrades performance if the cycles are directed. But the performance becomes comparable to that of the 
\defaultColor{undirected} 
system for sufficiently small $\gamma_p$, equaling it at $\gamma_p = 0$. 
This 
\defaultColor{supports} the findings of
Theorem \ref{thm_vel}.
%
%

%

%
%
%

%
%

\section{All-to-One vs. $\omega$-Nearest Neighbor Networks}
\label{CentDistFeed}

In this section, we compare two different relative feedback schemes.
The first one is called 
an \mbox{all-to-one} network,
which designates a `leader' node that receives no relative feedback, \defaultColor{where} the remaining nodes \defaultColor{have access to} uniformly weighted uni-directional state measurements relative to the leader only. The second one is referred to as 
an $\omega$-nearest neighbor network,
which is based on uniformly weighted uni-directional state measurements of each node relative to 
$\omega$
succeeding nodes. 
\defaultColor{
We consider performance metrics that have circulant output matrices $C$,
%
%
which arise in many 
applications 
such as
quantifying lack of coherence in a system in terms of global 
or local disorder \cite{BamiehJovanovic2012, teglingbamieh2019, oralgaymeacc2019}.}
%

%
\subsection{Imploding Star Graph: All-to-One Networks} 

\mbox{All-to-one} networks
can be modeled \defaultColor{as} the
imploding star graph whose edge weights are normalized such that the out-degree of each node is $\frac{n}{n-1}$. 
The \defaultColor{corresponding} weighted Laplacian is given by
\begin{equation} \label{L_star}
L = 
%
%
%
\frac{n}{n-1}
\begin{bmatrix}
I_{n-1} & -\bold{1} \\
\bold{0}^{\intercal} & 0
\end{bmatrix}, 
\end{equation}
%
\defaultColor{with
total out-degree 
$n$.} The Jordan decomposition gives
\begin{equation} \label{jordan_star}
J = 
\frac{n}{n-1}
\begin{bmatrix}
0 & \bold{0}^{\intercal} \\
\bold{0} & I_{n-1}
\end{bmatrix}. 
\end{equation}
Choosing $\alpha = 1$ in \eqref{RandQ}, the matrices $\tilde{R}$ and $\tilde{Q}$ are given by
\begin{equation}
\tilde{R} =
\begin{bmatrix}
I_{n-1} \\ 
\bold{0}^{\intercal}
\end{bmatrix} 
\
\text{and}
\
\tilde{Q} =
\begin{bmatrix}
I_{n-1} & -\bold{1}
\end{bmatrix}.
\label{r_q_star}
\end{equation}
\subsubsection{Single-Integrator Networks}
The next theorem provides the solution for \eqref{perf_x} for the single-integrator network \eqref{firstmatrix} \defaultColor{using Theorem \ref{thm3} and the decomposition given by \eqref{jordan_star} and \eqref{r_q_star}.}
%
\begin{thm} \label{cor7}
Consider the single-integrator network \eqref{firstmatrix}. Suppose that $\mathcal{G}$ is an imploding star graph with the weighted Laplacian \eqref{L_star}, $C$ is circulant and the disturbance 
\defaultColor{has unit covariance, 
i.e. $ E [ \Sigma_0 ] = I$}. Then the expectation of the performance metric \eqref{perf_x} for the system $T$ given by \eqref{H_cases_1} is
%
%
%
\begin{align}
\label{perf_star}
\!
E\left[P\right] \! = 
\lVert T \rVert_{\mathcal{H}_2}^2
= \,
 \frac{n-1}{n^2} 
\sum_{i=2}^{n} \, & \mu_i
\Bigg( \! n-1 \, \, +
\\ 
\nonumber
+ &
\! \! \! \! \! \!
\sum_{ \substack{ l > k, \\ k,l \in \{ 2, \dots, n \} } }
\! \! \! \! \! \! \! \!
\cos \left( \! \frac{2 \pi}{n} (i-1) (l-k) \! \right) \!
\! \Bigg) \!. 
\! \! \!
\end{align}
%
\end{thm}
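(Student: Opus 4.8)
The plan is to specialize the diagonalizable-case formula of Theorem~\ref{thm3} to the imploding star graph and then evaluate the trace in $P=\tr(\Sigma_Q\Psi)$ explicitly, exploiting the circulant structure of $C$. The first observation is that the Jordan decomposition \eqref{jordan_star} makes the entire nonzero spectrum degenerate and real: $\lambda_k=\tfrac{n}{n-1}$ for every $k\in\{2,\dots,n\}$, so $\R[\lambda_k]=\R[\lambda_l]=\tfrac{n}{n-1}$ and $\I[\lambda_k]=\I[\lambda_l]=0$. Substituting this into \eqref{psi_ik_diagable} collapses the denominator and gives the index-independent factor $\Psi_{kl}=\nu_{kl}\,\tfrac{n-1}{2n}$ for all $k,l$. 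Since $E[\Sigma_0]=I$, Proposition~\ref{Prop_H2L2} identifies $E[P]$ with $\lVert T\rVert_{\mathcal{H}_2}^2$, and by linearity of the trace $E[P]=\tr(E[\Sigma_Q]\Psi)$ with $E[\Sigma_Q]=\tilde{Q}\,E[\Sigma_0]\,\tilde{Q}^{*}=\tilde{Q}\tilde{Q}^{*}$.

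Next I would evaluate the two factors of this trace. From the partition \eqref{r_q_star}, $\tilde{Q}\tilde{Q}^{*}=I_{n-1}+\bold{1}\bold{1}^{\intercal}$, i.e. a matrix with $2$ on the diagonal and $1$ off the diagonal. For $\Psi$ I need $\nu_{\eta,\kappa}$ from \eqref{nu_eta_kappa}. Because $C$ is circulant, $M=C^{*}C$ in \eqref{matrix_M} is circulant, hence unitarily diagonalized by the Fourier modes $\boldsymbol{\theta}_i$ with entries $(\boldsymbol{\theta}_i)_m=\tfrac{1}{\sqrt{n}}e^{\bold{j}\frac{2\pi}{n}(i-1)(m-1)}$ (consistently with $\boldsymbol{\theta}_1=\tfrac{1}{\sqrt{n}}\bold{1}$ and $\mu_1=0$). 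On the other hand the columns $\bold{r}_k$ of $\tilde{R}$ in \eqref{r_q_star} are the (real) standard basis vectors, so each inner product $\bold{r}_\eta^{*}\boldsymbol{\theta}_i=(\boldsymbol{\theta}_i)_{\eta-1}$ merely extracts a single Fourier coefficient. Substituting into \eqref{nu_eta_kappa} collapses the products and yields $\nu_{\eta,\kappa}=\tfrac{1}{n}\sum_{i=2}^{n}\mu_i\,e^{\bold{j}\frac{2\pi}{n}(i-1)(\eta-\kappa)}$, which depends only on the index difference $\eta-\kappa$.

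Finally I would assemble the trace as $E[P]=\tfrac{n-1}{2n}\sum_{k,l}(E[\Sigma_Q])_{kl}\,\nu_{lk}$ and split the double sum according to the structure of $E[\Sigma_Q]$: the diagonal part ($k=l$, weight $2$, with $\nu_{kk}=\tfrac{1}{n}\sum_i\mu_i$) contributes $2(n-1)\tfrac{1}{n}\sum_i\mu_i$, while the off-diagonal part ($k\neq l$, weight $1$) pairs each term $(k,l)$ with $(l,k)$ so that the conjugate exponentials combine into $2\sum_{l>k}\cos\!\big(\tfrac{2\pi}{n}(i-1)(l-k)\big)$. Collecting the $\mu_i$-weighted contributions and simplifying the prefactor $\tfrac{n-1}{2n}\cdot\tfrac{2}{n}=\tfrac{n-1}{n^2}$ produces exactly \eqref{perf_star}.

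I expect the only real obstacle to be the bookkeeping in this last step: correctly tracking the diagonal ($2$) versus off-diagonal ($1$) weights of $E[\Sigma_Q]$, converting the complex Fourier exponentials into a single real cosine sum indexed by $l>k$, and verifying that the counting of terms reproduces both the $(n-1)$ inside the bracket and the overall $\tfrac{n-1}{n^2}$ prefactor. The two conceptual ingredients---the degeneracy of the star spectrum and the Fourier form of $\nu_{\eta,\kappa}$---are immediate, so the substance of the argument is this combinatorial reindexing.
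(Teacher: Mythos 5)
Your proposal is correct and follows essentially the same route as the paper's proof: specialize \eqref{psi_ik_diagable} using $\lambda_k=\tfrac{n}{n-1}$ to get $\Psi_{kl}=\tfrac{n-1}{2n}\nu_{kl}$, use $\tilde{Q}\tilde{Q}^{*}=I_{n-1}+\bold{1}\bold{1}^{\intercal}$ from \eqref{r_q_star}, compute $\nu_{\eta,\kappa}$ from the Fourier eigenvectors of the circulant $M$ paired with the standard-basis columns of $\tilde{R}$, and collect diagonal and off-diagonal contributions into the cosine sum. The only deviation is an immaterial sign flip in the exponent of your expression for $\nu_{\eta,\kappa}$, which disappears once the $k\neq l$ terms are paired into cosines.
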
 
\begin{proof}
Using the fact that $E\left[\Sigma_0 \right]= I$, we have $E\left[P\right] = \tr (\tilde{Q} \tilde{Q}^* \Psi)$. \eqref{r_q_star} leads to $\tilde{Q} \tilde{Q}^* = I_{n-1} + \bold{1} \bold{1}^\intercal$ which gives
\begin{equation} \label{P_psi_star}
E\left[
P\right] = \sum_{k=2}^{n} \Psi_{kk} + \sum_{k=2}^{n} \sum_{l=2}^{n} \Psi_{kl}.
\end{equation}
The matrix $M$ in \eqref{matrix_M} has the  eigenvectors 
\begin{equation} \label{C_circ_evec}
\boldsymbol{\theta}_{l} = \frac{1}{\sqrt{n}}
\begin{bmatrix} 
1 & e^{\bold{j} \frac{2 \pi}{n} (l-1)} & \dots & e^{\bold{j} \frac{2 \pi}{n} (l-1) (n-1)}
\end{bmatrix} ^*
\end{equation}
for $l=2,\dots,n$.
Using \eqref{C_circ_evec} and the columns of $\tilde{R}$ given in \eqref{r_q_star}, the scalar products in \eqref{nu_eta_kappa} are obtained as
\begin{equation} \label{scalar_star_r_theta}
\left \langle \boldsymbol{\theta}_{i}, \bold{r}_{k} \right \rangle
=
\frac{1}{\sqrt{n}}  e^{-\bold{j} \frac{2 \pi}{n} (i-1) (k-2)}, \quad
k = 2,\dots,n.
\end{equation}
By \eqref{psi_ik_diagable} and the fact that $\lambda_i = \frac{n}{n-1}$ for $i=2,\dots,n$ we have $ \Psi_{kl} = \frac{n-1}{2n} \nu_{kl}$, therefore using \eqref{nu_eta_kappa} and \eqref{scalar_star_r_theta} results in
\begin{equation} \label{perf_exp_star}
\small
E\left[P\right] 
= 
\frac{n-1}{2n^2} \!
\left(
\sum_{k=2}^{n} \sum_{i=2}^{n} \mu_i
+ \sum_{k=2}^{n} \sum_{l=2}^{n}  \sum_{i=2}^{n} 
e^{\bold{j} \frac{2 \pi}{n} (i-1) (l-k)} \mu_i
\right) \! \!. 
\! \! 
\end{equation}
Rearranging the terms in \eqref{perf_exp_star} and using Proposition \ref{Prop_H2L2} gives the result.
\end{proof}
\defaultColor{
We now consider a special case of 
circulant output matrices $C$,
which leads to
a global measure of 
disorder
that 
quantifies the aggregate state deviation from the average through 
\begin{equation} \label{C_dav}
C = I - \frac{1}{n} \bold{1} \bold{1}^\intercal
= L^{cyc} \left( \frac{n-1}{n},n-1 \right).
\end{equation}
This metric will be denoted by $P_{dav}$.}
\subsubsection*{Relationship to Previous Results}
\defaultColor{
For $P_{dav}$, the following proposition shows that the result in \cite{YoungLeonard2010} can be reproduced  as a special case of Theorem \ref{cor7}.}
\begin{prop} \label{prop_star}
Consider the single-integrator network \eqref{firstmatrix} \defaultColor{and the output matrix \eqref{C_dav}, i.e. the performance metric $P_{dav}$}. Suppose that $\mathcal{G}$ is an imploding star graph with the weighted Laplacian \eqref{L_star}, 
and the disturbance 
\defaultColor{has unit covariance,
i.e. $ E [\Sigma_0] = I$}. Then the expectation of the performance metric \eqref{perf_x} for the system $T$ given by \eqref{H_cases_1} is 
\begin{equation} \label{P_star_dav}
E\left[
P_{dav}
\right] =
\lVert T \rVert_{\mathcal{H}_2}^2
=
 \frac{(n-1)^2}{2n}.
\end{equation}
\end{prop}
\begin{proof}
The fact that $\mu_i = 1 \ \forall i$ and \eqref{perf_exp_star} gives
$$
E\left[
P_{dav}
\right] = \frac{n-1}{2n^2} 
\bigg(
2 (n-1)^2
+ \sum_{k \neq l}  \sum_{i=2}^{n} 
e^{\bold{j} \frac{2 \pi}{n} (i-1) (l-k)}
\bigg).
$$
%
%
Since 
$\sum_{i=1}^{n} e^{ \bold{j} \frac{2 \pi}{n} (i-1) (l-k)} \! = \! 0$ for $l - k  \! = \! \pm 1, \dots, \pm (n-2)$,
%
$$
E\left[
P_{dav}
\right] = \frac{n-1}{2n^2} 
\bigg(
2 (n-1)^2
- \underbrace{\sum_{k \neq l} e^{\bold{j} \frac{2 \pi}{n} 0 (l-k)}}_{ = (n-1)(n-2)}
\bigg). 
\quad 
\mbox{\qedhere}
\! \! \! \! \! \! \! 
\! \! \! \! \! \! 
$$ 
%
\end{proof}

\subsubsection{Double-Integrator Networks}
Using Corollary \ref{cor3} \defaultColor{from Section \ref{diagL}}, the following theorem \defaultColor{characterizes performance metric \eqref{perf_x} for 
all-to-one 
networks with double-integrator dynamics \eqref{secondmatrix}}. 
%

%
%

\begin{thm} \label{cor8}
Consider the double-integrator network \eqref{secondmatrix}. Suppose that $\mathcal{G}$ is an imploding star graph with the weighted Laplacian \eqref{L_star}, the output matrix $C$ is circulant and the disturbance 
\defaultColor{has unit covariance,
i.e. $ E [\Sigma_0] = I$}. Then the expectation of the performance metric \eqref{perf_x} is
\begin{equation} \label{P_star_secpos}
E\left[P\right]
=
\lVert T \rVert_{\mathcal{H}_2}^2
 = 
P_{0}
\frac{1}{ 2 (k_p + \gamma_p \frac{n}{n-1})(k_d + \gamma_d \frac{n}{n-1}) } 
\end{equation}
for the system $T$ given by \eqref{H_cases_1} and
\begin{equation} \label{P_star_secvel}
E\left[P\right]
=
\lVert T \rVert_{\mathcal{H}_2}^2 = 
P_{0}
\frac{1}{ 2(k_d + \gamma_d \frac{n}{n-1} ) } 
\end{equation}
for the system $T$ given by \eqref{H_cases_2}, where 
%
$$
P_{0} = \frac{1}{n}
\left(
\sum_{k=2}^{n} \sum_{i=2}^{n} \mu_i
+ \sum_{k=2}^{n} \sum_{l=2}^{n}  \sum_{i=2}^{n} 
e^{\bold{j} \frac{2 \pi}{n} (i-1) (l-k)} \mu_i
\right).
$$

Furthermore, if 
\defaultColor{the output matrix is given by \eqref{C_dav},}
then
\begin{equation} \label{P_star_secpos_dav}
E\left[P_{dav} \right]
=
\lVert T \rVert_{\mathcal{H}_2}^2
= 
\frac{n-1}{ 2 (k_p + \gamma_p \frac{n}{n-1})(k_d + \gamma_d \frac{n}{n-1}) }
\end{equation}
for the system $T$ given by \eqref{H_cases_1} and
\begin{equation} \label{P_star_secvel_dav}
E\left[P_{dav} \right]
=
\lVert T \rVert_{\mathcal{H}_2}^2  
= 
\frac{n-1}{ 2 (k_d + \gamma_d \frac{n}{n-1} ) }
\end{equation}
for the system $T$ given by \eqref{H_cases_2}.
\end{thm}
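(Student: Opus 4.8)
The plan is to exploit the fact that the star Laplacian \eqref{L_star} has all of its nonzero eigenvalues coincident and real, namely $\lambda_i = \tfrac{n}{n-1}$ for $i=2,\dots,n$ by \eqref{jordan_star}, so that Corollary \ref{cor3} applies directly. First I would substitute $\lambda_k = \lambda_l = \tfrac{n}{n-1} =: \lambda$ into \eqref{cor3_psi_pos} and \eqref{cor3_psi_vel}. Because $\lambda_k - \lambda_l = 0$, the term $\gamma_p^2(\lambda_k-\lambda_l)^2$ in $\Psi_{kl}^{denom}$ vanishes and the denominator collapses to $4(k_p + \gamma_p\lambda)(k_d+\gamma_d\lambda)^2$, while the position numerator becomes $2(k_d+\gamma_d\lambda)$ and the velocity numerator becomes $2(k_p+\gamma_p\lambda)(k_d+\gamma_d\lambda)$. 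This yields
\begin{equation} \nonumber
\Psi_{kl} = \nu_{kl}\,\frac{1}{2(k_p+\gamma_p\lambda)(k_d+\gamma_d\lambda)}
\quad\text{and}\quad
\Psi_{kl} = \nu_{kl}\,\frac{1}{2(k_d+\gamma_d\lambda)}
\end{equation}
for \eqref{H_cases_1} and \eqref{H_cases_2}, respectively. The crucial feature is that in each case the scalar multiplying $\nu_{kl}$ is the \emph{same} for every pair $(k,l)$.

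Next I would reduce the metric exactly as in the proof of Theorem \ref{cor7}: using $E[\Sigma_0]=I$ together with the partition \eqref{r_q_star} gives $E[\Sigma_Q] = \tilde Q\tilde Q^* = I_{n-1} + \mathbf{1}\mathbf{1}^\intercal$, hence $E[P] = \sum_{k=2}^{n}\Psi_{kk} + \sum_{k=2}^n\sum_{l=2}^n\Psi_{kl}$ as in \eqref{P_psi_star}. Since the gain-dependent scalar is constant over $(k,l)$, I can factor it out of both sums, leaving the purely geometric quantity $\sum_{k=2}^n\nu_{kk}+\sum_{k=2}^n\sum_{l=2}^n\nu_{kl}$. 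Evaluating $\nu_{kl}$ through \eqref{nu_eta_kappa} and the scalar products \eqref{scalar_star_r_theta} — precisely the computation already carried out in \eqref{perf_exp_star} — shows this geometric quantity equals $P_{0}$. Multiplying by the extracted scalar then produces \eqref{P_star_secpos} and \eqref{P_star_secvel}, and Proposition \ref{Prop_H2L2} identifies $E[P]$ with $\lVert T\rVert_{\mathcal H_2}^2$.

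For the special case \eqref{C_dav} I would note that $C=I-\tfrac{1}{n}\mathbf{1}\mathbf{1}^\intercal$ is an orthogonal projection, so $M=C^*C=C$ is idempotent with $\mu_i=1$ for $i=2,\dots,n$. Substituting $\mu_i=1$ into the definition of $P_{0}$ and invoking the root-of-unity orthogonality $\sum_{i=1}^n e^{\mathbf{j}\frac{2\pi}{n}(i-1)(l-k)} = 0$ for $l\neq k$ — the same cancellation used in Proposition \ref{prop_star} — the diagonal terms contribute $(n-1)^2$ and the off-diagonal terms contribute $-(n-1)(n-2)$, so that $P_{0} = n-1$. This immediately yields \eqref{P_star_secpos_dav} and \eqref{P_star_secvel_dav}.

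The main obstacle is conceptual rather than computational: recognizing that the coincidence of all observable eigenvalues makes the dynamic factor in $\Psi_{kl}$ independent of $(k,l)$, which is what decouples the gain-dependent scalar from the geometric sum and lets me recycle the single-integrator evaluation of $P_{0}$ verbatim. The remaining algebra — the collapse of $\Psi_{kl}^{denom}$ and the root-of-unity cancellation — is then routine given Corollary \ref{cor3} and the work already done for Theorem \ref{cor7}.
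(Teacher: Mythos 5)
Your proposal is correct and follows essentially the same route as the paper: substitute $\lambda_k=\lambda_l=\tfrac{n}{n-1}$ into Corollary \ref{cor3} to get a $(k,l)$-independent dynamic factor, reuse the $\tilde Q\tilde Q^*=I_{n-1}+\mathbf{1}\mathbf{1}^\intercal$ reduction from Theorem \ref{cor7} to identify the geometric sum with $P_0$, and then apply the root-of-unity cancellation of Proposition \ref{prop_star} for the $P_{dav}$ case. The paper's proof is just a terser version of exactly this argument.
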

\begin{proof}
Substitution of $\lambda_k = \frac{n}{n-1}$ for $k = 2, \dots, n$ into \eqref{cor3_psi_pos} and \eqref{cor3_psi_vel} gives
%
$
\Psi_{kl} = \nu_{kl}
\frac{1}{ 2 (k_p + \gamma_p \frac{n}{n-1})(k_d + \gamma_d \frac{n}{n-1}) }
$
for the system $T$ given by \eqref{H_cases_1} and
$
\Psi_{kl} = \nu_{kl}
\frac{1}{ 2 (k_d + \gamma_d \frac{n}{n-1}) }
$
for the system $T$ given by \eqref{H_cases_2}. By the argument given in the proof of Theorem \ref{cor7}, using the expressions above and  \eqref{P_psi_star} leads to \eqref{P_star_secpos} and \eqref{P_star_secvel}. 
The argument given in the proof of Proposition~\ref{prop_star} combined with \eqref{P_star_secpos} and \eqref{P_star_secvel} yields \eqref{P_star_secpos_dav} and \eqref{P_star_secvel_dav}.
\end{proof}
%


When $P_{dav}$ is considered, Proposition \ref{prop_star} and Theorem \ref{cor8} show that the performance metric grows unboundedly with the network size. Next we study 
$\omega$-nearest neighbor networks.

\subsection{Cyclic Digraphs: $\omega$-Nearest Neighbor Networks} 


The cyclic digraph defined by the weighted Laplacian \eqref{cycle_L} can be used to model $\omega$-nearest neighbor networks. In order to normalize the edge weights of the digraphs with different number of communication hops we choose the out-degree of each node as $d = 1$ \defaultColor{in \eqref{cycle_L}}, which leads to
\begin{equation} \label{dist_feed}
L = L^{cyc} \left( 1, \omega \right)
\end{equation}
so that the total out-degree in the graph is $n$.
Since we consider circulant output matrices $C$, the eigenvectors of $M$ in \eqref{matrix_M} are given by
%
%
\eqref{C_circ_evec}. Combining this with \eqref{r_cycle}, the scalar products in \eqref{nu_eta_kappa} are obtained as
\begin{equation} \label{th_r_orth}
\left \langle \boldsymbol{\theta}_{l}, \bold{r}_{k} \right \rangle
=
\begin{cases}
1 & k = l \\
0 & k \neq l
\end{cases},
\ k = 2, \dots, n,
\end{equation}
therefore \eqref{nu_eta_kappa} leads to 
\begin{equation} \label{nu_mu}
\nu_{kk} = \mu_k. 
\end{equation}
This means that the dependence of \eqref{perf_single_normal}, \eqref{perf_x_sol_normal} and \eqref{perf_v_sol_normal} on the output matrix $C$ is only through the eigenvalues $\mu_k$ of $M$.

Then performance is given by  \eqref{perf_single_normal} for the single-integrator system, and by \eqref{perf_x_sol_normal} or \eqref{perf_v_sol_normal} for the double-integrator system, where due to \eqref{jordan_cycle} the eigenvalues of $L$ satisfy
\begin{equation} \label{lambda_cyc}
\lambda_k = 
1 - \frac{1}{\omega} \sum_{i=1}^{\omega}
e^{-\bold{j} \frac{2 \pi}{n} i (k-1)},
\ k=1,\dots,n.
\end{equation}
%

%
Next we present two examples 
to demonstrate the effect of the number of communication hops $\omega$ on the performance of 
$\omega$-nearest neighbor networks
and to investigate the relationship between 
all-to-one and all-to-all communication \defaultColor{structures}.

%
\begin{figure*}[t]
\centering
\includegraphics[scale=0.44, trim={5.2cm 0.2cm 4cm 0.2cm},clip] {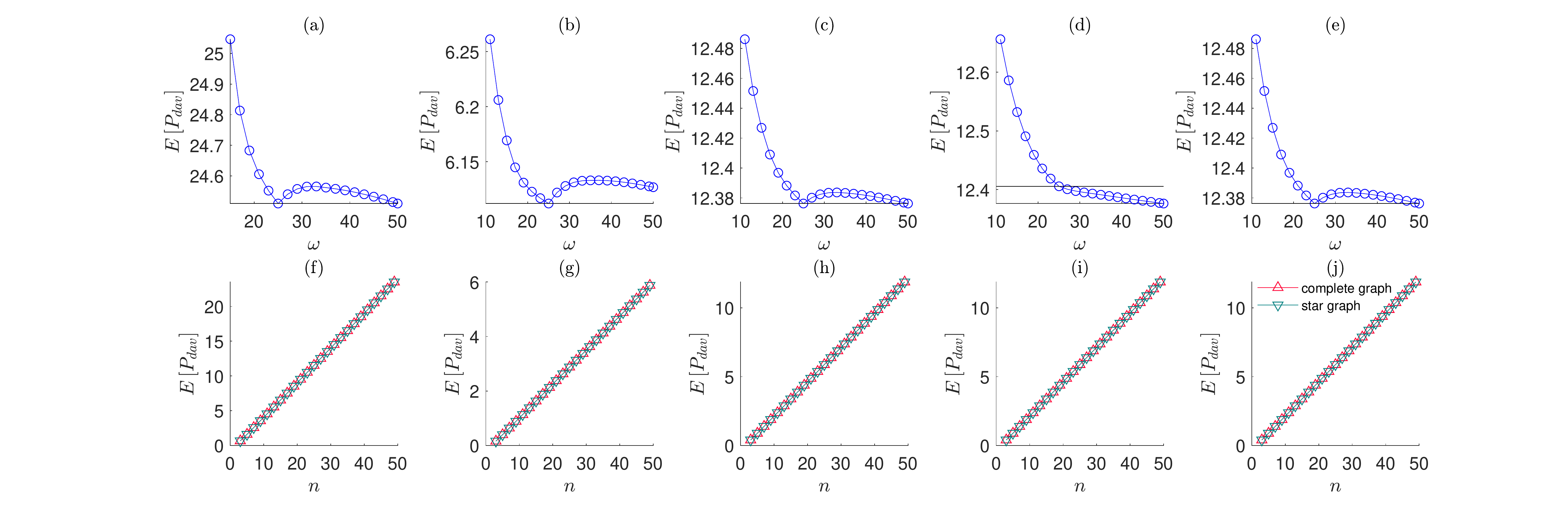}
\vspace{-0.7cm}
\caption{ \textbf{(Top)}
The expectation of $P_{dav}$ defined by \eqref{C_dav} versus the number of communication hops $\omega$ of 
the $\omega$-nearest neighbor networks
given by \eqref{dist_feed} where the network size is $n=51$. 
\textbf{(Bottom)}
The expectation of $P_{dav}$ versus the network size $n$ for the imploding star graph and the complete graph given by  \eqref{L_star} and \eqref{L_dist_comp}.
The disturbance 
\defaultColor{has unit covariance,
i.e. $E[ \Sigma_0 ] = I$}.
Plots respectively illustrate the cases of:
\textbf{(a, f)} single-integrator \eqref{firstmatrix} given by $\eqref{H_cases_1}$, 
\mbox{\textbf{(b, g)} double-integrator \eqref{secondmatrix} given by $\eqref{H_cases_1}$ (position-based performance), $k_p = k_d = \gamma_p = \gamma_d = 1$,}
\textbf{(c, h)} double-integrator \eqref{secondmatrix} given by $\eqref{H_cases_1}$ (position-based performance), $k_p = k_d = \gamma_d = 1, \gamma_p = 0$,
\textbf{(d, i)} double-integrator \eqref{secondmatrix} given by $\eqref{H_cases_2}$ (velocity-based performance), $k_p = k_d = \gamma_p = \gamma_d = 1$,
\mbox{\textbf{(e, j)} double-integrator \eqref{secondmatrix} given by $\eqref{H_cases_2}$ (velocity-based performance), $k_p = k_d = \gamma_d = 1, \gamma_p = 0$.}
}
\label{hops_comp_star}
\vspace{-0.5cm}
\end{figure*}
%

\subsection{\defaultColor{Example: Number of Communication Hops}}

In the following we investigate how performance changes with respect to $\omega$. We first show that performance does not necessarily improve by increasing $\omega$, i.e. 
through communication with a larger 
\defaultColor{number
of nearest neighbors.}

For convenience suppose that $n$ is odd. Consider the case where $\omega = \frac{n-1}{2}$ such that $L=L^{cyc} (1,\frac{n-1}{2})$. Using the definition given by \eqref{herm_part_L}
\begin{equation} \label{L_dist_comp}
L^{\prime} = \frac{ L^{cyc} (1,\frac{n-1}{2}) + L^{cyc} (1,\frac{n-1}{2})^* } {2}
= L^{cyc} (1,n-1),
\end{equation}
i.e. $L^{\prime}$ is the weighted Laplacian associated with the complete graph with uniform edge weights $\frac{1}{n-1}$. Then the associated systems $T$ and $T^\prime$ have the following properties for any performance metric satisfying Assumption \ref{assum1}:
\begin{itemize}[leftmargin=*]
\item $\lVert T \rVert_{\mathcal{H}_2}^2 = \lVert T^{\prime} \rVert_{\mathcal{H}_2}^2$ for the single-integrator network \eqref{firstmatrix} defined by $\eqref{H_cases_1}$ due to Theorem \ref{cor_single_normal},
\item It is possible due to Theorem \ref{thm_gp} that 
$\lVert T \rVert_{\mathcal{H}_2}^2 <  \lVert T^{\prime} \rVert_{\mathcal{H}_2}^2$,
$\lVert T \rVert_{\mathcal{H}_2}^2 = \lVert T^{\prime} \rVert_{\mathcal{H}_2}^2$ or
$\lVert T \rVert_{\mathcal{H}_2}^2 > \lVert T^{\prime} \rVert_{\mathcal{H}_2}^2$
for the position based performance of the double-integrator network \eqref{secondmatrix} defined by system \eqref{H_cases_1},
\item It can only hold that 
$\lVert T \rVert_{\mathcal{H}_2}^2 = \lVert T^{\prime} \rVert_{\mathcal{H}_2}^2$ or
$\lVert T \rVert_{\mathcal{H}_2}^2 > \lVert T^{\prime} \rVert_{\mathcal{H}_2}^2$
for the velocity based performance of the double-integrator network \eqref{secondmatrix} defined by system \eqref{H_cases_2} due to Theorem \ref{thm_vel}.
\end{itemize}
%

As this example suggests, using half the number of communication hops as compared to the complete graph, i.e. the case in which 
$\omega$ is maximal,
provides identical performance for the single integrator network \eqref{firstmatrix}. It is possible to achieve better performance \defaultColor{using half the number of hops compared to the complete graph} in the case of the position based metrics of the double integrator network \eqref{secondmatrix}; 
but this is not the case for the velocity based metrics.

The dependence of $E\left[P_{dav}\right]$ on $\omega$ is illustrated in figures \mbox{\ref{hops_comp_star}a - \ref{hops_comp_star}e} for a case in which $n=51$ and the disturbance 
\defaultColor{has unit covariance, 
i.e. $E[\Sigma_0] = I$}. For the single integrator network \eqref{firstmatrix} we observe in Figure \ref{hops_comp_star}a that $\lVert T \rVert_{\mathcal{H}_2}^2 = \lVert T^{\prime} \rVert_{\mathcal{H}_2}^2$. This is also true for the position and velocity based performance of the double-integrator network \eqref{secondmatrix} if relative position feedback is absent ($k_p=k_d=\gamma_d=1$ and $\gamma_p = 0$) as shown in figures \ref{hops_comp_star}c (due to Item \ref{thm_gp_1} in Theorem \ref{thm_gp}) and \ref{hops_comp_star}e (due to Item \ref{thm_vel_2} in Theorem \ref{thm_vel}). Conversely, using relative position feedback ($k_p=k_d=\gamma_p=\gamma_d = 1$) leads to $\lVert T \rVert_{\mathcal{H}_2}^2 < \lVert T^{\prime} \rVert_{\mathcal{H}_2}^2$ as shown in Figure \ref{hops_comp_star}b (due to Item \ref{thm_gp_3} in Theorem \ref{thm_gp}) for the position based performance and to $\lVert T \rVert_{\mathcal{H}_2}^2 > \lVert T^{\prime} \rVert_{\mathcal{H}_2}^2$ as shown in Figure \ref{hops_comp_star}d (due to Item \ref{thm_vel_3} in Theorem \ref{thm_vel}) for the velocity based performance. For all cases, increasing $\omega$ up to $\omega = 25$ monotonically improves performance. Compared to $\omega = 25$, choosing $25 < \omega < 50$ degrades performance, excluding the velocity based performance with relative position feedback ($\gamma_p \neq 0$, Figure \ref{hops_comp_star}d) which improves monotonically as $\omega$ is increased. Therefore at least for $n=51$ and the cases in figures \ref{hops_comp_star}a-\ref{hops_comp_star}c and \ref{hops_comp_star}e, $\omega = \frac{n-1}{2}$ provides the optimal performance.

The next example provides a comparison between 
\mbox{all-to-one}
and 
all-to-all networks.

\subsection{\defaultColor{Example: \mbox{All-to-One} versus \mbox{All-to-All} Networks}}

For the special case of $P_{dav}$ which is determined by \eqref{C_dav}, \eqref{nu_mu} holds and we have $\mu_k =1$ for $k=2,\dots,n$. 
If 
\mbox{all-to-all} networks are considered, 
i.e. $L$ is given by  \eqref{L_dist_comp}, \eqref{lambda_cyc} reduces to $\lambda_k = \frac{n}{n-1}$ for $k=2,\dots,n$. Then $P_{dav}$ is given by 
\begin{itemize} [leftmargin=*]
\item \eqref{P_star_dav} for the single-integrator network \eqref{firstmatrix} given by \eqref{H_cases_1},
\item \eqref{P_star_secpos_dav} for the double-integrator network \eqref{secondmatrix} given by \eqref{H_cases_1},
\item \eqref{P_star_secvel_dav} for the double-integrator network \eqref{secondmatrix} given by \eqref{H_cases_2},
\end{itemize} 
where we respectively used \eqref{perf_single_normal}, \eqref{P_normal_x_noImag} and \eqref{P_normal_v_noImag}.
Therefore, 
$\omega$-nearest neighbor networks
with $\omega = n-1$ (all-to-all) and 
\mbox{all-to-one} networks
perform \emph{identically} if $P_{dav}$ is considered, which is illustrated in figures \ref{hops_comp_star}f - \ref{hops_comp_star}j for up to $n=49$. In conclusion, given that the total out-degree is normalized to be $n$ for each graph, the same $P_{dav}$ is achieved by using $n-1$ directed edges that follow a common leader 
\defaultColor{as that of}
using $n(n-1)$ directed edges such that each node follows every other node. The latter feedback scheme can be interpreted as every node being a common leader in the sense of the former feedback scheme. In other words, 
the \mbox{all-to-all} network
can be interpreted as the superposition of $n$ 
\mbox{all-to-one} networks
with edge weights scaled by $\frac{1}{n}$. Thus the same level of deviation from the average state (position or velocity) is achieved by following a single common leader instead of using all-to-all communication, provided the edge weights are sufficiently large. As $n$ grows, the number of edges grow linearly and each edge weight $\frac{n}{n-1}$ remains bounded in 
\mbox{all-to-one} networks.
In contrast, the number of edges grow quadratically and each edge weight $\frac{1}{n-1}$ decays to zero in 
\mbox{all-to-all} networks.
We note for double-integrator networks \eqref{secondmatrix} given by \eqref{H_cases_1} that
compared to both 
\mbox{all-to-one} and \mbox{all-to-all} 
communication,
it is possible to achieve better position-based $P_{dav}$
with
$\omega = \frac{n-1}{2}$ nearest neighbor interactions (odd $n$),
if both relative position and velocity feedback are employed and 
the relative position feedback
gain $\gamma_p$ is sufficiently small (e.g. Figure \ref{hops_comp_star}b).

\section{Conclusions} \label{conclusion}
We studied the performance of single and double integrator networks over arbitrary digraphs that have at least one globally reachable node. Using a unifying framework, closed-form solutions are provided for a general class of output $\mathcal{L}_2$ norm based \defaultColor{quadratic} performance metrics.
The special case of normal weighted Laplacian matrices reveals the importance of judicious control design  for mitigating any performance degradation in directed networks, and possibly improving upon their \defaultColor{undirected} counterparts.
In addition, we have demonstrated that performance is sensitive to the degree of connectivity (e.g. range of communication in $\omega$-nearest neighbor networks), but it does not depend on it monotonically. This non-monotonicity can also be deduced from
the equivalence between 
\mbox{all-to-one} and \mbox{all-to-all} networks.
That is, the same level of state deviation from the average is achieved using either \defaultColor{network architecture}.


%

\appendix

\subsection{Lemmas from Subsection \ref{PerfArbDouble}}


\begin{lemma} \label{lemma3}
Consider the transfer function
\begin{equation} \label{phi_TF}
\Omega_{k,\delta}(s) = \frac
{ r(s) {\left( \gamma_p + s \gamma_d \right)}^{\delta-1}}
{{\left[ s^2 + (k_d + \gamma_d \lambda_k) s + k_p + \gamma_p  \lambda_k \right] }^\delta}
\end{equation}
for some $\delta \in \mathbb{Z}_{+}$. Suppose that $s^2 + (k_d + \gamma_d \lambda_k) s + k_p + \gamma_p  \lambda_k = 0$ has distinct roots $\rho_1^{(k)}$ and $\rho_2^{(k)}$, i.e. $\rho_1^{(k)} \neq \rho_2^{(k)}$. Then, $\Omega_{k,\delta}(s)$ has a realization $\left( \mathcal{A}_{k,\delta}, \mathcal{B}_{k,\delta}, \mathcal{C}_{k,\delta} \right)$ in Jordan canonical form given by
\begin{equation} \label{Jordan_realization}
\mathcal{A}_{k,\delta} =
%
%
%
\bdiag{ \left(
\mathcal{J} (\rho_i^{(k)}, \delta)
\right)_{i=1,2} }, 
\end{equation}
$\mathcal{B}_{k,\delta} = \bigg[ \underbrace{\begin{matrix} 0 & \dots & 1 \end{matrix}}_{1 \times \delta} \underbrace{\begin{matrix} 0 & \dots & 1 \end{matrix}}_{1 \times \delta} \bigg]^\intercal$, 
$\mathcal{C}_{k,\delta} = \begin{bmatrix} c_{1}^{(k)} & \dots & c_{2\delta}^{(k)} \end{bmatrix}$,
where $\mathcal{J} (\rho_1^{(k)}, \delta)$ denotes the size-$\delta$ Jordan block with the eigenvalue $\rho_1^{(k)}$. 

If $r (s) = 1$, i.e. we consider system $T$ given by \eqref{H_cases_1}, the elements of $\mathcal{C}_{k,\delta}$ are given by
%
\begin{align} 
c_{l}^{(k)} &= \sum_{\zeta=0}^{l-1} \tau(\zeta,l) {\gamma_d}^{\zeta} \frac{{(\gamma_p + \rho_1^{(k)} \gamma_d)}^{\delta-\zeta-1}}{{(\rho_1^{(k)} - \rho_2^{(k)})}^{\delta+l-\zeta-1}}, \nonumber \\
c_{l+\delta}^{(k)} &= \sum_{\zeta=0}^{l-1} \tau(\zeta,l) {\gamma_d}^{\zeta} \frac{{(\gamma_p + \rho_2^{(k)} \gamma_d)}^{\delta-\zeta-1}}{{(\rho_2^{(k)} - \rho_1^{(k)})}^{\delta+l-\zeta-1}}, \nonumber
\end{align}
if $r (s) = s$, i.e. we consider system $T$ given by \eqref{H_cases_2}, the elements of $\mathcal{C}_{k,\delta}$ are given by
\footnotesize
\begin{align} 
c_{l}^{(k)} &= 
\sum_{\zeta=0}^{l-1} \tau(\zeta,l) {\gamma_d}^{\zeta-1} 
\left( 
\frac{ \zeta \gamma_p + \delta \rho_1^{(k)} \gamma_d } {\delta - \zeta}
\right)
\frac{{(\gamma_p + \rho_1^{(k)} \gamma_d)}^{\delta-\zeta-1}}{{(\rho_1^{(k)} - \rho_2^{(k)})}^{\delta+l-\zeta-1}},
\nonumber \\
c_{l+\delta}^{(k)} &= 
\sum_{\zeta=0}^{l-1} \tau(\zeta,l) {\gamma_d}^{\zeta-1} 
\left( 
\frac{ \zeta \gamma_p + \delta \rho_2^{(k)} \gamma_d } {\delta - \zeta}
\right)
\frac{{(\gamma_p + \rho_2^{(k)} \gamma_d)}^{\delta-\zeta-1}}{{(\rho_2^{(k)} - \rho_1^{(k)})}^{\delta+l-\zeta-1}},
\nonumber
\end{align}
\normalsize
%
%
%
%
for $l=1,\dots,\delta$, where $\tau(\zeta,l) = {(-1)}^{l-\zeta-1} \binom{l-1}{\zeta}
\binom{\delta+l-\zeta-2}{l-1}
$.
\begin{proof}
Using the fact that the denominator of $\Omega_{k,\delta}(s)$ has distinct roots 
\begin{equation} \nonumber
\Omega_{k,\delta}(s) =  
\frac{\Gamma (s)}{{(s - \rho_1^{(k)})}^{\delta} {(s - \rho_2^{(k)})}^{\delta}},
\end{equation}
where $\Gamma (s) = r(s) {\left( \gamma_p + s \gamma_d \right)}^{\delta-1}$. Applying partial fractions, we have
\begin{equation} \label{par_frac}
\Omega_{k,\delta}(s) = \sum_{l=1}^{\delta} 
\frac{c_{l}^{(k)}}{{(s - \rho_1^{(k)})}^{\delta-l+1}} +
\frac{c_{l+\delta}^{(k)}}{{(s - \rho_2^{(k)})}^{\delta-l+1}},
\end{equation}
which can be represented by the Jordan canonical realization \eqref{Jordan_realization}.
Here the coefficients $c_{l}^{(k)}$ and $c_{l+\delta}^{(k)}$ are given by
\begin{align} 
c_{l}^{(k)} &= \frac{1}{(l-1)!}
\lim_{s \to  \rho_1^{(k)}} \frac{d^{l-1}}{ds^{l-1}} \left[ {(s - \rho_1^{(k)})}^{\delta} \Omega_{k,\delta}(s) \right], \label{c_l} \\
c_{l+\delta}^{(k)} &= \frac{1}{(l-1)!}
\lim_{s \to  \rho_2^{(k)}} \frac{d^{l-1}}{ds^{l-1}} \left[ {(s - \rho_2^{(k)})}^{\delta} \Omega_{k,\delta}(s) \right]. \label{c_l_beta}
\end{align}

The general Leibniz rule for the derivative of product 
yields
%
\begin{align} 
\! \! \! \! \! \!
c_{l}^{(k)} \! &= \! \! \lim_{s \to  \rho_1^{(k)}} \!
\sum_{\zeta = 0}^{l-1} 
\frac{ \binom{l-1}{\zeta} } {(l-1)!}
\frac{d^{\zeta} \Gamma (s) }{ds^{\zeta}} 
\frac{d^{l-1-\zeta}}{ds^{l-1-\zeta}} \! \!
\left[ {(s - \rho_2^{(k)})}^{-\delta} \right] \! \!, \! \!
\label{c_l_2} \\
\! \! \! \!
c_{l+\delta}^{(k)} \! &= \! \! \lim_{s \to  \rho_2^{(k)}} \!
\sum_{\zeta = 0}^{l-1} 
\frac{ \binom{l-1}{\zeta} } {(l-1)!}
\frac{d^{\zeta}}{ds^{\zeta}} \Gamma (s)
\frac{d^{l-1-\zeta}}{ds^{l-1-\zeta}} \left[ {(s - \rho_1^{(k)})}^{-\delta} \right] \! \!, \!
\label{c_l_beta_2}
\end{align}
%
For the cases of $r(s) = 1$ or $ r(s) = s $, a direct calculation shows that
\begin{equation}
\!
\frac{d^{\zeta}}{ds^{\zeta}} \left[ {(\gamma_p + s \gamma_d)}^{\delta-1} \right] 
\! \! = \!
{\gamma_d}^{\zeta} \frac{(\delta-1)!}{(\delta-\zeta-1)!} 
{(\gamma_p + s \gamma_d)}^{\delta-\zeta-1} \! , \! \! \! \!
\label{deriv1}
\end{equation}
\begin{align}
\! \! \! \! \! \! \! \! \! \!
\frac{d^{\zeta}}{ds^{\zeta}} \left[ {s(\gamma_p + s \gamma_d)}^{\delta-1} \right] 
\! \! &= \!
s {\gamma_d}^{\zeta} \frac{(\delta-1)!}{(\delta-\zeta-1)!} 
{(\gamma_p + s \gamma_d)}^{\delta-\zeta-1} \nonumber \\
&+ {\gamma_d}^{\zeta - 1} \zeta \frac{(\delta-1)!}{(\delta-\zeta)!} 
{(\gamma_p + s \gamma_d)}^{\delta-\zeta} \! , \! \!
\label{deriv2}
\end{align}
\scriptsize
\begin{equation}
\frac{d^{l-1-\zeta}}{ds^{l-1-\zeta}} \left[ {(s - \rho_2^{(k)})}^{-\delta} \right] =
{(-1)}^{l-1-\zeta} \frac{(\delta+l-\zeta-2)!}{(\delta-1)!} 
{(s - \rho_2^{(k)})}^{-\delta-l+\zeta+1}. \label{deriv3}
\end{equation}
\normalsize
Substituting \eqref{deriv1}, \eqref{deriv2} and \eqref{deriv3} into \eqref{c_l_2} and taking the limit gives the desired result. A similar procedure can be followed to evaluate the expression in \eqref{c_l_beta_2}.
%
%
\end{proof}
\end{lemma}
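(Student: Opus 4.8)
The plan is to pass from the rational transfer function $\Omega_{k,\delta}(s)$ to a state-space realization through a partial fraction expansion, which exposes the pole structure directly in Jordan form. First I would use the hypothesis $\rho_1^{(k)} \neq \rho_2^{(k)}$ to factor the denominator as $(s-\rho_1^{(k)})^{\delta}(s-\rho_2^{(k)})^{\delta}$, writing $\Omega_{k,\delta}(s) = \Gamma(s)/[(s-\rho_1^{(k)})^{\delta}(s-\rho_2^{(k)})^{\delta}]$ with numerator $\Gamma(s) = r(s)(\gamma_p + s\gamma_d)^{\delta-1}$. Since the numerator has degree strictly less than $2\delta$ for both $r(s)=1$ and $r(s)=s$, $\Omega_{k,\delta}$ is strictly proper and admits a partial fraction decomposition into terms $c_l^{(k)}/(s-\rho_1^{(k)})^{\delta-l+1}$ and $c_{l+\delta}^{(k)}/(s-\rho_2^{(k)})^{\delta-l+1}$ for $l=1,\dots,\delta$.

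The second step is to recognize that this expansion is exactly the transfer function of a realization in Jordan canonical form: each cluster of repeated poles at $\rho_i^{(k)}$ of multiplicity $\delta$ is generated by one size-$\delta$ Jordan block $\mathcal{J}(\rho_i^{(k)},\delta)$, the input vector $\mathcal{B}_{k,\delta}$ selecting the last coordinate of each block and the output vector $\mathcal{C}_{k,\delta}$ collecting the residues as its entries. Strict properness ensures there is no direct feedthrough term. This yields $\mathcal{A}_{k,\delta},\mathcal{B}_{k,\delta},\mathcal{C}_{k,\delta}$ with the stated structure and reduces the lemma to computing the residues in closed form.

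The remaining, and most laborious, step is to evaluate the residues via the standard formula $c_l^{(k)} = \frac{1}{(l-1)!}\lim_{s\to\rho_1^{(k)}}\frac{d^{l-1}}{ds^{l-1}}[(s-\rho_1^{(k)})^{\delta}\Omega_{k,\delta}(s)]$, together with the analogous expression at $\rho_2^{(k)}$. The key tool is the general Leibniz rule applied to the product $\Gamma(s)\,(s-\rho_2^{(k)})^{-\delta}$, which expands $c_l^{(k)}$ into a sum over $\zeta$ of $\binom{l-1}{\zeta}/(l-1)!$ times $\frac{d^{\zeta}}{ds^{\zeta}}\Gamma(s)$ times $\frac{d^{l-1-\zeta}}{ds^{l-1-\zeta}}(s-\rho_2^{(k)})^{-\delta}$. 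I would then compute the three elementary derivatives separately: the $\zeta$-th derivative of $(\gamma_p+s\gamma_d)^{\delta-1}$ (handling $r(s)=1$), the $\zeta$-th derivative of $s(\gamma_p+s\gamma_d)^{\delta-1}$ by one further application of the product rule (handling $r(s)=s$ and producing the extra term responsible for the $\zeta\gamma_p+\delta\rho_i^{(k)}\gamma_d$ factor), and the $(l-1-\zeta)$-th derivative of $(s-\rho_2^{(k)})^{-\delta}$.

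Finally, I would substitute these derivatives, take the limit $s\to\rho_1^{(k)}$, and absorb the accumulated factorial ratios and the sign $(-1)^{l-1-\zeta}$ into the single coefficient $\tau(\zeta,l) = (-1)^{l-\zeta-1}\binom{l-1}{\zeta}\binom{\delta+l-\zeta-2}{l-1}$, recovering the stated formulas; the expression for $c_{l+\delta}^{(k)}$ then follows verbatim by interchanging the roles of $\rho_1^{(k)}$ and $\rho_2^{(k)}$. I expect the main obstacle to be purely combinatorial bookkeeping: tracking the factorials coming from $\frac{d^{\zeta}}{ds^{\zeta}}(\gamma_p+s\gamma_d)^{\delta-1}$ and from $\frac{d^{l-1-\zeta}}{ds^{l-1-\zeta}}(s-\rho_2^{(k)})^{-\delta}$ so that they collapse neatly into the two binomial coefficients defining $\tau(\zeta,l)$, and ensuring that in the $r(s)=s$ case the two contributions from the product rule are correctly merged over a common denominator $\delta-\zeta$.
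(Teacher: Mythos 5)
Your proposal is correct and follows essentially the same route as the paper's proof: partial fraction expansion of $\Omega_{k,\delta}(s)$ over the two distinct repeated poles, identification of the resulting Jordan canonical realization, residue computation via the $(l-1)$-th derivative limit formula, application of the general Leibniz rule to $\Gamma(s)(s-\rho_2^{(k)})^{-\delta}$, and evaluation of the same three elementary derivatives before collecting the factorial factors into $\tau(\zeta,l)$. No gaps.
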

%

\begin{lemma} \label{lemma4}
Consider the transfer function
$\Omega_{k,\delta}(s)$ in \eqref{phi_TF}
for some $\delta \in \mathbb{Z}_{+}$. Suppose that $s^2 + (k_d + \gamma_d \lambda_k) s + k_p + \gamma_p  \lambda_k = 0$ has repeated roots $\rho_1^{(k)}$ and $\rho_2^{(k)}$, i.e. $\rho_1^{(k)} = \rho_2^{(k)} = \rho^{(k)}$. Then, $\Omega_{k,\delta}(s)$ has a realization $\left( \mathcal{A}_{k,\delta}, \mathcal{B}_{k,\delta}, \mathcal{C}_{k,\delta} \right)$ in Jordan canonical form given by
\begin{equation} \label{Jordan_realization_2}
\mathcal{A}_{k,\delta} =
\mathcal{J} (\rho^{(k)}, 2\delta),
\end{equation}
\begin{equation} \nonumber
\mathcal{B}_{k,\delta} = \bigg[ \underbrace{\begin{matrix} 0 & \dots & 1 \end{matrix}}_{1 \times 2\delta} \bigg]^\intercal, 
\mathcal{C}_{k,\delta} = \begin{bmatrix} c_{1}^{(k)} & \dots & c_{2\delta}^{(k)} \end{bmatrix}.
\end{equation}

If $r(s) = 1$, i.e. we consider system $T$ given by \eqref{H_cases_1}, the elements of $\mathcal{C}_{k,\delta}$ are given by
\begin{equation} \nonumber
c_{l}^{(k)} = 
\begin{cases} 
{\gamma_d}^{l-1} 
\binom{\delta-1}{l-1}
{(\gamma_p + \rho^{(k)} \gamma_d)}^{\delta-l}, & 1 \leq l \leq \delta \\
\hfil 0 , & \delta+1 \leq l \leq 2\delta
\end{cases},
\end{equation}
if $ r(s) = s $, i.e. we consider system $T$ given by \eqref{H_cases_2}, the elements of $\mathcal{C}_{k,\delta}$ are given by
\footnotesize
\begin{equation} \nonumber
c_{l}^{(k)} \! = \! 
\begin{cases} 
\!
\left[ \frac{ (l-1)\gamma_p + \delta \rho^{(k)} \gamma_d } {\delta - l + 1} \right] \!
{\gamma_d}^{l-2}
\binom{\delta-1}{l-1}
{(\gamma_p + \rho^{(k)} \gamma_d)}^{\delta-l}, & 1 \leq l \leq \delta \\
\hfil {\gamma_d}^{\delta - 1}, & l = \delta+1 \\
\hfil 0, & \delta+2 \leq l \leq 2\delta
\end{cases}.
\end{equation}
\normalsize
\begin{proof}
%
Using the fact that $\Omega_{k,\delta}(s)$ has repeated roots leads to 
\begin{equation} \nonumber
\Omega_{k,\delta}(s) =  
\frac{r(s) {\left( \gamma_p + s \gamma_d \right)}^{\delta-1}} 
{{(s - \rho^{(k)})}^{2\delta}}.
\end{equation}
Applying partial fractions, we have
\begin{equation} \nonumber
\Omega_{k,\delta}(s) = \sum_{l=1}^{2\delta} 
\frac{c_{l}^{(k)}}{{(s - \rho^{(k)})}^{2\delta-l+1}},
\end{equation}
which can be represented by the Jordan canonical realization \eqref{Jordan_realization_2}.
Here the coefficients $c_{l}^{(k)}$ are given by
\begin{equation} 
c_{l}^{(k)} = 
\frac{1}{(l-1)!}
\lim_{s \to  \rho^{(k)}} \frac{d^{l-1}}{ds^{l-1}} 
\left[
r(s) {\left( \gamma_p + s \gamma_d \right)}^{\delta-1} 
\right]. \label{c_l_2_rep} 
\end{equation}
For the cases of $r(s) = 1$ or $r(s) = s$, using respectively \eqref{deriv1} and \eqref{deriv2} and taking the limit in \eqref{c_l_2_rep} gives the desired result. 
%
\end{proof}
\end{lemma}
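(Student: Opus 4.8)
The plan is to use the double-root structure to collapse the denominator of $\Omega_{k,\delta}(s)$ into a single high-order pole, and then read off the partial-fraction coefficients as Taylor coefficients of the numerator. First I would note that when the quadratic in \eqref{phi_TF} has a repeated root $\rho^{(k)}$, its denominator factors as $(s-\rho^{(k)})^{2\delta}$, so that $\Omega_{k,\delta}(s) = r(s)(\gamma_p+s\gamma_d)^{\delta-1}/(s-\rho^{(k)})^{2\delta}$. Since the numerator $N(s) := r(s)(\gamma_p + s\gamma_d)^{\delta-1}$ has degree strictly below $2\delta$ (degree $\delta-1$ for $r(s)=1$ and degree $\delta$ for $r(s)=s$), the function is strictly proper with a single pole of multiplicity $2\delta$, and hence expands as $\Omega_{k,\delta}(s) = \sum_{l=1}^{2\delta} c_l^{(k)}/(s-\rho^{(k)})^{2\delta-l+1}$. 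This expansion is exactly the transfer function of the single size-$2\delta$ Jordan block $\mathcal{J}(\rho^{(k)},2\delta)$ with $\mathcal{B}_{k,\delta}$ selecting the last coordinate and $\mathcal{C}_{k,\delta}$ collecting the coefficients, which establishes \eqref{Jordan_realization_2} and reduces the lemma to computing the $c_l^{(k)}$.

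Next I would identify these coefficients. Matching principal parts in the Taylor expansion of $N$ about $\rho^{(k)}$ shows that the coefficient attached to $(s-\rho^{(k)})^{2\delta-l+1}$ is the standard residue $c_l^{(k)} = \frac{1}{(l-1)!}\,\frac{d^{l-1}}{ds^{l-1}}N(s)\big|_{s=\rho^{(k)}}$. It then remains only to differentiate $N$, for which I would reuse the closed forms \eqref{deriv1} and \eqref{deriv2} already established in Lemma~\ref{lemma3}. For $r(s)=1$, substituting \eqref{deriv1} with $\zeta=l-1$ and evaluating at $\rho^{(k)}$ yields $\gamma_d^{l-1}\binom{\delta-1}{l-1}(\gamma_p+\rho^{(k)}\gamma_d)^{\delta-l}$; because $N$ then has degree $\delta-1$, all derivatives of order $\ge\delta$ vanish, forcing $c_l^{(k)}=0$ for $\delta+1\le l\le 2\delta$, as claimed.

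For $r(s)=s$, substituting the product-rule expression \eqref{deriv2} with $\zeta=l-1$ produces two terms, and the main obstacle I anticipate is the purely algebraic bookkeeping of combining them. Placing both over the common denominator $\delta-l+1$ and rewriting the factorials as binomial coefficients should collapse the prefactor to $\big[(l-1)\gamma_p+\delta\rho^{(k)}\gamma_d\big]/(\delta-l+1)$, multiplied by $\gamma_d^{l-2}\binom{\delta-1}{l-1}(\gamma_p+\rho^{(k)}\gamma_d)^{\delta-l}$, valid for $1\le l\le\delta$. Finally I would treat the boundary index $l=\delta+1$ on its own: here $N(s)=s(\gamma_p+s\gamma_d)^{\delta-1}$ is a degree-$\delta$ polynomial whose $\delta$-th derivative equals $\delta!$ times its leading coefficient $\gamma_d^{\delta-1}$, so $c_{\delta+1}^{(k)}=\gamma_d^{\delta-1}$, while all higher derivatives vanish and give $c_l^{(k)}=0$ for $\delta+2\le l\le 2\delta$.
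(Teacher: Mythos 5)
Your proposal is correct and follows essentially the same route as the paper's proof: factor the denominator as $(s-\rho^{(k)})^{2\delta}$, expand in partial fractions corresponding to the single size-$2\delta$ Jordan block, identify $c_l^{(k)}$ via the residue formula \eqref{c_l_2_rep}, and evaluate using the derivative identities \eqref{deriv1} and \eqref{deriv2}. Your explicit handling of the boundary indices $l=\delta+1$ and $l\ge\delta+2$ (via the degree of the numerator) is a welcome bit of detail the paper leaves implicit, but it is the same argument.
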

%

%

%

%

%
%


%
%

%
%
%





\ifCLASSOPTIONcaptionsoff
  \newpage
\fi



\bibliographystyle{IEEEtran}
\bibliography{IEEEabrv,GirayBib}

\end{document}